\newcommand{\ignore}[1]{}
\newcommand{\Hil}{\mathcal{H}}
\newcommand{\Kil}{\mathcal{K}}
\newcommand{\Vint}{\mathtt{V}}
\newcommand{\Wint}{\mathtt{W}}
\newcommand{\dom}{\textup{dom}}
\newcommand{\Acal}{\mathcal{A}}
\newcommand{\Bcal}{\mathcal{B}}
\newcommand{\N}{\mathbb{N}}
\newcommand{\R}{\mathbb{R}}
\newcommand{\C}{\mathbb{C}}
\newcommand{\im}{\textup{Im}}
\newcommand{\re}{\textup{Re}}
\newcommand{\Uni}{\mathcal{U}}
\newcommand{\FT}{\mathcal{F}}
\newcommand{\PV}{\mathcal{P}}
\newtheorem{theorem}{Theorem}[section]
\newtheorem{lemma}[theorem]{Lemma}
\newtheorem{proposition}[theorem]{Proposition}
\newtheorem{corollary}[theorem]{Corollary}
\theoremstyle{definition}
\newtheorem{definition}[theorem]{Definition}
\newtheorem{example}[theorem]{Example}
\theoremstyle{remark}
\newtheorem{remark}[theorem]{Remark}
\newtheorem*{remark*}{Remark}
\title{Relative Positions of Half-sided Modular Inclusions}
\author{Ian Koot \\ Friedrich-Alexander-Universität Erlangen-Nürnberg, Germany \\ Email: \href{mailto:ian.koot@fau.de}{ian.koot@fau.de}}
\date{March 23, 2025}
\begin{document}

\maketitle

\begin{abstract}
Let $K_1 \subset H$ and $K_2 \subset H$ be half-sided modular inclusions in a common standard subspace $H$. We prove that the inclusion $K_1 \subset K_2$ holds if and only if we have an inclusion of spectral subspaces of the generators of the positive one-parameter groups associated to the half-sided modular inclusions $K_1 \subset H$ and $K_2 \subset H$. From this we give a characterization of this situation in terms of (operator valued) symmetric inner functions. We illustrate these characterizations with some examples of non-trivial phenomena occurring in this setting.
\end{abstract}

\section{Introduction}
Tomita-Takesaki modular theory has become an important tool in the analysis of quantum field theory: it has found application through its close relation to KMS states \cite[Ch. 13]{Takesaki1970}, to separation properties of field theories \cite{Buchholz1990a}, in the context of the Bisognano-Wichmann theorem \cite{BRUNETTI2002} and to black hole physics \cite{Witten2022}, among others (see for example \cite{Borchers2000} for a review of some applications). In the context of modular theory and quantum field theory, a situation of prominent interest is the half-sided modular inclusion. An inclusion of von Neumann algebras $\Acal \subset \Bcal$ is half-sided modular if they have a common standard vector $\Omega$ and satisfy 
	\[ \Delta_{\Acal,\Omega}^{-it} \Bcal \Delta_{\Acal,\Omega}^{it} \subset \Bcal \quad \text{for } t \geq 0. \]
(here $t \mapsto \Delta_{\Acal,\Omega}^{it}$ is the modular group of $\Acal$ with respect to $\Omega$, see Example \ref{ex:VNA}). Physically, such inclusions appear for example in the context of light cones in thermal states \cite{Borchers1999}, light cones in vacuum states for massless theories \cite{Buchholz1978}, subregions of Rindler wedges \cite{Borchers1996, Morinelli_2022}, or spacetimes with a horizon \cite{Jefferson2019, Leutheusser2023}. In addition, half-sided modular inclusions have turned out to be instrumental in the construction and analysis of chiral conformal field theories \cite{Longo2010,Wiesbrock1993a, Guido1998} and establishment of energy inequalities like the QNEC \cite{Ceyhan2020, Morinelli_2022, Hollands2025}.

In the above-mentioned examples, modular theory is applied to a von Neumann algebra $\Acal$ and standard vector $\Omega$. The most general setting in which modular theory can be applied, however, is the setting of standard subspaces: closed, real subspaces $H$ inside a complex Hilbert space $\Hil$ such that $H \cap iH = \{0\}$ and $\overline{H  + iH} = \Hil$. The standard subspace $H_{\Acal, \Omega} := \overline{\Acal_{sa}\Omega}$ associated to $\Acal$ and $\Omega$ is then just one particular example. By focussing on those properties essential to modular theory, results in the theory often become clearer when they are formulated in terms of standard subspaces. On top of this, standard subspaces that do not necessarily come from a von Neumann algebra have been applied to the construction of QFT models \cite{Lechner2014, CorreadaSilva2023} and in the investigation into entropy of coherent excitations \cite{Ciolli2022,Longo2024}.

As inclusions of operator algebras play an important role in (algebraic) quantum field theory, there is also a strong interest in inclusions of standard subspaces and their relation to modular theory and representation theory (e.g. \cite{BRUNETTI2002, Neeb2021, Ciolli2022, Adamo2024}). In many cases where half-sided modular inclusions appear, the structure of inclusions takes the form of many smaller algebras (or standard subspaces) that lie half-sidedly in some fixed larger algebra corresponding to a (local) environment region; examples of this include light cones in the quasi-local algebra \cite{Borchers1999} or null cuts inside a Rindler wedge \cite{Morinelli_2022}. Although the representation theory of a single half-sided modular inclusion is relatively simple (due to the Stone-von Neumann uniqueness theorem, see Theorem \ref{thm:SvN}), how different half-sided modular inclusions relate to each other is not so obvious. In particular, the possible relative positions of standard subspaces $K_1$ and $K_2$ that form a half-sided modular inclusion within a common standard subspace $H$ are currently unknown.

In this article we characterize when two standard subspaces $K_1$ and $K_2$ that lie as half-sided modular inclusions in a common environment standard subspace $H$ form an inclusion. We do this by switching perspective to standard pairs, which are pairs $(H,U)$ of a positively generated one-parameter group $U$ and a standard subspace $H$ such that $U(s)H \subset H$ for $s \geq 0$. This structure is known to be equivalent to a half-sided modular inclusion, and two half-sided modular inclusions $K_1 \subset H$ and $K_2 \subset H$ satisfy $K_1 \subset K_2$ if and only if their associated standard pairs $(H,U_1)$ and $(H,U_2)$ satisfy $U_1(1)H \subset U_2(1)H$. We show that $K_1 \subset K_2$ is equivalent to an inclusion of spectral subspaces of the generators of $U_1$ and $U_2$ (Theorem \ref{thm:main}), translating the problem from an inclusion of standard subspaces (which can be quite subtle, see for example \cite{FiglioliniGuido} and \cite{CorreadaSilva2023}) to an inclusion of complex subspaces, which is much simpler. An earlier investigation in the context of von Neumann algebras into the relation between inclusions $U_1(1)H \subset U_2(1)H$ and the unitary one-parameter groups $U_1$ and $U_2$ is contained in \cite{BORCHERS1997}. There, the inclusion is characterized by an inequality of the semigroups $s \mapsto U_j(is)$ for $s \geq 0$, $j = 1,2$. Our methods result in a simpler proof of this characterization, but also expand the results significantly; a comparison between our results and those in \cite{BORCHERS1997} can be found in Remark \ref{rem:Borchers}.

We then apply the representation theory of standard pairs to this result, and thereby show that $K_1 \subset K_2$ if and only if there is a representation where $K_1$ and $K_2$ differ by an operator-valued inner function (Proposition \ref{prop:SymInnChar}). In the case where $K_1 \subset H$ and $K_2 \subset H$ induce irreducible standard pairs, this reduces to $K_1 = \varphi(\ln \Delta_H)K_2$ for a symmetric inner function $\varphi: \C_- \rightarrow \C$ (Corollary \ref{cor:HSMIIrredVersion}). We also use these characterizations to provide examples that illustrate some nontrivial phenomena that can occur (Example \ref{ex:nontriv} and \ref{ex:GenIneqCounter}). The characterizations in terms of analytic functions are somewhat reminiscent of the characterizations of endomorphisms of standard pairs given in \cite{Longo2010}. In Remark \ref{rem:LW} we explain the differences and similarities.

This paper is organized as follows. In Section \ref{sec:prereq} we introduce the topics required for the results in this paper, and adapt some standard results to our context. In Section \ref{sec:MainResult} we prove the equivalence of the inclusion $U_1(1)H \subset U_2(1)H$ of the standard pairs $(H,U_1)$ and $(H,U_2)$ to the inclusion of spectral subspaces of the generators of $U_1$ and $U_2$, which is the main result of this paper (Theorem \ref{thm:main}). In Section \ref{sec:applications} we use the representation theory of the canonical commutation relations to prove the characterizations in terms of (operator valued) symmetric inner functions. 

We only consider the mathematical structure of half-sided modular inclusions, and give no direct physical applications of the structural results in this paper, leaving them to be discussed elsewhere. In general we expect there to be many applications, given the generality of the results and the many closely related physical applications as explained above. 

\subsubsection*{Acknowledgements}
I would like to thank G. Lechner for his many helpful comments and suggestions during the writing of this article.

\section{Modular theory, standard subspaces and their inclusions}
\label{sec:prereq}
We first introduce some of the concepts that play a role in the main results for the convenience of the reader and to fix some notation. A more comprehensive exposition on these topics can be found in \cite[Ch. 3]{LongoSibiu}, \cite{Neeb2017}, \cite[Ch. 2]{BratteliRobinson1}, \cite[Ch. 5.2]{Bratteli1987} and \cite{Derezinski2006}. We advise the more experienced reader to continue to the main results starting in Section \ref{sec:MainResult}.

 We introduce some general notation:
 	\begin{IEEEeqnarray*}{rClrCl}
 	\mathbb{C}_+ & := & \{ z \in \C \mid 0 < \im(z) \} \quad & (M[f]\psi)(\theta) & :=& f(\theta)\psi(\theta) \\
 	\mathbb{C}_- & := & \{z \in \C \mid \im(z) < 0\} \quad & (C[\phi]\psi)(\theta) & := & (\phi * \psi)(\theta) \\
 	\mathbb{S}_{\alpha} & := & \{ z \in \C \mid 0 < \im(z) < \alpha \} \quad & (\FT \psi)(\lambda) & := & \frac{1}{\sqrt{2\pi}}\int_\R \psi(\theta)e^{-i\lambda\theta} \, d\theta
 	\end{IEEEeqnarray*}
In addition, throughout this article $P_j$ (respectively $P$) will be the self-adjoint generator of the strongly continuous one-parameter group $U_j$ for $j = 1,2$ (respectively $U$).

\subsection{Standard subspaces}

\begin{definition}
Let $\Hil$ be a (complex) Hilbert space. A real subspace $H \subset \Hil$ is called \textbf{cyclic} if $H + iH$ is dense in $\Hil$ and \textbf{separating} if $H \cap iH = \{0\}$. It is called \textbf{standard} if it is cyclic and separating. If $H \subset \Hil$ is standard, the operator
	\[ S_H: H + iH \rightarrow H + iH, \quad h_1 + ih_2 \mapsto h_1 - ih_2  \]
is called the \textbf{Tomita operator}.
\end{definition}
For any set $A \subset \Hil$ we write $A' = \{ \psi \in \Hil \mid \forall a \in A : \im\langle \psi, a \rangle = 0 \}$ for the symplectic complement. It is an easy exercise to show that $H$ is cyclic (resp. separating) if and only if $H'$ is separating (resp. cyclic). In particular, if $H$ is standard, then $H'$ is also standard.
\begin{theorem}[Tomita-Takesaki theorem, \cite{Takesaki1970} {\cite[Ch. 3]{LongoSibiu}}]
\label{thm:Tomita}
Let $H \subset \Hil$ be a standard subspace.
\begin{enumerate}[a)]
\item The map $S_H$ is a closed anti-linear operator, with $S_H^* = S_{H'}$.
\item The map $S_H$ has polar decomposition $S_H = J_H \Delta_H^{\frac{1}{2}}$ where $J_H$ is an anti-unitary involution and $\Delta_H$ is a positive operator such that $\ker \Delta_H = \{0\}$, and we have $J_{H'} = J_H$ and $\Delta_{H'} = \Delta_{H}^{-1}$.
\item We have
	\[ J_H H = H', \quad \Delta_H^{it}H = H \text{ for all } t \in \R \]
and $J_H \Delta_H J_H = \Delta_H^{-1}$.
\end{enumerate}
\end{theorem}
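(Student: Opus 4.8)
The plan is to establish the three parts in sequence, extracting everything from the single relation $\langle S_H\phi,\psi\rangle=\overline{\langle\phi,S_H^*\psi\rangle}$ (for $\phi\in\dom S_H$) defining the anti-linear adjoint. For part a), I would first observe that $S_H$ is unambiguously defined because $H$ is separating, so the splitting $h=h_1+ih_2$ with $h_j\in H$ is unique; it is anti-linear by inspection, and its domain $H+iH$ is dense since $H$ is cyclic. Closedness is a short graph argument: if $h_1^{(n)}+ih_2^{(n)}\to\psi$ while $h_1^{(n)}-ih_2^{(n)}\to\phi$, then $h_1^{(n)}$ and $h_2^{(n)}$ both converge, and since $H$ is closed their limits lie in $H$, exhibiting $\psi\in\dom S_H$ with $S_H\psi=\phi$. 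The inclusion $S_{H'}\subseteq S_H^*$ is then a direct computation resting on the elementary fact that $\langle h,h'\rangle\in\R$ whenever $h\in H$ and $h'\in H'$.

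The reverse inclusion $S_H^*\subseteq S_{H'}$ is the crux. I would use the defining relation twice. Testing it against $\phi\in H$, where $S_H\phi=\phi$, forces $\langle h,\psi\rangle\in\R$ for every $h\in H$ whenever $S_H^*\psi=\psi$, i.e.\ $\mathrm{Fix}(S_H^*)\subseteq H'$; the opposite inclusion $H'\subseteq\mathrm{Fix}(S_H^*)$ is already contained in $S_{H'}\subseteq S_H^*$. Substituting $S_H\phi$ in place of $\phi$ — legitimate because $\dom S_H$ is $S_H$-invariant and $S_H^2=\mathrm{id}$ there — shows that $S_H^*$ is itself a closed anti-linear involution, $(S_H^*)^2=\mathrm{id}$ on $\dom S_H^*$. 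An elementary argument (for any closed anti-linear involution $T$ one has $\dom T=\mathrm{Fix}(T)+i\,\mathrm{Fix}(T)$ with $T$ acting as conjugation on that splitting) then gives $\dom S_H^*=\mathrm{Fix}(S_H^*)+i\,\mathrm{Fix}(S_H^*)$, and combined with $\mathrm{Fix}(S_H^*)=H'$ this identifies $S_H^*$ with $S_{H'}$; in particular $H'$ is standard.

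For part b), I apply the anti-linear polar decomposition to the closed, densely defined $S_H$, setting $\Delta_H:=S_H^*S_H$ and $S_H=J_H\Delta_H^{1/2}$. Injectivity of $S_H$ (from $H$ separating) gives $\ker\Delta_H=\ker S_H=\{0\}$, and since $S_H$ is moreover a bijection of $H+iH$ onto itself it has dense range, so $J_H$ is anti-unitary. The remaining identities follow from uniqueness of the polar decomposition applied to two rewritings: from $S_H=S_H^{-1}=\Delta_H^{-1/2}J_H^*=J_H^*\bigl(J_H\Delta_H^{-1/2}J_H^*\bigr)$ one reads off $J_H^*=J_H$, hence $J_H^2=\mathrm{id}$ and the relation $J_H\Delta_H J_H=\Delta_H^{-1}$; and from $S_{H'}=S_H^*=\Delta_H^{1/2}J_H=J_H\bigl(J_H\Delta_H^{1/2}J_H\bigr)$, comparison with $S_{H'}=J_{H'}\Delta_{H'}^{1/2}$ yields $J_{H'}=J_H$ and $\Delta_{H'}=\Delta_H^{-1}$.

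For part c), the relation $J_H\Delta_H J_H=\Delta_H^{-1}$ is already available, and by anti-linear functional calculus it gives $J_H\Delta_H^{it}=\Delta_H^{it}J_H$. The decisive observation is that $S_Hh=h$ rearranges, for $h\in H$, to $\Delta_H^{1/2}h=J_Hh$. Using $S_{H'}=J_H\Delta_H^{-1/2}$ one then checks $S_{H'}(J_Hh)=J_Hh$ directly, so $J_HH\subseteq H'$; running the same argument with $H$ replaced by $H'$ and using $J_H^2=\mathrm{id}$ together with $(H')'=H$ gives $H'\subseteq J_HH$, hence $J_HH=H'$. Finally, since $\Delta_H^{it}$ preserves $\dom\Delta_H^{1/2}$ and commutes with it, for $h\in H$ one computes $\Delta_H^{1/2}(\Delta_H^{it}h)=\Delta_H^{it}J_Hh=J_H(\Delta_H^{it}h)$, so $\Delta_H^{it}h\in H$; replacing $t$ by $-t$ upgrades this to $\Delta_H^{it}H=H$. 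I expect the reverse adjoint inclusion in part a) to be the only real obstacle — one must squeeze out of mere boundedness of $\phi\mapsto\langle S_H\phi,\psi\rangle$ both that $S_H^*$ is an anti-linear involution and that its fixed-point space is exactly $H'$ — after which parts b) and c) are bookkeeping with the polar decomposition and functional calculus.
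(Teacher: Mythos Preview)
The paper does not supply a proof of this theorem at all; it is quoted as background with citations to Takesaki and to Longo's lecture notes. Your argument is correct and is essentially the standard one found in those references (in particular \cite[Ch.~3]{LongoSibiu}): establish $S_H^*=S_{H'}$ by showing both inclusions, read off the modular data from the anti-linear polar decomposition, and deduce the invariance properties from $J_H\Delta_H J_H=\Delta_H^{-1}$ together with the characterization $H=\mathrm{Fix}(S_H)$. There is nothing to compare against in the paper itself, and no gap in your sketch.
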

\begin{definition}
The operators $J_H$ and $\Delta_H$ are called the \textbf{modular conjugation} and the \textbf{modular operator} of $H$, respectively. The unitary one-parameter group $t \mapsto \Delta_H^{it}$ is called the \textbf{modular group} of $H$.
\end{definition}
Clearly, we can recover the standard subspace $H$ from $S_H$ by considering $\{ \psi \in \dom\,S_H \mid S_H\psi = \psi\}$. In this way one can show that there is actually a one-to-one correspondence between closed anti-linear involutions and standard subspaces \cite[Prop. 3.2]{LongoSibiu}. On the other hand, given an anti-unitary involution $J$ and a positive operator $\Delta$ such that $J\Delta J = \Delta^{-1}$, one can define a closed anti-linear involution by $S = J\Delta^\frac{1}{2}$, and this also turns out to be a one-to-one correspondence \cite[Cor. 3.5]{LongoSibiu}. We therefore see that a standard subspace is completely determined by its modular objects. In Examples \ref{ex:ex1} and \ref{ex:ex2} we make use of this fact to define some standard subspaces.
\begin{example}
\label{ex:ex1}
Let $\Hil_0 = L^2(\R, d\theta)$ be the $L^2$ space with respect to the Lebesgue measure. We construct a standard subspace $H_0 \subset \Hil_0$ by specifying its modular objects, namely
	\[ (\Delta_{H_0}^{it}\psi)(\theta) := \psi(\theta - 2 \pi t), \quad (J_{H_0}\psi)(\theta) := \overline{\psi(\theta)} \]
so that
	\[ H_0 := \{ \psi \in \dom\, \Delta_{H_0}^{\frac{1}{2}} \mid \Delta_{H_0}^{\frac{1}{2}}\psi = J_{H_0}\psi \}. \]
More concretely, in \cite[Lemma 4.1]{Lechner2014} it is shown that $\psi \in \dom\, \Delta_{H_0}^{\frac{1}{2}}$ if and only if it is the boundary value (in terms of the $L^2$-norm) of an analytic function on $\mathbb{S}_{\pi}$, such that $\psi(\theta + iy) \in L^2(\R,d\theta)$ for all $0 < y < \pi$ (that is, $\psi$ is included in the Hardy space $H^2(\mathbb{S}_{\pi})$). If $\psi \in H_0$, then in addition to the above, one has $\psi(\theta + \pi i) = \overline{\psi(\theta)}$ for almost all $\theta \in \R$. Note that the von Neumann algebra generated by the $\Delta_{H_0}^{it}$ is maximally abelian, as can clearly be seen in Example \ref{ex:ex2}.
\end{example}
\begin{example}
\label{ex:ex2}
By Fourier transforming all the objects in  \ref{ex:ex1}, we can construct a standard subspace $\widetilde{H}_0 := \FT H_0 \subset \Hil_0$ with modular data
	\[ (\Delta_{\widetilde{H}_0}^{it}\psi)(\lambda) = e^{-2\pi t \lambda i} \psi(\lambda), \quad (J_{\widetilde{H}_0}\psi)(\lambda) = \overline{\psi(-\lambda)}\]
so that $\psi \in \widetilde{H}_0$ if and only if
	\[ \int_{-\infty}^\infty e^{-2 \pi \lambda}|\psi(\lambda)|^2 \, d\lambda < \infty \quad \text{and} \quad e^{-\pi \lambda}\psi(\lambda) = \overline{\psi(-\lambda)}. \]
We note that in this realisation it is clear that the von Neumann algebra $\{\Delta_{\widetilde{H}_0}^{it} \mid t \in \R\}'' = L^\infty(\R)$ is maximally abelian (i.e. is equal to its own commutant). This means in particular that any operator that commutes with all $\Delta_{\widetilde{H}_0}^{it}$ is of the form $f(\ln \Delta_{\widetilde{H}_0})$ for some measurable function $f: \R \rightarrow \R$.
\end{example}

We now make a brief remark on general inclusions of standard subspaces $K \subset H$. In principle, this is a question of extensions of closed operators $S_K \subset S_H$. However, as is typical for modular theory, we can translate the problem to a problem in complex analysis, which we do in Proposition \ref{prop:InclusionChar}. For this, we need a notion of analyticity for functions that take values in $B(\Hil)$. There are \emph{a priori} three obvious ways to interpret the derivative of such a function (in the norm topology, strong operator topology and weak operator topology) but we note that by \cite[Sec. 3.9]{hille1948functional} they are all equivalent. In this paper we will therefore not make a distinction and simply write `analytic'. Whenever we write `so-continuous' we mean continuous is the strong operator topology.

\begin{proposition}[{\cite[Thm. 2.5]{Borchers1999a}}, {\cite[Thm 2.12]{ARAKI2005}}]
\label{prop:InclusionChar}
Let $K, H \subset \Hil$ be standard subspaces. Then $K \subset H$ if and only if the map
	\[ F: \R \rightarrow B(\Hil), \quad t \mapsto \Delta_H^{-it}\Delta_K^{it} \]
extends to a bounded so-continuous function on $\overline{\mathbb{S}_{\frac{1}{2}}}$, analytic in $\mathbb{S}_{\frac{1}{2}}$ such that
	\[ F(t + \tfrac{i}{2}) = \Delta_H^{-it}J_HJ_K \Delta_K^{it}, \quad t \in \R. \]
\end{proposition}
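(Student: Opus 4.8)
The plan is to translate $K\subset H$ into the graph inclusion $S_K\subset S_H$ of the Tomita operators --- since a standard subspace is the fixed-point set of its Tomita operator, $K\subset H$ is equivalent to $K+iK=\dom S_K\subseteq\dom S_H=H+iH$ together with $S_H|_{K+iK}=S_K$ --- and then to pass between this operator statement and the analytic continuation of $F$ in both directions via Hadamard's three-lines theorem, using throughout the description of the domains $\dom\Delta_H^{1/2}$ and $\dom\Delta_K^{-1/2}$ through analytic continuation of modular orbits.

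For the ``only if'' direction, assume $K\subset H$. For $\phi\in H+iH=\dom\Delta_H^{1/2}$ and $\psi\in K'+iK'=\dom\Delta_K^{-1/2}$ (both dense) I would study
\[ g_{\phi,\psi}(z):=\bigl\langle\Delta_H^{i\bar z}\phi,\ \Delta_K^{iz}\psi\bigr\rangle , \]
which on the real axis equals $\langle\phi,F(t)\psi\rangle$. Writing it as a double spectral integral $\int\!\int(\mu/\nu)^{iz}\,d\langle E_H(\nu)\phi,E_K(\mu)\psi\rangle$ shows that $g_{\phi,\psi}$ is analytic on $\mathbb{S}_{\frac{1}{2}}$ and continuous and bounded on $\overline{\mathbb{S}_{\frac{1}{2}}}$. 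On the lower edge $|g_{\phi,\psi}(t)|\le\|\phi\|\,\|\psi\|$ because $F(t)$ is unitary; on the upper edge, using Theorem \ref{thm:Tomita}, the identity $S_H|_{K+iK}=S_K$, and the domain facts $\Delta_K^{it}K=K$ and $\Delta_K^{-1/2}(K'+iK')\subseteq K+iK\subseteq H+iH$, one computes that $g_{\phi,\psi}(t+\tfrac{i}{2})=\langle\phi,\Delta_H^{-it}J_HJ_K\Delta_K^{it}\psi\rangle$, again of modulus $\le\|\phi\|\,\|\psi\|$ since $J_HJ_K$ is unitary. Hadamard's three-lines theorem then yields $|g_{\phi,\psi}(z)|\le\|\phi\|\,\|\psi\|$ on all of $\overline{\mathbb{S}_{\frac{1}{2}}}$, so for each $z$ the sesquilinear form $(\phi,\psi)\mapsto g_{\phi,\psi}(z)$ defines a contraction $F(z)$, and the analyticity and boundary behavior of the functions $g_{\phi,\psi}$ together with this uniform bound make $z\mapsto F(z)$ analytic and so-continuous with the prescribed boundary value. (The one mildly delicate point is so-continuity up to the edges, which follows from $F$ being unitary-valued there: if $F(z_n)\psi\to F(z_0)\psi$ weakly with $z_0$ on an edge, then $\limsup\|F(z_n)\psi\|\le\|\psi\|=\|F(z_0)\psi\|$, forcing norm convergence.)

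For the ``if'' direction, assume $F$ extends as stated, and fix $\psi\in K+iK=\dom S_K$. Since $\psi\in\dom\Delta_K^{1/2}$, the map $z\mapsto\Delta_K^{-iz}\psi$ is analytic on $\mathbb{S}_{\frac{1}{2}}$ and norm-continuous on $\overline{\mathbb{S}_{\frac{1}{2}}}$, so
\[ G(z):=F(z)\,\Delta_K^{-iz}\psi \]
is analytic on $\mathbb{S}_{\frac{1}{2}}$, norm-continuous on $\overline{\mathbb{S}_{\frac{1}{2}}}$, and satisfies $G(t)=\Delta_H^{-it}\Delta_K^{it}\Delta_K^{-it}\psi=\Delta_H^{-it}\psi$ on the real axis. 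By the characterization of $\dom\Delta_H^{1/2}$ through analytic continuation of $t\mapsto\Delta_H^{-it}\psi$ (proved by applying the spectral projections of $\Delta_H$ and the identity theorem), the existence of the extension $G$ forces $\psi\in\dom\Delta_H^{1/2}=\dom S_H$ and $G(z)=\Delta_H^{-iz}\psi$. Evaluating at $z=\tfrac{i}{2}$ and using $F(\tfrac{i}{2})=J_HJ_K$ (the value at $t=0$ of the prescribed boundary function) gives $\Delta_H^{1/2}\psi=J_HJ_K\Delta_K^{1/2}\psi$, hence $S_H\psi=J_H\Delta_H^{1/2}\psi=J_K\Delta_K^{1/2}\psi=S_K\psi$. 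As $\psi\in\dom S_K$ was arbitrary, $S_K\subseteq S_H$, i.e.\ $K\subseteq H$.

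The step I expect to be the main obstacle is the bookkeeping in the ``only if'' direction: checking that $g_{\phi,\psi}$ is genuinely analytic and controlled up to \emph{both} edges of the strip for $\phi,\psi$ merely in $\dom\Delta_H^{1/2}$ and $\dom\Delta_K^{-1/2}$, and performing the upper-edge computation. This is exactly where the hypothesis $K\subset H$ is used: it is what makes the formal manipulation $\Delta_H^{-i(t+i/2)}\Delta_K^{i(t+i/2)}=\Delta_H^{-it}\bigl(\Delta_H^{1/2}\Delta_K^{-1/2}\bigr)\Delta_K^{it}$ legitimate and identifies the densely defined operator $\Delta_H^{1/2}\Delta_K^{-1/2}$ with $J_HJ_K$ on $\dom\Delta_K^{-1/2}$.
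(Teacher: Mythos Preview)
The paper does not actually supply a proof of this proposition: it is stated with references to \cite{Borchers1999a} and \cite{ARAKI2005} and then used as a black box, so there is nothing in the paper to compare your argument against line by line.

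That said, your proof is correct and is essentially the standard argument found in the cited literature. The reduction of $K\subset H$ to the operator inclusion $S_K\subset S_H$, the two-variable matrix-element function $g_{\phi,\psi}$ with the three-lines bound for the forward direction, and the construction $G(z)=F(z)\Delta_K^{-iz}\psi$ together with the analytic-vector characterization of $\dom\Delta_H^{1/2}$ for the converse, are exactly the ingredients used in \cite{ARAKI2005}. Your identification of the upper-edge value via $\Delta_K^{-1/2}(K'+iK')\subset K+iK\subset H+iH$ and $S_H|_{K+iK}=S_K$ is the correct way to make the formal manipulation $\Delta_H^{1/2}\Delta_K^{-1/2}=J_HJ_K$ rigorous, and your remark that so-continuity up to the edges follows from unitarity there plus the contraction bound (weak convergence with $\limsup$ of norms bounded by the norm of the limit) is the standard way to close that gap. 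The only thing worth making explicit when writing this out in full is that $g_{\phi,\psi}$ is \emph{a priori} bounded on the closed strip (by $\sup_{0\le y\le 1/2}\|\Delta_H^{y}\phi\|\,\|\Delta_K^{-y}\psi\|<\infty$), so that the three-lines theorem genuinely applies; you allude to this via the double spectral integral, which is fine.
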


\begin{example}
\label{ex:VNA}
A very important example of standard subspaces are those constructed from a von Neumann algebra $\Acal \subset B(\Hil)$ and a standard (i.e. cyclic and separating) vector $\Omega \in \Hil$ for $\Acal$. One defines 
	\[ H_{\Acal,\Omega} := \overline{\{ A\Omega \mid A \in \Acal, A^* = A \}} \]
We will denote the modular objects of $H_{\Acal, \Omega}$ as $\Delta_{\Acal,\Omega}^{it}$ and $J_{\Acal,\Omega}$.

It should be noted that this assignment `forgets' some information about the von Neumann algebra, as one can construct different von Neumann algebras on the same Hilbert space with the same standard vector that have the same modular objects (see e.g. \cite[Thm 3.5]{Buchholz2010}). However, given some von Neumann algebra $\Acal \subset B(\Hil)$ (which one can consider as some kind of environment) and standard vector $\Omega \in \Hil$, it is the case that any sub-von Neumann algebras $\Bcal_1,\Bcal_2 \subset \Acal$ so that $\Omega$ is also standard for $\Bcal_1$ and $\Bcal_2$ satisfy $\Bcal_1 \subset \Bcal_2$ if and only if $H_{\Bcal_1,\Omega} \subset H_{\Bcal_2,\Omega}$ \cite[Prop. 3.24 (a)]{LongoSibiu}. In other words, for any given von Neumann algebra $\Acal \subset B(\Hil)$ and standard vector $\Omega \in \Hil$ for $\Acal$, the map
	\[ \{ \Bcal \subset B(\Hil) \text{ VNA}\mid \Bcal \subset \Acal, \, \Omega \text{  std. for } \Bcal \} \rightarrow \{ H \subset H_{\Acal,\Omega} \mid H \text{ std. subsp.} \} \]
given by $\Bcal \mapsto H_{\Bcal,\Omega}$ is injective.
\end{example}

\subsection{Half-sided modular inclusions and standard pairs}
Next we introduce the connection between half-sided modular inclusions and standard pairs, starting with the latter.
\begin{definition}
A \textbf{standard pair} is a pair $(H, U)$ of a standard subspace $H \subset \Hil$ and a positively generated strongly continuous one-parameter group $U: \R \rightarrow \mathcal{U}(\Hil)$ such that
	\[ U(s)H \subset H \quad \text{ for } s \geq 0. \]
A standard pair is called \textbf{non-degenerate} if $U$ has no invariant vectors other than 0. 
\end{definition}
In particular, a standard pair induces a family of inclusions of standard subspaces, namely $U(s)H \subset H$ for all $s \geq 0$. The above definition is also sometimes referred to as a \emph{positive} standard pair, to differentiate it from the same definition but for $s \leq 0$, which is then called a \emph{negative} standard pair. In this article we will only be dealing with positive standard pairs.
\begin{theorem}[Borchers' theorem, {\cite[Thm. II.9]{Borchers1992}} {\cite[Thm. 1]{Florig1998}}]
\label{thm:Borchers}
Let $(H,U)$ be a standard pair. Then
	\[ \Delta_H^{it}U(s)\Delta_H^{-it} = U(e^{-2\pi t}s) \quad \text{and} \quad J_H U(s) J_H = U(-s)\]
\end{theorem}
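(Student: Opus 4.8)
The strategy has three parts: recast the semigroup condition as a relation with the Tomita operator; feed the positivity of the generator $P$ of $U$ into an analytic-continuation argument to determine how the modular flow rescales $U$; and read off the conjugation relation as a boundary value. \emph{Step 1.} Since $U(s)$ is complex-linear, maps $H$ into itself for $s\ge 0$, and $S_H$ restricts to $+\mathrm{id}$ on $H$ and $-\mathrm{id}$ on $iH$, one checks immediately that $U(s)$ preserves $\dom S_H=H+iH$ and commutes with $S_H$ there for every $s\ge 0$; writing $S_H=J_H\Delta_H^{1/2}$ this reads $J_H\Delta_H^{1/2}U(s)=U(s)J_H\Delta_H^{1/2}$ on $\dom\Delta_H^{1/2}$. (Passing to symplectic complements via $(VK)'=VK'$ for unitary $V$ also gives $U(s)H'\subseteq H'$ for $s\le 0$.) None of this uses positivity of $P$.

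\emph{Step 2.} Fix $s\ge 0$. The modular data of $U(s)H$ are $\Delta_{U(s)H}^{it}=U(s)\Delta_H^{it}U(-s)$ and $J_{U(s)H}=U(s)J_HU(-s)$, so applying Proposition~\ref{prop:InclusionChar} to the inclusion $U(s)H\subseteq H$ and multiplying by $U(s)$ on the right shows that $G_s\colon t\mapsto \Delta_H^{-it}U(s)\Delta_H^{it}$ extends to a bounded so-continuous function on $\overline{\mathbb{S}_{\frac{1}{2}}}$, analytic in $\mathbb{S}_{\frac{1}{2}}$, with $G_s(0)=U(s)$ and $G_s(t+\tfrac{i}{2})=\Delta_H^{-it}J_HU(s)J_H\Delta_H^{it}$. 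On the other hand, because $P\ge 0$ the map $\zeta\mapsto U(\zeta)$ is a contraction-valued analytic function on $\overline{\mathbb{C}_+}$, and since $\im(e^{2\pi z}s)\ge 0$ on $\overline{\mathbb{S}_{\frac{1}{2}}}$ the map $z\mapsto U(e^{2\pi z}s)$ is analytic and bounded there and also equals $U(s)$ at $z=0$. The heart of the proof is the identification $G_s(z)=U(e^{2\pi z}s)$ on the whole strip --- this is Borchers' theorem proper. It is here, and only here, that positivity of the generator is used in an essential rather than a formal way: the single common value at $z=0$ does not suffice, and one exploits that $G_s$ is jointly analytic in $z$ on $\mathbb{S}_{\frac{1}{2}}$ and in the translation parameter on $\mathbb{C}_+$ (each operator $\Delta_H^{-it}P\Delta_H^{it}$ being again positive), closing the gap with a Phragm\'en--Lindel\"of estimate; we refer to~\cite{Florig1998} for a streamlined version and to~\cite{Borchers1992} for the original. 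Restricting $G_s(z)=U(e^{2\pi z}s)$ to real $z$ gives $\Delta_H^{-it}U(s)\Delta_H^{it}=U(e^{2\pi t}s)$, i.e.\ $\Delta_H^{it}U(s)\Delta_H^{-it}=U(e^{-2\pi t}s)$, for all $t\in\R$ and $s\ge 0$.

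\emph{Step 3.} Evaluating $G_s(z)=U(e^{2\pi z}s)$ at the boundary point $z=\tfrac{i}{2}$ gives $G_s(\tfrac{i}{2})=U(e^{\pi i}s)=U(-s)$, while by Proposition~\ref{prop:InclusionChar} the same boundary value is $G_s(\tfrac{i}{2})=J_HU(s)J_H$; hence $J_HU(s)J_H=U(-s)$ for $s\ge 0$, and conjugating once more by $J_H$ gives it for all $s\in\R$. Feeding $U(-s)=J_HU(s)J_H$ back into the scaling relation (and using $J_H\Delta_H^{it}=\Delta_H^{it}J_H$) extends that relation to all $s\in\R$ as well, completing the proof. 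The main obstacle is the identification in Step~2: Steps 1 and 3 are essentially bookkeeping once the modular data of $U(s)H$ are written down, whereas Step 2 is the genuine analytic content and really does require the rigidity forced by $P\ge 0$, not any purely algebraic manipulation.
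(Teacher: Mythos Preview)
The paper does not supply its own proof of this theorem --- it is stated as a cited result from \cite{Borchers1992} and \cite{Florig1998} with no argument given. So there is no in-paper proof to compare against; your proposal already goes further than the paper does by sketching the architecture of the argument.

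That said, your proposal is an outline rather than a proof. Steps~1 and~3 are correct bookkeeping: the computation of the modular data of $U(s)H$, the application of Proposition~\ref{prop:InclusionChar} to obtain the analytic extension of $G_s(t)=\Delta_H^{-it}U(s)\Delta_H^{it}$ to $\overline{\mathbb{S}_{1/2}}$, and the reading-off of $J_HU(s)J_H=U(-s)$ from the boundary value are all fine. The gap is exactly where you say it is: in Step~2 you assert the identification $G_s(z)=U(e^{2\pi z}s)$ on the strip, correctly observe that agreement at the single point $z=0$ is not enough, and then write ``we refer to~\cite{Florig1998} \ldots\ and to~\cite{Borchers1992}''. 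But that identification \emph{is} the content of Borchers' theorem --- everything else is packaging. Your description of the mechanism (``jointly analytic \ldots\ closing the gap with a Phragm\'en--Lindel\"of estimate'') gestures at Florig's argument without carrying it out, so as written the proposal is circular: it proves Borchers' theorem by citing Borchers' theorem. If the intent was to give a proof sketch with explicit attribution for the hard step, that is reasonable and honestly flagged; if the intent was a self-contained proof, Step~2 needs the actual analytic argument filled in.
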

This result means that the inclusions $U(s)H \subset H$ for $s \geq 0$ have the following interesting property: for $t \geq 0$ one sees that
	\[ \Delta_H^{-it}U(s)H = U(e^{2\pi t}s)H = U(s)U(( e^{2\pi t} - 1)s)H \subset U(s)H  \]
since $(e^{2\pi t} - 1)s \geq 0$ for $t\geq 0$ and $s \geq 0$. We can formulate this property using only modular theory:
\begin{definition}
Let $K \subset \Hil$ and $H \subset \Hil$ be standard subspaces, and $K \subset H$. This inclusion is called a \textbf{half-sided modular inclusion} if
	\[ \Delta_H^{-it} K \subset K \quad \text{ for all } t \geq 0. \]
We call a half-sided modular inclusion $K \subset H$ \textbf{non-degenerate} if
	\[ \bigcap_{t \geq 0} \Delta_H^{-it}K = \{0\} \]
\end{definition}
As with the definition of a standard pair, one sometimes wants to differentiate positive and negative half-sided modular inclusions; for this article, we will only use the definition given above.

As remarked above, from Borchers' theorem one can easily see that standard pairs define half-sided modular inclusions. What is surprising, is that this process can be reversed:
\begin{theorem}[{\cite[Thm. 3]{Wiesbrock1993}} {\cite[Thm. 2.1]{ARAKI2005}}]
\label{thm:Wiesbrock}
Let $K \subset H \subset \Hil$ be a half-sided modular inclusion. Then the identity
	\[ U(1 - e^{-2\pi t}) := \Delta_K^{it}\Delta_H^{-it} \]
can be extended to a positively generated one-parameter group $U: \R \rightarrow \mathcal{U}(\Hil)$ such that $(H,U)$ is a standard pair and $K = U(1)H$.
\end{theorem}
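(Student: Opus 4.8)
The plan is to use the analytic--continuation criterion of Proposition~\ref{prop:InclusionChar} ``in reverse''. The conceptual content is that half-sidedness forces the pair of modular groups $\Delta_H^{it},\Delta_K^{it}$ to assemble into a positive-energy unitary representation of the $ax+b$ group, inside which $U$ is the translation subgroup and $\Delta_H^{it}$ the dilation subgroup; one then extracts $U$ by analytic continuation. Concretely I would first fix the candidate: put $V(t):=\Delta_K^{it}\Delta_H^{-it}$ for $t\in\R$, an so-continuous family of unitaries with $V(0)=1$, for which a one-line computation gives the cocycle identity
\[
V(s+t)=V(s)\,\Delta_H^{is}V(t)\Delta_H^{-is},\qquad s,t\in\R.
\]
If a standard pair $(H,U)$ with $K=U(1)H$ exists, then Theorem~\ref{thm:Borchers} applied to it forces $\Delta_K^{it}=U(1)\Delta_H^{it}U(-1)$ and hence $V(t)=U(1)U(-e^{-2\pi t})=U(1-e^{-2\pi t})$; this explains the formula in the statement and pins down the goal: produce a positive self-adjoint operator $P$ with $e^{i(1-e^{-2\pi t})P}=V(t)$, for then $U(s):=e^{isP}$ will, by the cocycle identity, automatically satisfy the scaling relation $\Delta_H^{is}P\Delta_H^{-is}=e^{-2\pi s}P$.

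The heart of the proof is producing $P$. Here I would exploit that for every $t_0\ge 0$ the inclusion $\Delta_H^{-it_0}K\subset H$ is \emph{again} half-sided modular, with modular operator $\Delta_H^{-it_0}\Delta_K\Delta_H^{it_0}$; applying Proposition~\ref{prop:InclusionChar} to each of these inclusions yields a whole family of bounded so-continuous functions on $\overline{\mathbb{S}_{\frac{1}{2}}}$, analytic inside. Feeding these back into the cocycle identity, one bootstraps $t\mapsto V(t)$ to an operator-valued function that is analytic and uniformly bounded on a fixed domain, with $\Delta_H^{is}V(t)\Delta_H^{-is}$ depending on $(s,t)$ only through $1-e^{-2\pi s}(1-e^{-2\pi t})$. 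Differentiating this at $t=0$ produces, on a suitable invariant core, a densely defined symmetric operator; using the analyticity once more --- a Hille--Yosida/Borchers-type positivity argument applied to the semigroup $t\ge 0$ --- one shows it is essentially self-adjoint with \emph{positive} closure $P$ (it is precisely the one-sidedness $t_0\ge 0$ in the half-sided condition that selects $P\ge 0$ over $P\le 0$). Then $U(s):=e^{isP}$ is a positively generated one-parameter group with $U(1-e^{-2\pi t})=V(t)$ and the Borchers scaling relation with respect to $\Delta_H^{it}$.

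It remains to check that $(H,U)$ is a standard pair with $K=U(1)H$. The scaling relation is the first relation of Theorem~\ref{thm:Borchers}, and together with $\Delta_H^{is}H=H$ it gives $U(s)H\subset H$ for $0\le s<1$ once one identifies $U(1-e^{-2\pi t})H=\Delta_K^{it}H$ and reads off, from the half-sided condition applied to $\Delta_H^{-it}K\subset K\subset H$ (again via Proposition~\ref{prop:InclusionChar}), that $\Delta_K^{it}H\subset H$ for $t\ge 0$; the group law extends this to all $s\ge 0$, so $(H,U)$ is a standard pair. Finally, the relations established above give $\Delta_K^{it}=U(1)\Delta_H^{it}U(1)^{-1}$, hence $\Delta_K=\Delta_{U(1)H}$; and since $J_H$ commutes with the modular group $\Delta_H^{it}$ (a consequence of $J_H\Delta_HJ_H=\Delta_H^{-1}$ in Theorem~\ref{thm:Tomita}), the same analytic method also yields the reflection relation $J_HU(s)J_H=U(-s)$, whence $J_K=J_{U(1)H}$ and therefore $S_K=U(1)S_HU(1)^{-1}$, i.e.\ $K=U(1)H$.

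The step I expect to be the genuine obstacle is the construction of $P$ in the second paragraph. A priori $V$ is merely a $\Delta_H$-cocycle, and the reparametrisation $a=1-e^{-2\pi t}$ covers only $a<1$; promoting the ``finite-angle'' analyticity supplied by Proposition~\ref{prop:InclusionChar} into a genuine, positively generated one-parameter group defined for all $s\in\R$ --- in particular giving meaning to $U(1)$ --- is exactly the delicate part, and is the point at which Wiesbrock's original argument required the correction later supplied in \cite{ARAKI2005}. Everything else is bookkeeping with modular objects and repeated appeals to Borchers' theorem.
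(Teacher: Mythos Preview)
The paper does not prove this theorem: it is stated with citations to \cite{Wiesbrock1993} and \cite{ARAKI2005} and used as a black box in the preliminaries section, so there is no ``paper's own proof'' to compare your proposal against.

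That said, your outline is broadly aligned with the structure of the actual proofs in those references: one defines the cocycle $V(t)=\Delta_K^{it}\Delta_H^{-it}$, uses the half-sided condition to obtain analyticity on a strip, and then extracts a positively generated one-parameter group. You are also right that the genuinely hard step is the one you flag at the end --- promoting the reparametrised family $a\mapsto V\bigl(-\tfrac{1}{2\pi}\ln(1-a)\bigr)$, defined a priori only for $a<1$, to a globally defined one-parameter group with positive generator. Your second paragraph gestures at this (``bootstraps'', ``Hille--Yosida/Borchers-type positivity argument'') but does not actually carry it out; as written it is a sketch of a strategy rather than a proof, and this is precisely the place where Wiesbrock's original argument was incomplete and the Araki--Zsid\'o correction was needed. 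A couple of smaller points: the claim $\Delta_K^{it}H\subset H$ for $t\ge 0$ is true but is itself a nontrivial consequence of half-sidedness (it is the ``dual'' half-sided inclusion $H'\subset K'$), not something that follows directly from Proposition~\ref{prop:InclusionChar}; and extending $U(s)H\subset H$ from $0\le s<1$ to all $s\ge 0$ via ``the group law'' requires knowing $U(1)H\subset H$, which is exactly what you are trying to establish.
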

So we see that there is a one-to-one correspondence between standard pairs and half-sided modular inclusions. However, it is not always straightforward to concretely calculate one structure when given the other; in particular one can in principle check whether an inclusions of standard subspaces $K \subset H$ is a half-sided modular inclusion by only having concrete information about $\Delta_H^{it}$ (as is for example the case when considering a standard vector associated to a KMS-state on a von Neumann algebra) and not actually calculating $\Delta_K^{it}$, which is in general hard. For this reason, it is good to know that non-degeneracy of a half-sided modular inclusion (which can be checked knowing only the interaction between $K$ and $\Delta_H^{it}$) and non-degeneracy of standard pairs (which can be checked knowing only the one-parameter group $U$) are actually equivalent, as the name suggests:
\begin{proposition}
\label{prop:nondeg}
A half-sided modular inclusion $K \subset H$ is non-degenerate if and only if its associated standard pair (as in Theorem \ref{thm:Wiesbrock}) is non-degenerate.
\end{proposition}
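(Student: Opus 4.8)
The plan is to reduce both notions of non-degeneracy to a single object attached to the standard pair $(H,U)$, namely $N := \bigcap_{s \geq 0} U(s)H$, and then to identify $N$ with $H \cap \ker P$. First I would use Theorem \ref{thm:Wiesbrock} to write $K = U(1)H$ and Borchers' theorem (Theorem \ref{thm:Borchers}) to compute $\Delta_H^{-it}K = U(e^{2\pi t})H$ for all $t$ (using $\Delta_H^{it}H = H$). Since $s \mapsto U(s)H$ is decreasing, because $U(s)H = U(s')U(s-s')H \subseteq U(s')H$ for $s \geq s' \geq 0$, this gives $\bigcap_{t \geq 0}\Delta_H^{-it}K = \bigcap_{s \geq 1}U(s)H$, which by monotonicity is $N$. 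Hence the half-sided modular inclusion is non-degenerate exactly when $N = \{0\}$, while the standard pair is non-degenerate exactly when $\ker P = \{0\}$; so it suffices to prove $N = H \cap \ker P$ together with the observation that $H \cap \ker P = \{0\}$ forces $\ker P = \{0\}$.

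The inclusion $H \cap \ker P \subseteq N$ is immediate, since a vector in $H$ that is fixed by $U$ lies in every $U(s)H$. For the reverse inclusion the crucial point is that $N$, being the intersection of a monotone family, is invariant under the full group $U$: for $s' \geq 0$ one has $U(s')N = \bigcap_{s \geq s'}U(s)H$, which by monotonicity equals $N$, and the negative case follows by inversion. Hence $\Hil_N := \overline{N + iN}$ is a $U$-invariant subspace, $N$ is standard in $\Hil_N$ (cyclicity is built in, and separation follows from $N \cap iN \subseteq H \cap iH = \{0\}$), and $U|_{\Hil_N}$ is positively generated, so $(\Hil_N, N, U|_{\Hil_N})$ is again a standard pair. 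But the two-sided invariance $U(s)N = N$ makes $U(s)|_{\Hil_N}$ commute with the Tomita operator $S_N$, hence with its modular group $\Delta_N^{it}$; inserting this into Borchers' relation for this standard pair yields $U(s)|_{\Hil_N} = U(e^{-2\pi t}s)|_{\Hil_N}$ for all $s,t$, and letting $t \to +\infty$ collapses $U|_{\Hil_N}$ to the identity by strong continuity. Thus $N \subseteq \Hil_N \subseteq \ker P$, and combined with $N \subseteq H$ this gives $N = H \cap \ker P$. To finish, I would note that $\ker P$, the fixed-point space of $U$, is reducing for $\Delta_H^{it}$ and $J_H$ — again an easy consequence of Borchers' theorem — so the orthogonal projection onto $\ker P$ commutes with $S_H$ and maps $H$ into $H \cap \ker P = \{0\}$; hence $H \perp \ker P$, and cyclicity of $H$ forces $\ker P = \{0\}$.

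I expect the step $N \subseteq \ker P$ to be the main obstacle: a priori $N$ could be a nontrivial standard subspace carrying a genuine positive-energy representation of $U$, and the reason this cannot happen is the interplay between the two-sided invariance $U(s)N = N$ (which comes for free from $N$ being an intersection of a decreasing family of the spaces $U(s)H$) and Borchers' scaling relation $\Delta_N^{it}U(s)\Delta_N^{-it} = U(e^{-2\pi t}s)$. The care needed is mostly bookkeeping — checking that $(\Hil_N, N, U|_{\Hil_N})$ really satisfies the hypotheses of Theorem \ref{thm:Borchers} — and no delicate analysis beyond strong continuity is involved.
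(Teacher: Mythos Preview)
Your proof is correct and shares the paper's overall framework: both identify $\bigcap_{t \geq 0}\Delta_H^{-it}K$ with $N = \bigcap_{s \geq 0}U(s)H$, both establish $\pi H = H \cap \ker P$ for the projection $\pi$ onto $\ker P$ (you compress the paper's explicit domain argument into the observation that $\pi$ commutes with $S_H$, which is legitimate once $\pi$ commutes with the unitary group $\Delta_H^{it}$), and both finish via cyclicity of $H$. The one genuine divergence is the step $N \subseteq \ker P$. The paper argues pointwise: for $h \in N$ and $h' \in H'$ the matrix coefficient $s \mapsto \langle h', U(s)h\rangle$ is real for all $s\in\R$ (because $U(s)h \in H$), hence equals $\langle h, U(-s)h'\rangle$; positivity of $P$ then makes this a bounded entire function, so Liouville and density of $H'+iH'$ give $U(s)h = h$. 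Your route is structural: you observe that $N$ is two-sidedly $U$-invariant, build the standard pair $(N, U|_{\Hil_N})$ on $\Hil_N = \overline{N+iN}$, note that $U(s)|_{\Hil_N}$ commutes with $S_N$ and hence with $\Delta_N^{it}$, and then Borchers' scaling relation forces $U(s)|_{\Hil_N} = U(e^{-2\pi t}s)|_{\Hil_N}$ to collapse to the identity. The paper's argument is shorter and uses nothing beyond the symplectic complement and Liouville; yours trades that for a second invocation of Borchers' theorem but makes transparent the mechanism --- two-sided invariance plus dilation covariance can only coexist trivially --- which is a nice conceptual payoff.
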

\begin{proof}
Let $(H,U)$ be the standard pair associated to $K \subset H$, and let $\pi$ be the projection onto $\ker P$, where $P$ is the generator of $U$. We first prove that $H \cap \ker P = \pi H$.

We note that by Theorem \ref{thm:Borchers} for $\psi \in \ker P$ we have 
\[ U(s)\Delta_H^{it}\psi = \Delta_H^{it}U(e^{2\pi t}s)\psi = \Delta_H^{it}\psi \]
and $U(s)J_H\psi = J_HU(-s)\psi = J_H\psi$. So $\Delta_H^{it}$ and $J_H$ commute with $\pi$. For $\varphi, \psi \in \dom \, \Delta_H^{\frac{1}{2}}$ we note that the analytic functions $z \mapsto \langle \Delta_H^{i\overline{z}}\psi, \pi \varphi \rangle$ and $z \mapsto \langle \pi \psi, \Delta_H^{-iz}\varphi \rangle$ on $\mathbb{S}_{\frac{1}{2}}$ have the same boundary values on $\R$, so they must be equal. In particular we get $\langle \Delta_H^{\frac{1}{2}} \psi, \pi \varphi \rangle = \langle \pi \psi, \Delta_H^{\frac{1}{2}} \varphi \rangle$, so because $\Delta_H^{\frac{1}{2}}$ is self-adjoint, we have
\[\pi \, \dom \, \Delta_H^{\frac{1}{2}} = \dom\, \Delta_H^{\frac{1}{2}} \cap \ker P.\]
Because $\pi$ commutes with $J$, we then have $H \cap \ker P = \pi H$.

We also note that because $U(s)H$ is a decreasing family of standard subspaces and $U(1)H = K$, we have that
	\[ \bigcap_{s \in \R} U(s)H = \bigcap_{s \geq 0} U(s)H = \bigcap_{t \geq 0} \Delta_H^{-it}K. \]

Now first suppose $\ker P \neq \{0\}$. Because $H + iH$ is dense in $\Hil$, we have that $\pi H \neq \{0\}$. This means that
	\[ \{0 \} \neq \pi H = H \cap \ker P \subset \bigcap_{s \geq 0} U(s)H = \bigcap_{t \geq 0} \Delta_H^{-it}K   \]
so $K \subset H$ is degenerate.

Next, we suppose that we have nonzero $h \in \bigcap_{t \geq 0} \Delta_H^{-it}K$. Then $U(s)h \in H$ for all $s \in \R$, so for all $h' \in H'$ we have that $\langle h', U(s)h \rangle = \langle U(s)h,h'\rangle$ for all $s \in \R$. So because $U$ has a positive generator, the function
	\[ s \mapsto \langle h', U(s)h\rangle = \langle h, U(-s) h'\rangle \]
can be extended to an entire bounded function, so it is constant. This means that for all $s \in \R$ and all $\psi \in H' + i H'$, we have $\langle \psi, U(s) h\rangle = \langle \psi,h\rangle$. Since $H' + iH'$ is dense, this means that $h \in \ker P$.
\end{proof}
\begin{example}
\label{ex:ex1ctd}
We can enhance the standard subspace defined in Example \ref{ex:ex1} by the unitary one-parameter group defined by
	\[ (U_0(s)\psi)(\theta) = e^{ie^\theta s}\psi(\theta).  \]
Clearly, $U_0$ is positively generated. Since the function
	\[ \R \ni \theta \mapsto (U_0(s)\psi)(\theta + iy) = e^{i\cos(y)e^{\theta}s}e^{-\sin(y)e^{\theta}s}\psi(\theta) \]
lies in $L^2(\R)$, we have that $U_0(s)H_0 \subset H_0$ for all $s \geq 0$.
\end{example}
\begin{example}
\label{ex:ex2ctd}
For $s \in \R\setminus \{0\}$, we see that
	\[ T_s := \frac{1}{\sqrt{2\pi}}\FT[\theta \mapsto e^{ise^{\theta}}] = \frac{1}{2}\delta + \frac{1}{2\pi} \PV\left(e^{i\lambda \ln(-is)}\Gamma(-i\lambda)\right) \]
where $\PV$ denotes the principal value at $\lambda = 0$, and the Fourier transform is taken in a distributional sense; a proof of this fact can be found in Appendix \ref{app:distroFourier}. Because $M[e^{ise^\theta}]$ is unitary for all $s \in \R$, this means that $C[T_s]: C_c^\infty(\R) \rightarrow C_c^\infty(\R)$ can be $L^2$-continuously extended to a unitary map $\widetilde{U}_0(s): L^2(\R) \rightarrow L^2(\R)$, which satisfies $\widetilde{U}_0(s) = \FT U_0(s) \FT^*$. From this it is clear that $(\widetilde{H}_0, \widetilde{U}_0)$ is a standard pair (here $\widetilde{H}_0$ is defined in Example \ref{ex:ex2}).
\end{example}
\begin{example}
\label{ex:VNActd}
Given a von Neumann algebra $\Acal \subset B(\Hil)$ with standard vector $\Omega$, and a von Neumann algebra $\Bcal \subset \Acal$, we call the inclusion $\Bcal \subset \Acal$ a half-sided modular inclusion if $\Omega$ is standard for $\Bcal$ and
	\[ \Delta_{\Acal,\Omega}^{-it}\Bcal \Delta_{\Acal,\Omega}^{it} \subset \Bcal,\quad t \geq 0. \]
As we noted in Example \ref{ex:VNA}, since both $\Bcal$ and $\Delta_{\Acal,\Omega}^{-it}\Bcal \Delta_{\Acal,\Omega}^{it}$ lie inside $\Acal$, they form an inclusion if and only if their standard subspaces form an inclusion; so given an inclusion $\Bcal \subset \Acal$, it is a half-sided modular inclusion if and only if $H_{\Bcal, \Omega} \subset H_{\Acal, \Omega}$ is a half-sided modular inclusion.  Given a von Neumann algebra $\Acal$ and standard vector $\Omega$, we therefore have an injective map
	\[ \{ \Bcal \mid \Bcal \subset \Acal \text{ HSMI} \} \rightarrow \{ H \mid H \subset H_{\Acal,\Omega} \text{ HSMI} \}. \]
Also in this restricted setting, this map is not necessarily surjective. If $H \subset H_{\Acal,\Omega}$ is a half-sided modular inclusion, then the associated standard pair $(H_{\Acal,\Omega},U)$ gives a unitary $U(1)$ such that $H = U(1)H_{\Acal,\Omega}$; this clearly is the standard subspace associated to the von Neumann algebra $U(1)\Acal U(-1)$. What is \emph{a priori} not clear, however, is whether $U(1)\Acal U(-1) \subset \Acal$. In \cite{Davidson1996} a sufficient condition is given for this to be the case.
\end{example}

\subsection{Representation theory of standard pairs}
Standard pairs are closely related to representations of the canonical commutation relations. To see how we can retrieve such a representation from standard pairs, let $(H,U)$ be a non-degenerate standard pair and $P$ the generator of $U$. By Theorem \ref{thm:Borchers}, we know that
	\[ \exp(i s \Delta_H^{it}P\Delta_H^{-it}) = \Delta_H^{it}U(s)\Delta_H^{-it} = U(e^{-2\pi t}s) = \exp(ie^{-2\pi t}s P).\]
Because the generator of a unitary one-parameter group is unique, we have that $ \Delta_H^{it}P\Delta_H^{-it} = e^{-2\pi t}P$. Since $U$ is non-degenerate, $P$ does not have $0$ as an eigenvalue, and therefore by the functional calculus we can take the logarithm on both sides. This gives $\Delta_H^{it}\ln(P)\Delta_H^{-it} = \ln(P) - 2 \pi t$, or looking at the associated one-parameter group,
	\[ \Delta_H^{it}P^{is}\Delta_H^{-it} = e^{-2\pi ts i}P^{is}. \]
We can recognise this as the Weyl form of the canonical commutation relations (taking derivatives gives, at least formally, the form $[\ln \Delta_H, \ln P] = 2\pi i$). By the Stone-von Neumann uniqueness theorem, up to unitary equivalence there is only one irreducible representation of the Weyl relations, namely the one associated to the standard pair defined in Example \ref{ex:ex1} and \ref{ex:ex1ctd} (or the unitarily equivalent \ref{ex:ex2} and \ref{ex:ex2ctd}):
\begin{example}
\label{ex:ex1ctd2}
The generator for $U_0$ defined in Example \ref{ex:ex1ctd} is the multiplication operator $P_0 = M[e^{\theta}]$, so that $\ln P_0 = M[\theta]$. On the other hand, taking the derivative of the action $(\Delta_{H_0}^{it}\psi)(\theta) = \psi(\theta - 2\pi t)$ in $t$ gives $\ln \Delta_{H_0} = 2\pi i \partial_{\theta}$. Any operator commuting with both taking derivatives as well as multiplication operators, must be constant, so we see that this representation is irreducible. This representation of the canonical commutation relations is called the Schrödinger representation.
\end{example}
\begin{example}
\label{ex:ex2ctd2}
The generator for $\widetilde{U}_0$ defined in Example \ref{ex:ex2ctd} can best be calculated as the Fourier transform of $P_0$ in Example \ref{ex:ex1ctd}. Since $\FT M[\theta]  \FT^* = i\partial_\lambda $, so $\dom \,\widetilde{P}_0 = H^2(\mathbb{S}_1)$ and $(\widetilde{P}_0\psi)(\lambda) = \psi(\lambda + i)$. That this is a positive operator can also explicitly be seen from the contour shift
	\begin{align*}
	\langle \psi, \widetilde{P}_0 \psi \rangle & = \int_\R \overline{\psi(\lambda)} \psi(\lambda + i) \, d\lambda = \int_\R \overline{\psi(\lambda + i/2)} \psi(\lambda + i/2) \, d\lambda \\
	& = \|\psi( \cdot + i/2) \|_{2}^2
	\end{align*}
\end{example}
Any non-degenerate standard pair is therefore up to unitary equivalence a multiple of the Schrödinger representation. What remains is to fit the modular conjugation into this picture. In Theorem \ref{thm:SvN} we realise the representation as a tensor product between the irreducible representation and a multiplicity space; we then show that the modular conjugation splits as $J_{H_0}$ on the irreducible factor and an anti-linear involution $J_\Kil$ on the multiplicity space. We do this by guessing $J_\Kil$, and then showing that we are only a unitary away from the actual anti-linear involution. 
\begin{theorem}[Stone-von Neumann uniqueness theorem, \cite{Neumann1931} {\cite[Cor. 5.2.15]{Bratteli1987}} {\cite[Thm. 4]{Derezinski2006}}]
\label{thm:SvN}
Let $(H,U)$ be a non-degenerate standard pair inside the Hilbert space $\Hil$. Then there exists a Hilbert space $\Kil$, an antiunitary involution $J_{\Kil}$ on $\Kil$ and a unitary map $\Vint: \Hil \rightarrow \Hil_0 \otimes \Kil$ such that
	\[ \Vint \Delta_H^{it} \Vint^* = \Delta_{H_0}^{it} \otimes 1, \quad \Vint J_H \Vint^* = J_{H_0}\otimes J_{\Kil}, \quad \Vint U(s) \Vint^* = U_0(s) \otimes 1 \]
where $\Hil_0, \Delta_{H_0}^{it}, J_{H_0}$ and $U_0(s)$ are defined as in Examples \ref{ex:ex1} and \ref{ex:ex1ctd}.
\end{theorem}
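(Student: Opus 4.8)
The plan is to isolate the genuinely new content — the behavior of the modular conjugation — and reduce it to irreducibility of the Schrödinger representation together with a commutant argument, after first disposing of the rest via the classical uniqueness theorem. By the discussion preceding the statement, non-degeneracy guarantees that $0$ is not an eigenvalue of $P$, so that $\ln P$ is a well-defined self-adjoint operator, and Borchers' theorem (Theorem \ref{thm:Borchers}) then yields the Weyl form of the canonical commutation relations for the pair $(\ln P, \ln \Delta_H)$. The Stone--von Neumann theorem in the cited form produces a Hilbert space $\Kil$ and a unitary $\Vint : \Hil \to \Hil_0 \otimes \Kil$ with $\Vint \Delta_H^{it}\Vint^* = \Delta_{H_0}^{it}\otimes 1$ and $\Vint U(s)\Vint^* = U_0(s)\otimes 1$, where $(\Hil_0,\Delta_{H_0}^{it},U_0)$ is the concrete model of Examples \ref{ex:ex1} and \ref{ex:ex1ctd}. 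This establishes the first and third intertwining identities; only the second remains.

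I would then study $\widetilde{J} := \Vint J_H \Vint^*$, an antiunitary involution on $\Hil_0 \otimes \Kil$. Theorem \ref{thm:Tomita} gives $J_H \Delta_H^{it} = \Delta_H^{it} J_H$, and Borchers' theorem gives $J_H U(s) J_H = U(-s)$; transporting these through $\Vint$ shows that $\widetilde{J}$ commutes with $\Delta_{H_0}^{it}\otimes 1$ and satisfies $\widetilde{J}(U_0(s)\otimes 1)\widetilde{J} = U_0(-s)\otimes 1$. Now fix any conjugation (antiunitary involution) $C_0$ on $\Kil$, which exists by coordinatewise conjugation in an orthonormal basis. Then $J_{H_0}\otimes C_0$ is a well-defined antiunitary involution on $\Hil_0\otimes\Kil$, and since $J_{H_0}$ commutes with $\Delta_{H_0}^{it}$ and $J_{H_0} U_0(s) J_{H_0} = U_0(-s)$ (Theorem \ref{thm:Borchers} applied to $(H_0,U_0)$, or a direct check), it obeys exactly the same two relations as $\widetilde{J}$.

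Consequently $Z := \widetilde{J}\,(J_{H_0}\otimes C_0)$ is unitary (a composition of two antiunitaries) and commutes with both $\Delta_{H_0}^{it}\otimes 1$ and $U_0(s)\otimes 1$ for all $s,t$. Because the Schrödinger pair $(\Delta_{H_0}^{it},U_0(s))$ is irreducible (Example \ref{ex:ex1ctd2}), the von Neumann algebra generated by $\{\Delta_{H_0}^{it}\otimes 1\}\cup\{U_0(s)\otimes 1\}$ is $B(\Hil_0)\otimes 1$, whose commutant is $1\otimes B(\Kil)$; hence $Z = 1\otimes u$ for some unitary $u$ on $\Kil$. Then $\widetilde{J} = Z\,(J_{H_0}\otimes C_0) = J_{H_0}\otimes(uC_0)$, so setting $J_{\Kil} := uC_0$ gives an antiunitary operator with $\Vint J_H \Vint^* = J_{H_0}\otimes J_{\Kil}$, while $J_{\Kil}^2 = 1$ follows from $\widetilde{J}^2 = 1$ and $J_{H_0}^2 = 1$.

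The main obstacle I expect lies in the very first step: one must check carefully that non-degeneracy is precisely what lets one pass from Borchers' relation $\Delta_H^{it}P\Delta_H^{-it} = e^{-2\pi t}P$ to honest self-adjoint generators satisfying the Weyl relations, and that the uniqueness theorem can be arranged to deliver the specific realization of Examples \ref{ex:ex1}--\ref{ex:ex1ctd} (the normalization being fixed by Borchers' theorem together with positivity of $P$). Once this is in place, the remaining arguments — the antilinear bookkeeping for $J_{H_0}\otimes C_0$ and the commutant computation — are routine.
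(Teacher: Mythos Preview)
Your proposal is correct and follows essentially the same approach as the paper: apply the classical Stone--von Neumann theorem to the Weyl pair $(\Delta_H^{it},P^{is})$ obtained from Borchers' theorem and non-degeneracy, then handle $J_H$ by composing $\Vint J_H \Vint^*$ with a reference conjugation $J_{H_0}\otimes C_0$ and using irreducibility of the Schr\"odinger representation to identify the resulting unitary as $1\otimes u$. The only cosmetic difference is the order in which you compose (the paper forms $(J_{H_0}\otimes J_\Kil)\Vint J_H\Vint^*$ rather than $\Vint J_H\Vint^*(J_{H_0}\otimes C_0)$), which is immaterial.
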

\begin{proof}
Let $P$ be the (positive) generator of $U$, and $P_0$ of $U_0$. Then $\Delta_H^{it}$ and $P^{is}$ satisfy the Weyl relations, so by the Stone-von Neumann theorem {\cite[Thm. 4]{Derezinski2006}} there exists a Hilbert space $\Kil$ and a unitary map $\Vint: \Hil \rightarrow \Hil_0 \otimes \Kil$ such that 
	\[ \Vint \Delta_H^{it} \Vint^* = \Delta_{H_0}^{it} \otimes 1 \quad \text{and} \quad \Vint P^{is} \Vint^* = P_0^{is} \otimes 1.  \]
By the functional calculus, the latter implies $\Vint U(s) \Vint^* = U_0(s) \otimes 1$ for all $s \in \R$. We note that the commutant of the representation generated by $\Delta_{H_0}^{it}$ and $P_0^{is}$ is $\C \cdot 1_{\Hil_0}$ because of irreducibility, so if $A \in B(\Hil)$ commutes with all $\Delta_H^{it}$ and $P^{is}$, then there must be a $A_{\Kil} \in B(\Kil)$ such that $\Vint A \Vint^* = 1 \otimes A_{\Kil}$. By Theorem \ref{thm:Borchers} we know that $J_HU(s)J_H = U(-s)$ and by Theorem \ref{thm:Tomita} we have that $J_H\Delta_H^{it}J_H = \Delta_H^{it}$. Similarly, $J_{H_0}U_0(s)J_{H_0} = U_0(-s)$ and $J_{H_0}\Delta_{H_0}^{it}J_{H_0} = \Delta_{H_0}^{it}$, so for any anti-unitary involution $J_\Kil$ the map $(J_{H_0} \otimes J_\Kil)\Vint J_H \Vint^*$ is a unitary operator that commutes with both $\Delta_{H_0}^{it} \otimes 1$ and $U_0(s) \otimes 1$ for $t,s \in \R$, meaning that there is a unitary operator $U_\Kil \in \Uni(\Kil)$ such that
	\[ (J_{H_0} \otimes J_{\Kil})\Vint J_H \Vint^* = 1 \otimes U_\Kil  \quad \text{ so } \quad \Vint J_H \Vint^* = J_{H_0} \otimes J_{\Kil}U_{\Kil}. \]
Because $J_H$ is an involution, we must have that $J_\Kil U_\Kil$ is also an involution.
\end{proof}
\begin{remark*}
Note that any two anti-unitary involutions are unitarily equivalent: if $J_1$ and $J_2$ are anti-unitary involutions, then $U := J_1J_2$ is a unitary, which satisfies $J_1 U = J_2 = J_2^* = U^*J_1$, so $J_1 U J_1 = U^*$. Let $f: \{ z\in \C \mid |z| = 1\} \rightarrow [-\pi, \pi)$ be the argument function, so that $U = e^{if(U)}$. We also have $J_1f(U)J_1 = f(U^*) = -f(U)$. So define $V = e^{if(U)/2}$; then $J_1VJ_1 = V^*$ and $V^2 = U$, meaning that $V^*J_1V = J_1V^2 = J_2$. This means that we can modify the $\Vint$ we get in Theorem \ref{thm:SvN} so that we can choose $J_\Kil$ to be any anti-unitary involution on $\Kil$.
\end{remark*}
\begin{remark*}
From the characterization in Theorem \ref{thm:SvN} we can read off some general properties of standard pairs. For example, for any non-degenerate standard pair $(H,U)$ the generator $P$ of $U$ has no eigenvalues, because the generator of $U_0$ has none. Also, we note that the representation is irreducible if and only if the von Neumann algebra generated by $\Delta_H^{it}$ is maximally abelian. More specifically, we have the following:
\end{remark*}
\begin{corollary}
\label{cor:multiplicity}
If $(H,U_1)$ and $(H,U_2)$ are non-degenerate standard pairs, than their induced representations (as in Theorem \ref{thm:SvN}) have the same multiplicity.
\end{corollary}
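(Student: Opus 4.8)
The plan is to notice that the multiplicity produced by Theorem \ref{thm:SvN} is in fact already determined by the pair $\bigl(\{\Delta_H^{it}:t\in\R\}'',\Hil\bigr)$, which involves no one-parameter group at all; hence it cannot matter whether we run Theorem \ref{thm:SvN} on $(H,U_1)$ or on $(H,U_2)$.

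Concretely, for $j=1,2$ let $\Kil_j$ and $\Vint_j\colon\Hil\to\Hil_0\otimes\Kil_j$ be as supplied by Theorem \ref{thm:SvN} applied to $(H,U_j)$, so that the multiplicity of the $j$-th induced representation is $\dim\Kil_j$. Put $\mathcal{M}:=\{\Delta_H^{it}:t\in\R\}''\subset B(\Hil)$ and $\mathcal{M}_0:=\{\Delta_{H_0}^{it}:t\in\R\}''\subset B(\Hil_0)$; by Example \ref{ex:ex1} (made explicit in Example \ref{ex:ex2}) the algebra $\mathcal{M}_0$ is maximally abelian, i.e.\ $\mathcal{M}_0'=\mathcal{M}_0$. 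Taking bicommutants of $\Vint_j\Delta_H^{it}\Vint_j^*=\Delta_{H_0}^{it}\otimes 1$ gives $\Vint_j\mathcal{M}\Vint_j^*=\mathcal{M}_0\,\overline{\otimes}\,\C 1_{\Kil_j}$, so taking commutants and using $\mathcal{M}_0'=\mathcal{M}_0$ yields
\[
\Vint_j\,\mathcal{M}'\,\Vint_j^*=\bigl(\mathcal{M}_0\,\overline{\otimes}\,\C 1_{\Kil_j}\bigr)'=\mathcal{M}_0\,\overline{\otimes}\,B(\Kil_j),\qquad j=1,2.
\]
Since the algebra $\mathcal{M}'$ on the left depends only on $H$, this exhibits $\mathcal{M}_0\,\overline{\otimes}\,B(\Kil_1)$ and $\mathcal{M}_0\,\overline{\otimes}\,B(\Kil_2)$ as spatially isomorphic von Neumann algebras.

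It then remains to read off $\dim\Kil_1=\dim\Kil_2$ from this isomorphism. Both algebras are type $I$ with centre isomorphic to $\mathcal{M}_0$, homogeneous of degree $\dim\Kil_1$ resp.\ $\dim\Kil_2$; since the degree of homogeneity of a homogeneous type $I$ von Neumann algebra is an isomorphism invariant, we are done. This last step — the well-definedness of the multiplicity — is the only non-elementary ingredient, and I would invoke the structure theory of type $I$ von Neumann algebras (equivalently, uniqueness of uniform multiplicity in spectral multiplicity theory) for it; alternatively, realising $\mathcal{M}_0$ as $L^\infty(\R)$ acting on $L^2(\R)$ via Example \ref{ex:ex2} turns $\mathcal{M}_0\,\overline{\otimes}\,B(\Kil_j)$ into the algebra of decomposable operators on $L^2(\R,\Kil_j)$, and a spatial isomorphism of these must be implemented by a measurable field of unitaries $\Kil_1\to\Kil_2$, forcing $\dim\Kil_1=\dim\Kil_2$. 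The remaining manipulations with commutants of tensor products are routine.
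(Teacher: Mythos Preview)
Your proposal is correct, and the underlying observation---that the multiplicity is already fixed by $\{\Delta_H^{it}:t\in\R\}''$ and hence cannot depend on which $U_j$ one feeds into Theorem~\ref{thm:SvN}---is exactly the point. The paper, however, does not pass through commutants and the structure theory of homogeneous type~$I$ algebras; it carries out precisely what you sketch as your ``alternatively''. Concretely, it forms the unitary $\Vint:=\Vint_2\Vint_1^*$ (in the Fourier picture of Example~\ref{ex:ex2}, so that $\Delta_{\widetilde H_0}^{it}$ acts by multiplication), notes that $\Vint$ intertwines all spectral projections $\chi_{[a,b]}\otimes 1$, concludes that $\Vint$ is multiplication by a measurable $B(\Kil_1,\Kil_2)$-valued function, and reads off $\Kil_1\cong\Kil_2$ from unitarity in almost every fibre. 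Your primary route via the isomorphism invariance of the homogeneity degree is valid but brings in more machinery than needed; the paper's argument---your alternative---is shorter and entirely elementary.
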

\begin{proof}
We identify $L^2(\R) \otimes \Hil \cong L^2(\R,\Hil)$ for all Hilbert spaces $\Hil$. If $\Kil_1$ is the multiplicity space of $(H,U_1)$ and $\Kil_2$ for $(H,U_2)$, we have by Theorem \ref{thm:SvN} a unitary $\Vint: L^2(\R, \Kil_1) \rightarrow L^2(\R, \Kil_2)$ such that $\Vint \Delta_{\widetilde{H}_0}^{it} \otimes 1_{\Kil_1} \Vint^* = \Delta_{\widetilde{H}_0}^{it} \otimes 1_{\Kil_2}$. Therefore 
	\[ \Vint \left(\chi_{[a,b]} \otimes 1_{\Kil_1} \right) \Vint^* = \chi_{[a,b]} \otimes 1_{\Kil_2} \]
so $\Vint$ is given by multiplication with a function $\R \rightarrow B(\Kil_1,\Kil_2)$. However, because $\Vint$ is unitary, it must be unitary in almost each fibre, meaning that $\Kil_1 \cong \Kil_2$.
\end{proof}

\section{Spectral characterization of inclusions of standard pairs}
\label{sec:MainResult}
The main result of this paper is the following characterization of inclusions of standard pairs by the inclusion of corresponding spectral subspaces:
\begin{theorem}
\label{thm:main}
Let $H \subset \Hil$ be a standard subspace, and $(H,U_1)$ and $(H,U_2)$ be non-degenerate standard pairs. Let also $E_j$ be the spectral measure associated to $P_j$ for $j=1,2$. Then
	\[ E_1[(0,1)] \leq E_2[(0,1)] \quad \Leftrightarrow \quad U_1(1)H \subset U_2(1)H \]
\end{theorem}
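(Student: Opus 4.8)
The plan is to route the subspace inclusion through Proposition~\ref{prop:InclusionChar}, recognise the resulting analytic condition as a contractivity statement about the holomorphic semigroups $\sigma\mapsto e^{-\sigma P_j}$, and finally recast that as the asserted spectral inclusion.

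Set $K_j:=U_j(1)H$. Combining Wiesbrock's theorem ($\Delta_{K_j}^{it}=U_j(1-e^{-2\pi t})\Delta_H^{it}$, Theorem~\ref{thm:Wiesbrock}), Borchers' theorem ($\Delta_H^{-it}U_j(a)\Delta_H^{it}=U_j(e^{2\pi t}a)$ and $J_HU_j(s)J_H=U_j(-s)$, Theorem~\ref{thm:Borchers}) and $J_{K_j}=U_j(1)J_HU_j(-1)$, one computes, for $t\in\R$,
\[ \Delta_{K_2}^{-it}\Delta_{K_1}^{it}=U_2(1-e^{2\pi t})\,U_1(e^{2\pi t}-1),\qquad \Delta_{K_2}^{-it}J_{K_2}J_{K_1}\Delta_{K_1}^{it}=U_2(1+e^{2\pi t})\,U_1(-1-e^{2\pi t}). \]
The map $t\mapsto z:=e^{2\pi t}-1$ is a biholomorphism of $\mathbb{S}_{1/2}$ onto $\C_+$ carrying $\R$ onto $(-1,\infty)$ and $\R+\tfrac i2$ onto $(-\infty,-1)$, and under it both families above become the single expression $G(z):=U_2(-z)U_1(z)$, $z\in\R$. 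Hence Proposition~\ref{prop:InclusionChar} says exactly that $K_1\subset K_2$ if and only if $G$ extends from $\R$ to a bounded, so-continuous, $B(\Hil)$-valued function on $\overline{\C_+}$ that is analytic in $\C_+$; since $G(z)$ is unitary for $z\in\R$, any such extension is automatically contractive by the maximum principle.

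Next I would show that $G$ extends in this way if and only if $\|e^{\sigma P_2}e^{-\sigma P_1}\|\le 1$ for all $\sigma>0$ --- equivalently for one $\sigma$, since conjugating by $\Delta_H^{it}$ rescales $P_1$ and $P_2$ by the same factor and hence leaves this norm unchanged. For the forward implication: such an extension is unique (Schwarz reflection across $\R$), and evaluating it at $z=i\sigma$ against $\chi$ entire for $P_2$ and $\psi$ in a bounded spectral subspace of $P_1$ gives $\langle\chi,G(i\sigma)\psi\rangle=\langle e^{\sigma P_2}\chi,e^{-\sigma P_1}\psi\rangle$; as $\|G(i\sigma)\|\le 1$, a standard core/adjoint argument yields $\dom e^{\sigma P_1}\subseteq\dom e^{\sigma P_2}$ and $\|e^{\sigma P_2}e^{-\sigma P_1}\|\le 1$. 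For the converse, this inequality makes $e^{\sigma P_2}e^{-\sigma P_1}$ an everywhere-defined contraction, so one puts $G(a+i\sigma):=e^{-iaP_2}\big(e^{\sigma P_2}e^{-\sigma P_1}\big)e^{iaP_1}$ and checks analyticity and the correct so-continuous boundary values, using that $\bigcup_{c>0}\operatorname{ran}E_1[(0,c)]$ is dense and that $e^{\sigma P_2}$ is bounded on $\operatorname{ran}E_1[(0,c)]$ by $\|e^{\sigma P_1}|_{\operatorname{ran}E_1[(0,c)]}\|\le e^{\sigma c}$. I expect this identification of the abstract continuation of $G$ with the a priori only densely defined product $U_2(-z)U_1(z)$ --- juggling domains, cores, holomorphic semigroups and reflection --- to be the main technical obstacle; the reparametrisation above and the spectral manipulations below are routine once the right formulas are in hand.

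It then remains to prove $\|e^{\sigma P_2}e^{-\sigma P_1}\|\le 1\iff E_1[(0,1)]\le E_2[(0,1)]$. By Borchers' theorem $\Delta_H^{-it}E_j[(0,s)]\Delta_H^{it}=E_j[(0,e^{-2\pi t}s)]$, so the spectral inclusion is equivalent to $E_2[[u,\infty)]\le E_1[[u,\infty)]$ for all $u>0$. Plugging this into the layer-cake identity
\[ \|e^{\sigma P}\phi\|^2=\|\phi\|^2+2\sigma\int_0^\infty e^{2\sigma u}\,\big\|E[(u,\infty)]\phi\big\|^2\,du \qquad(\text{valid in }[0,+\infty]) \]
for $P=P_1,P_2$ gives $\|e^{\sigma P_2}\phi\|\le\|e^{\sigma P_1}\phi\|$ for all $\phi\in\dom e^{\sigma P_1}$, i.e.\ the semigroup bound. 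Conversely, if $\psi\in\operatorname{ran}E_1[(0,1)]$ then $\|e^{\sigma P_1}\psi\|\le e^\sigma\|\psi\|$, hence $\|e^{\sigma P_2}\psi\|\le e^\sigma\|\psi\|$ for every $\sigma>0$, which by the same identity forces the $P_2$-spectral measure of $\psi$ to be supported in $[0,1]$; since $P_j$ has no kernel by non-degeneracy, this means $\psi\in\operatorname{ran}E_2[(0,1)]$, and therefore $E_1[(0,1)]\le E_2[(0,1)]$.
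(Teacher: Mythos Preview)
Your proposal is correct and shares with the paper the opening reparametrisation step: both recognise, via Proposition~\ref{prop:InclusionChar} and the biholomorphism $t\mapsto e^{2\pi t}-1$, that $U_1(1)H\subset U_2(1)H$ is equivalent to $G(z)=U_2(-z)U_1(z)$ having a bounded analytic extension to $\C_+$ (this is the paper's Lemma~\ref{lem:StdPairExtension}; note that the paper spends Lemma~\ref{lem:PLUniform} on so-continuity at the missing boundary point $z=-1$, which you skip, but your subsequent use of $G$ only needs analyticity in the interior and contractivity of the a.e.\ boundary values, so this is harmless).

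The two proofs diverge after that. For ``spectral inclusion $\Rightarrow$ semigroup contractivity'' the paper (Proposition~\ref{prop:ProjImpliesUnifBdd}) chops the spectrum of $P_1$ into intervals of width $\delta$, estimates cross terms by Cauchy--Schwarz, and optimises $\delta$ to get a uniform bound, then upgrades to $\|G(z)\|\le 1$ via the three-lines theorem. Your layer-cake identity does the same job in one line and yields the sharp bound $\|e^{\sigma P_2}\phi\|\le\|e^{\sigma P_1}\phi\|$ directly; this is a genuine simplification. In the other direction, the paper (Theorem~\ref{thm:StdtoC}) pairs vectors from compact spectral subspaces of $P_1$ and $P_2$, observes that $z\mapsto\langle\psi_2,G(z)\psi_1\rangle$ is then entire and bounded, and concludes orthogonality by Liouville; you instead first extract the semigroup bound from $G$ and then argue that $\|e^{\sigma P_2}\psi\|\le e^{\sigma}\|\psi\|$ for all $\sigma>0$ forces the $P_2$-spectral measure of $\psi$ into $[0,1]$. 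Both are valid; the paper's Liouville argument is slightly more self-contained (it avoids the domain/core manoeuvre you flag as the main obstacle), while your route has the advantage of isolating the clean intermediate equivalence $\|e^{\sigma P_2}e^{-\sigma P_1}\|\le 1$, which is exactly the characterisation of \cite{BORCHERS1997} referenced in Remark~\ref{rem:Borchers}.
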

\begin{remark}
\label{rem:Borchers}
Before proving this result, we compare it to the characterization of inclusions of half-sided modular inclusions in \cite{BORCHERS1997}. There it is proven, that for standard pairs of von Neumann algebras $(\Acal, U_1)$ and $(\Acal, U_2)$ (i.e. $U_j$ is positively generated and $U_j(s)\Acal U_j(-s) \subset \Acal$ for $s \geq 0$) that $U_1(1)\Acal U_1(-1) \subset U_2(1) \Acal U_2(-1)$ if and only if $e^{-P_1 t} \leq e^{-P_2 t}$ for all $t \geq 0$ (\cite[Thm. 3.5]{BORCHERS1997}). The corresponding result for standard subspaces follows from Lemma \ref{lem:PLUniform}, together with the argument in Theorem \ref{thm:CtoStd} that uniform boundedness in the upper half plane implies analyticity and so-continuity. Compared to \cite{BORCHERS1997}, our methods have the advantage of relying only on 1-dimensional complex analysis, instead of 2-dimensional complex analysis (similar to the way \cite[Thm. 1]{Florig1998} simplifies the proof from \cite[Thm. II.9]{Borchers1992}). It should also be noted that the standard subspace version of Borchers' result might be strictly stronger, because as remarked in Example \ref{ex:VNActd}, given a von Neumann algebra $\Acal$ with standard vector $\Omega$ there might be half-sided modular inclusions in $H_{\Acal,\Omega}$ that do not come from half-sided modular inclusions in $\Acal$. 

Moreover, the characterization in terms of spectral projections in Theorem \ref{thm:main} allows one to more concretely work with these types of inclusions of standard subspaces, as an inclusion of complex subspaces is in general more tractable than an inclusion of standard subspaces. For example, because of Theorem \ref{thm:main} we can show that, although an inequality of the generators is implied by an inclusion of standard pairs (Corollary \ref{cor:GenIneq}), it is not sufficient (see Example \ref{ex:GenIneqCounter}). Also, through the characterization of inclusions of half-sided modular inclusions by symmetric inner functions (Corollary \ref{cor:HSMIIrredVersion}) we can construct a concrete example of  $K_1 \subset H$ and $K_2 \subset H$ half-sided modular inclusions, where $K_1 \subset K_2$ is \emph{not} a half-sided modular inclusion (Example \ref{ex:nontriv}). Such an example is especially interesting in the context of the question raised in \cite{BORCHERS1997} whether it is inherent to quantum field theory that the standard pairs that correspond to localization regions commute. In any case, we conclude that Theorem \ref{thm:main} significantly improves upon the results in \cite{BORCHERS1997}.
\end{remark}

Let us now turn to the proof of Theorem \ref{thm:main}. We will treat the `$\Leftarrow$' direction (Theorem \ref{thm:StdtoC}) and the `$\Rightarrow$' direction (Theorem \ref{thm:CtoStd}) separately, as the proofs are quite different. On a technical level, however, both directions are based on a version of Proposition \ref{prop:InclusionChar} specifically adapted to standard pairs, namely Lemma \ref{lem:StdPairExtension}. For this we need a slight generalization of well-known result in complex analysis:
\begin{lemma}
\label{lem:PLUniform}
Let $A: \overline{\mathbb{S}_{\frac{1}{2}}} \rightarrow B(\Hil)$ be a bounded so-continuous map that is analytic on $\mathbb{S}_\frac{1}{2}$ and $A_0 \in B(\Hil)$ such that
	\[ \lim_{t \rightarrow -\infty} A(t) = \lim_{t \rightarrow -\infty}  A(t+\tfrac{i}{2}) = A_0 \]
in the strong operator topology. Then $\lim_{t \rightarrow -\infty} A(t + iy) = A_0$ in the strong operator topology, uniformly in each seminorm for all $0 \leq y \leq \frac{1}{2}$.
\end{lemma}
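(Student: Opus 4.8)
The plan is to reduce the operator-valued statement to a scalar statement by testing against vectors, and then invoke the classical Phragmén–Lindelöf / Montel-type argument for uniformly bounded analytic functions on a strip. Fix $\xi, \eta \in \Hil$ and consider the scalar function $f(z) := \langle \eta, A(z)\xi \rangle$, which is bounded and continuous on $\overline{\mathbb{S}_{1/2}}$ and analytic on $\mathbb{S}_{1/2}$. The hypothesis gives $f(t) \to \langle \eta, A_0 \xi\rangle$ and $f(t + \tfrac{i}{2}) \to \langle \eta, A_0\xi\rangle$ as $t \to -\infty$. I want to conclude $f(t + iy) \to \langle \eta, A_0 \xi \rangle$ uniformly in $0 \le y \le \tfrac12$; replacing $f$ by $f - \langle \eta, A_0\xi\rangle$, it suffices to treat the case where the two boundary limits are $0$.

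For the scalar claim, the standard device is a normal-families argument: suppose it fails, so there is $\varepsilon > 0$ and a sequence $z_n = t_n + iy_n$ with $t_n \to -\infty$ and $|f(z_n)| \ge \varepsilon$. Translate, setting $f_n(z) := f(z + t_n)$; these are analytic on $\mathbb{S}_{1/2}$, continuous on the closure, and uniformly bounded by $\sup \|A\| \cdot \|\xi\|\|\eta\|$, hence form a normal family on the open strip by Montel's theorem. Pass to a subsequence converging locally uniformly to an analytic $g$ on $\mathbb{S}_{1/2}$. One then checks that $g$ extends continuously to the boundary lines with $g \equiv 0$ there — this is where a little care is needed, since locally uniform convergence on the open strip does not immediately control boundary behavior; the clean way is to note that each $f_n$ is itself the boundary-value realization of a bounded analytic function, so by the Poisson representation on the strip, the values of $f_n$ in the interior are averages of the (translated) boundary data, and the boundary data of $f_n$ tend to $0$ locally uniformly as $n \to \infty$; feeding this into the Poisson kernel estimate shows $g \equiv 0$, contradicting $|f_n(z_n')| \ge \varepsilon$ where $z_n' = iy_n$ lies in a compact subset of the closed strip (after a further subsequence so that $y_n \to y_\infty$). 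Actually the cleanest route avoids normal families entirely: write $f$ via the Poisson integral formula for the strip $\mathbb{S}_{1/2}$ in terms of its two boundary functions $f(\cdot)$ and $f(\cdot + \tfrac i2)$, and then estimate directly. Since the Poisson kernel $P_y(t - u)$ for the strip decays and integrates to something bounded, and the boundary data tends to $0$ as $u \to -\infty$, splitting the integral into $u < -R$ and $u \ge -R$ and using boundedness of $f$ on the far piece gives $|f(t + iy)| \to 0$ as $t \to -\infty$, uniformly in $y$, by dominated convergence after the translation. This is the main obstacle: getting the Poisson representation in the correct normalization for the strip and verifying that the kernel estimates are genuinely uniform in $y \in [0, \tfrac12]$.

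Having established the scalar uniform limit, I upgrade to the strong operator topology. Fix $\xi \in \Hil$; I must show $\|A(t+iy)\xi - A_0\xi\| \to 0$ uniformly in $y$. Apply the scalar result to $g_t := A(t + \cdot)\xi - A_0\xi$ as a vector-valued function: more precisely, the map $z \mapsto A(z)\xi$ is a bounded continuous $\Hil$-valued function on $\overline{\mathbb{S}_{1/2}}$, analytic inside, whose two boundary traces converge in norm to $A_0\xi$ as $t \to -\infty$ (this is exactly the so-convergence hypothesis). The Poisson representation argument above works verbatim for Banach-space-valued functions — the Poisson integral of a bounded $\Hil$-valued boundary function reproduces the interior values (by testing against functionals and using uniqueness), and the same split-and-dominate estimate gives $\|A(t+iy)\xi - A_0\xi\| \to 0$ uniformly in $y \in [0,\tfrac12]$. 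Since the seminorms defining the strong operator topology on $B(\Hil)$ are exactly $A \mapsto \|A\xi\|$, this is the assertion of the lemma. I would note in passing that one could alternatively invoke a three-lines-type theorem directly, but the Poisson representation makes the uniformity in $y$ transparent.
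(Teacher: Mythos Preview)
Your approach via the Poisson representation for the strip is correct and complete in outline: the kernel for $\mathbb{S}_{1/2}$ has the form $\sin(2\pi y)/(\cosh(2\pi s)\pm\cos(2\pi y))$, whose tail is dominated by $(\cosh(2\pi s)-1)^{-1}$ uniformly in $y\in[0,\tfrac12]$, so the split-and-dominate estimate goes through; and you are right that the argument transfers verbatim to $\Hil$-valued functions, which is what is actually needed (the scalar reduction you start with would only yield weak convergence). The paper, however, proceeds quite differently. It adapts a Phragm\'en--Lindel\"of argument from Boas: fix $\psi\in\Hil$, choose $t_0$ so that the boundary seminorms are below $\varepsilon$ for $t<t_0$, then multiply by the auxiliary factors $\frac{z}{z-\mu}$ (to make the function small on the vertical segment $\re z=t_0$) and $\frac{1}{z-\lambda}$ (to force decay at $-\infty$), apply the maximum principle on the resulting bounded half-strip, and finally let $\lambda\to\infty$ to recover a bound $\|A(z)\psi\|\le(1+\mu/|z|)\varepsilon$. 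This is more elementary in that it uses only the maximum principle on bounded domains and never writes down the Poisson kernel; your route is more systematic and makes the uniformity in $y$ transparent, at the cost of invoking the explicit kernel and its estimates.
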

The proof is a straightforward application of the maximum principle, which also holds for vector-valued analytic functions; it can be found in Appendix \ref{app:PLUniform}.
\begin{lemma}
\label{lem:StdPairExtension}
Let $H \subset \Hil$ be a standard subspace, and $(H,U_1)$ and $(H,U_2)$ be standard pairs. Then $U_1(1)H \subset U_2(1)H$ if and only if the map
	\[ F: \R \rightarrow B(\Hil), \quad s \mapsto U_2(-s)U_1(s)  \]
extends to a bounded so-continuous function on $\overline{\C_+}$ that is analytic on $\C_+$.
\end{lemma}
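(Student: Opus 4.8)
The plan is to mimic the proof of Proposition \ref{prop:InclusionChar} (the Borchers--Araki inclusion criterion), but transport everything through the conjugation of the modular flow of $H$ with the one-parameter groups $U_1$ and $U_2$, using Borchers' theorem (Theorem \ref{thm:Borchers}) to convert the strip $\mathbb{S}_{\frac12}$ into the half-plane $\C_+$. Concretely, I would start from the observation that $U_1(1)H \subset U_2(1)H$ is an inclusion of standard subspaces, so by Proposition \ref{prop:InclusionChar} it is equivalent to the statement that $t \mapsto \Delta_{U_2(1)H}^{-it}\Delta_{U_1(1)H}^{it}$ extends suitably to $\overline{\mathbb{S}_{\frac12}}$. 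The point is to identify the modular objects of $K_j := U_j(1)H$ explicitly. By Theorem \ref{thm:Wiesbrock} (read in reverse, or directly: $K_j = U_j(1)H$ is a half-sided modular inclusion in $H$ with $\Delta_{K_j}^{it}\Delta_H^{-it} = U_j(1 - e^{-2\pi t})$), one has $\Delta_{K_j}^{it} = U_j(1)\Delta_H^{it}U_j(-1)$ and $J_{K_j} = U_j(1)J_HU_j(-1)$, simply because conjugating a standard subspace by a unitary conjugates all its modular data. Substituting these, a short computation using $\Delta_H^{-it}U_j(s)\Delta_H^{it} = U_j(e^{2\pi t}s)$ (Theorem \ref{thm:Borchers}) collapses $\Delta_{K_2}^{-it}\Delta_{K_1}^{it}$ into a pure product of $U_1$'s and $U_2$'s; I expect it to take the shape $U_2(1)\Delta_H^{-it}U_2(-1)U_1(1)\Delta_H^{it}U_1(-1) = U_2(1)\,U_2(-e^{2\pi t})U_1(e^{2\pi t})\,U_1(-1)$, i.e. up to the fixed unitaries $U_2(1)$ and $U_1(-1)$ it equals $F(e^{2\pi t})$ with $F(s) = U_2(-s)U_1(s)$ — and the diagonal boundary value at $t + \tfrac{i}{2}$ should, via the $J$-part of Proposition \ref{prop:InclusionChar}, correspond to the reflected values $F(-e^{2\pi t})$.

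With that dictionary in hand, the two directions follow. For the forward direction ($U_1(1)H \subset U_2(1)H \Rightarrow F$ extends), Proposition \ref{prop:InclusionChar} gives a bounded so-continuous extension of $t \mapsto \Delta_{K_2}^{-it}\Delta_{K_1}^{it}$ to $\overline{\mathbb{S}_{\frac12}}$, analytic inside, with a controlled boundary on the line $\im = \tfrac12$; pulling back along the biholomorphism $z \mapsto e^{2\pi z}$ (which maps $\mathbb{S}_{\frac12}$ onto $\C_+$, the real line to $(0,\infty)$ and the line $\im z = \tfrac12$ to $(-\infty,0)$) and stripping off the constant unitaries $U_2(1)$, $U_1(-1)$ yields exactly a bounded so-continuous analytic extension of $F$ to $\overline{\C_+}$. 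Here one must check that the boundary values glue continuously at $s = 0$, which is where the hypothesis that both boundary pieces of the Borchers--Araki extension are controlled gets used — this is the role played by Lemma \ref{lem:PLUniform} (uniform so-limits at the relevant cusp), and invoking it is the cleanest way to see that the positive-axis data $F(s)$ and the negative-axis data match up into a single so-continuous function on $\overline{\C_+}$. For the converse, given such an $F$ on $\overline{\C_+}$, one runs the same change of variables in reverse: set $G(z) = U_2(1)F(e^{2\pi z})U_1(-1)$ on $\overline{\mathbb{S}_{\frac12}}$, verify it is bounded, so-continuous, analytic inside, agrees with $\Delta_{K_2}^{-it}\Delta_{K_1}^{it}$ on $\R$ and has the boundary value on $\im = \tfrac12$ demanded by Proposition \ref{prop:InclusionChar}, and conclude $K_1 \subset K_2$.

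The main obstacle I anticipate is bookkeeping rather than conceptual: getting the algebra of the substitution exactly right, in particular the signs and the placement of the fixed unitaries $U_j(\pm 1)$, and making sure the domain issues are harmless (all the relevant operators $U_j(s)$ are bounded and everything is phrased in the strong operator topology, so this should be routine once the identity $\Delta_{K_j}^{it} = U_j(1)\Delta_H^{it}U_j(-1)$ is justified). A secondary point needing a little care is the behaviour at the origin $s = 0$ and at $s \to \pm\infty$: the half-plane $\overline{\C_+}$ includes the single real line as its whole boundary, whereas the strip has two boundary lines, so the change of variables $s = e^{2\pi z}$ is what reconciles "one boundary function $F$ on $\R$" with "two boundary functions on $\partial\mathbb{S}_{\frac12}$", and one should state clearly that $s \mapsto F(s)$ for $s > 0$ encodes the $\R$-boundary of the strip while $s \mapsto F(s)$ for $s < 0$ encodes the $\im = \tfrac12$ boundary; continuity of $F$ across $s=0$ is then precisely the compatibility condition, and Lemma \ref{lem:PLUniform} is exactly the tool that supplies it.
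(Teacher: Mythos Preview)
Your proposal is correct and follows essentially the same route as the paper. The only cosmetic difference is that the paper uses the biholomorphism $g(z)=e^{2\pi z}-1$ (so that $\Delta_{K_2}^{-it}\Delta_{K_1}^{it}=F(g(t))$ directly, with the problematic boundary point at $s=-1$), whereas you use $z\mapsto e^{2\pi z}$ together with the outer unitaries $U_2(1),\,U_1(-1)$ (putting the problematic point at $s=0$); in both cases Lemma~\ref{lem:PLUniform} handles precisely this point, and the computations of $\Delta_{K_2}^{-it}\Delta_{K_1}^{it}$ and of the $J$-boundary value coincide with the paper's equations \eqref{eq:StdPairExtension1} and \eqref{eq:StdPairExtension2}.
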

\begin{proof}
Let $\Delta_j^{it}$ be the modular group associated to the standard subspace $U_j(1)H$ for $j = 1,2$. We note that by Theorem \ref{thm:Borchers}
	\begin{equation}
	\label{eq:StdPairExtension1}
	\Delta_2^{-it}\Delta_1^{it} = \Delta_2^{-it}\Delta_H^{it}\Delta_H^{-it}\Delta_1^{it} = U_2(1 - e^{2\pi t})U_1(e^{2\pi t} - 1)
	\end{equation}
and because $J_jJ_H = U_j(2)$ we have
	\begin{align}
	\Delta_2^{-it}J_2J_1\Delta_1^{it} & = \Delta_2^{-it}J_2J_HJ_HJ_1\Delta_1^{it} = \Delta_2^{-it}U_2(2)U_1(-2)\Delta_1^{it} \nonumber\\
	& = U_2(2e^{2\pi t})\Delta_2^{-it}\Delta_1^{it}U_1(-2e^{2\pi t}) \label{eq:StdPairExtension2} \\
	& = U_2(e^{2\pi t} + 1)U_1(-e^{2 \pi t} - 1). \nonumber
	\end{align}
In addition, we note that
	\[ g:  \mathbb{S}_{\frac{1}{2}} \rightarrow \C_+, \quad z \mapsto e^{2\pi z} - 1\]
is a biholomorphic map, with inverse $g^{-1}(z) = \frac{1}{2\pi} \ln(z + 1)$ (here we use the complex logarithm with its branch cut on the negative part of the real axis). This map extends to a homeomorphism $g: \overline{\mathbb{S}_{\frac{1}{2}}} \rightarrow \overline{\C_+}\setminus \{-1\}$ through
	\[ g(t) = e^{2\pi t} - 1, \quad g(t + \tfrac{i}{2}) = - e^{2 \pi t} - 1  \text{ for }t \in \R\]
and
	\[ g^{-1}(t) = \begin{cases}
	\frac{1}{2\pi} \ln(t + 1) \quad &\text{ for } t > -1 \\
	\frac{1}{2\pi} \ln(-(t+1)) + \frac{i}{2} \quad & \text{ for } t < -1
	\end{cases} \]
If we call $\widetilde{F}(t) := \Delta_2^{-it}\Delta_1^{it}$, then we see that $\widetilde{F}(t) = F(g(t))$ for $t \in \R$. 

First suppose that $F$ extends to a bounded so-continuous function on $\overline{\C_+}$ that is analytic in $\C_+$. Then $\widetilde{F} = F \circ g$ is a bounded so-continuous map on $\overline{\mathbb{S}_\frac{1}{2}}$ analytic on $\mathbb{S}_\frac{1}{2}$ (these properties are preserved by composition), so by Proposition \ref{prop:InclusionChar} we have $U_1(1)H \subset U_2(1)H$.

Next, suppose $U_1(1)H \subset U_2(1)H$. Then by Proposition \ref{prop:InclusionChar} $\widetilde{F}$ has a so-continuous and bounded extension to $\overline{\mathbb{S}_{\frac{1}{2}}}$ which is analytic in the interior and satisfies $\widetilde{F}(t + \frac{i}{2}) = \Delta_2^{-it}J_2J_1\Delta_1^{it}$. Since $F(s) = \widetilde{F}(g^{-1}(s))$ for all $s \in \R \setminus \{-1\}$, the function $F|_{\R \setminus \{-1\}}$ has a bounded so-continuous extension to $\overline{\C_+}\setminus \{-1\}$ that is analytic on $\C_+$ (namely $\widetilde{F} \circ g^{-1}$). In addition, because $U_1$ and $U_2$ are strongly continuous and bounded, we have that $F$ is continuous on $\R$. What remains to be shown, is that this means that $\widetilde{F} \circ g^{-1}$ can be so-continuously extended to $-1$.

By \eqref{eq:StdPairExtension1} and \eqref{eq:StdPairExtension2}, together with so-continuity and boundedness of $U_1$ and $U_2$, we know that
	\[ \lim_{t \rightarrow -\infty} \widetilde{F}(t) = \lim_{t \rightarrow -\infty} \widetilde{F}(t + \tfrac{i}{2}) = U_2(1)U_1(-1) \]
in the strong operator topology. Lemma \ref{lem:PLUniform} then guarantees for all $\psi \in \Hil$ and $\varepsilon > 0$ that there is a $t_0 \in \R$ such that for all $t < t_0$ and $0 \leq y \leq \frac{1}{2}$ we have
	\[ \| \widetilde{F}(t + iy)\psi - U_2(1)U_1(-1)\psi \| < \varepsilon\]
So for all $z \in \overline{C_+}$ such that $|-1 - z| < e^{t_0}$ we have that $\|F(z)\psi - U_2(1)U_1(-1)\psi\| < \varepsilon$. So $F$ is also so-continuous in $-1$.
\end{proof}

We now come to the first direction of the proof of Theorem \ref{thm:main}, namely that the inclusion of standard subspaces $U_1(1)H \subset U_2(1)H$ implies the inclusion of complex subspaces $E_1[(0,1)]\Hil \subset E_2[(0,1)]\Hil$ (or, equivalently, the inequality of projections $E_1[(0,1)] \leq E_2[(0,1)]$). The idea of the proof is to use the result in Lemma \ref{lem:StdPairExtension} to bound a function in the upper half plane, which we already know in general to be bounded in the lower half plane, meaning that it must be constant.
\begin{theorem}
\label{thm:StdtoC}
Let $H \subset \Hil$ be a standard subspace, and $(H,U_1)$ and $(H,U_2)$ be non-degenerate standard pairs. If $U_1(1)H \subset U_2(1)H$, then
	\[0 < a_1 \leq b_1 \leq a_2 \leq b_2 \Rightarrow E_1[(a_1,b_1)] \perp E_2[(a_2,b_2)]\]
where $E_j$ is the spectral measure associated to $P_j$ for $j = 1,2$. In particular, we have
	\[ E_1[(0,1)] \leq E_2[(0,1)]. \]
\end{theorem}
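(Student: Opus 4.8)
The plan is to fix vectors in the relevant spectral subspaces, build a single scalar holomorphic function out of $U_1$, $U_2$ and these vectors, show it is bounded on $\overline{\C_-}$ directly from the ordering of the spectra, bounded on $\overline{\C_+}$ via Lemma~\ref{lem:StdPairExtension}, hence constant by Liouville, and then force the constant to be $0$ by letting $\im s \to -\infty$.

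First I would reduce everything to the \emph{strict, bounded} case: if $0 < a_1 \le b_1 < a_2 \le b_2 < \infty$, then $E_1[(a_1,b_1)] \perp E_2[(a_2,b_2)]$. The general orthogonality statement (in particular the boundary case $b_1 = a_2$, and $b_2 = \infty$) then follows by taking strong limits of spectral projections, using that $P_1$ and $P_2$ have no eigenvalues (Remark after Theorem~\ref{thm:SvN}): for instance $E_1[(a_1,c)] = \lim_{\varepsilon \downarrow 0} E_1[(a_1, c - \varepsilon)]$ and $E_2[(c,\infty)] = \lim_{n\to\infty} E_2[(c,n)]$ strongly. The final inequality $E_1[(0,1)] \le E_2[(0,1)]$ also follows from this: since $P_j \ge 0$ with purely continuous spectrum we have $E_j[(0,1)] = 1 - E_j[(1,\infty)]$, so $E_1[(0,1)] \le E_2[(0,1)]$ is equivalent to $E_2[(1,\infty)]\Hil \subseteq E_1[(1,\infty)]\Hil$, i.e.\ to $E_1[(0,1)]\,E_2[(1,\infty)] = 0$, which is the orthogonality statement with $a_1 \downarrow 0$, $b_1 = a_2 = 1$, $b_2 = \infty$.

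For the strict bounded case, fix $\psi_1 = E_1[(a_1,b_1)]\psi_1$ and $\psi_2 = E_2[(a_2,b_2)]\psi_2$; these are entire vectors for the respective generators, so $s \mapsto U_1(s)\psi_1$ and $s \mapsto U_2(s)\psi_2$ extend to entire $\Hil$-valued functions, and the spectral calculus gives $\|U_1(s)\psi_1\| \le e^{-b_1 \im s}\|\psi_1\|$ and $\|U_2(\bar s)\psi_2\| \le e^{a_2 \im s}\|\psi_2\|$ whenever $\im s \le 0$. Consider
\[ \Phi(s) := \langle U_2(\bar s)\psi_2,\, U_1(s)\psi_1 \rangle, \qquad s \in \C. \]
Expanding both arguments in their convergent power series shows $\Phi$ is entire (the complex conjugate on the first argument is exactly what makes this holomorphic in $s$). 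For $\im s \le 0$ the two estimates give $|\Phi(s)| \le e^{(a_2 - b_1)\im s}\|\psi_1\|\|\psi_2\| \le \|\psi_1\|\|\psi_2\|$, since $a_2 - b_1 > 0$; so $\Phi$ is bounded on $\overline{\C_-}$. On $\R$ we have $\Phi(t) = \langle U_2(t)\psi_2, U_1(t)\psi_1\rangle = \langle \psi_2, U_2(-t)U_1(t)\psi_1\rangle = \langle \psi_2, F(t)\psi_1\rangle$, and by Lemma~\ref{lem:StdPairExtension} (this is where $U_1(1)H \subset U_2(1)H$ is used) $F$ extends to a bounded, so-continuous $G$ on $\overline{\C_+}$, analytic on $\C_+$. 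Hence $s \mapsto \langle \psi_2, G(s)\psi_1\rangle$ is bounded and analytic on $\overline{\C_+}$ and agrees with the entire function $\Phi$ on $\R$; their difference is analytic on $\C_+$, continuous on $\overline{\C_+}$, and vanishes on $\R$, so it extends by Schwarz reflection to an entire function vanishing on $\R$, hence is identically zero. Therefore $\Phi$ is bounded on $\overline{\C_+}$ as well, so $\Phi$ is a bounded entire function, hence constant by Liouville. Taking $s = -i\sigma$ and letting $\sigma \to +\infty$, the $\overline{\C_-}$-bound gives $|\Phi(-i\sigma)| \le e^{-(a_2-b_1)\sigma}\|\psi_1\|\|\psi_2\| \to 0$ because $a_2 - b_1 > 0$ strictly; so $\Phi \equiv 0$, and in particular $\langle \psi_2, \psi_1 \rangle = \Phi(0) = 0$. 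As $\psi_1, \psi_2$ were arbitrary, $E_1[(a_1,b_1)] \perp E_2[(a_2,b_2)]$.

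I expect the main obstacle to be setting up the holomorphic object correctly: one must pair $U_1(s)\psi_1$ against $U_2(\bar s)\psi_2$ so that simultaneously the function is genuinely holomorphic in $s$ \emph{and} the exponential growth of $U_1(s)\psi_1$ as $\im s \to -\infty$ is cancelled by the decay of $U_2(\bar s)\psi_2$ — and this cancellation is precisely where the ordering $b_1 < a_2$ of the spectra is used. The second slightly delicate point is that Lemma~\ref{lem:StdPairExtension} produces an extension of $F(s) = U_2(-s)U_1(s)$ to $\overline{\C_+}$, which is not literally $\Phi$ on $\C_+$; identifying the two boundary values by analytic continuation (Schwarz reflection across $\R$) is what imports the upper-half-plane bound into the Liouville argument. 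The remaining steps — passing from strict to non-strict intervals, and rephrasing the orthogonality as the projection inequality — are routine once one knows $P_1, P_2$ have purely continuous spectrum contained in $(0,\infty)$.
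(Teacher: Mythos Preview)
Your proof is correct and follows essentially the same route as the paper's: the entire function $\Phi(s)=\langle U_2(\bar s)\psi_2,U_1(s)\psi_1\rangle$ is exactly the paper's $z\mapsto\langle e^{i\bar z P_2}\psi_2,e^{izP_1}\psi_1\rangle$, bounded on $\C_-$ by the spectral estimates and on $\C_+$ via Lemma~\ref{lem:StdPairExtension} together with an identity-theorem/Schwarz-reflection argument, then Liouville and the limit $\im s\to-\infty$ finish. Your explicit reduction to the strict case $b_1<a_2$ and the rephrasing of $E_1[(0,1)]\le E_2[(0,1)]$ via $1-E_j[(1,\infty)]$ are cosmetic variants of the paper's limiting arguments.
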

\begin{proof}
Suppose $\psi_i \in E_i[(a_i,b_i)]\Hil$ for $i = 1,2$, and let $P_i$ be the (positive) generator of $U_i$. By Lemma \ref{lem:StdPairExtension} we know that
	\[ f(s) = \langle \psi_2, U_2(-s)U_1(s) \psi_1 \rangle \]
has a bounded analytic extension to $\C_+$. Because $\psi_i \in \dom \,e^{yP_i}$ for all $y \in \R$, the function $z \mapsto \langle e^{i\overline{z}P_2}\psi_2, e^{izP_1}\psi_1 \rangle$ is entire analytic, and it coincides with $f$ on the real line, so by the Schwarz reflection principle, they agree on $\C_+$. In addition, one has for $t \geq 0$ that
	\[ \|e^{tP_1}\psi_1\| \leq e^{tb_1}\|\psi_q\|, \quad \|e^{-tP_2}\psi_2\| \leq e^{-ta_2}\|\psi_2\| \]
so for $z = x + iy \in \C_-$ we have
	\[ |\langle e^{i\overline{z}P_2}\psi_2, e^{izP_1}\psi_1 \rangle| \leq \|e^{i\overline{z}P_2}\psi_2\| \|e^{izP_1}\psi_1\| \leq e^{y(a_2 - b_1)}\|\psi_1\|\|\psi_2\|.\]
So if we examine the map $z \mapsto \langle e^{i\overline{z}P_1}\psi_1, e^{izP_2}\psi_2 \rangle$ we see that it is entire, bounded on $\C_+$, and if $a_2 \geq b_1$, also bounded on $\C_-$: therefore, it is constant. In fact, for $a_2 > b_1$, we see that as $y \rightarrow -\infty$, we must have $|\langle e^{i\overline{z}P_2}\psi_2, e^{izP_1}\psi_1 \rangle| \rightarrow 0$; since the latter was constant, it must be zero. In particular, if we take $z = 0$, we see that $\langle \psi_2, \psi_1 \rangle = 0$. Seeing as $\psi_1$ and $\psi_2$ were arbitrary in $E_1[(a_1,b_1)] \Hil$ and $E_2[(a_2,b_2)] \Hil$, respectively, we get the desired result.

In particular, $E_1[(0,x)]$ is orthogonal to $E_2[(x,\infty)]$ (by taking the so-limit as $a_1$ goes to 0, $b_1$ and $a_2$ go to $x$ and $b_2$ goes to $\infty$). We note that by Theorem \ref{thm:SvN} $P_1$ and $P_2$ have no eigenvalues, so $E_j[\{x\}] = 0$ for all $x \in \R$. This all means that
	\[ E_1[(0,x)]\Hil \subset \left( E_2[(x,\infty)] \Hil\right)^\perp = E_2[(0,x)]\Hil \]
for all $x > 0$. So, taking $x = 1$, we get the desired inequality.
\end{proof}

We now turn to the reverse statement. This boils down to showing that $e^{yP_2}e^{-yP_1}$ is uniformly bounded in $y$ for $y \geq 0$. Let us briefly look at the idea behind the proof of this fact. When we take a vector $\psi$ in $E_1[(\lambda - \delta, \lambda + \delta)]\Hil$, it is close to being an eigenvector for $\lambda$. Applying $e^{-yP_1}$ to it will roughly scale the vector $\psi$ by $e^{-y\lambda}$. If $E_1[(0,\lambda + \delta)] \leq E_2[(0,\lambda + \delta)]$, then in the worst case scenario, by applying $e^{yP_2}$ to $\psi$ the norm grows with a factor of $e^{y(\lambda + \delta)}$. So if $\delta$ is small enough, the norm doesn't grow too much. We formalize this as follows:
\begin{proposition}
\label{prop:ProjImpliesUnifBdd}
Let $H \subset \Hil$ be a standard subspace, $(H,U_1)$ and $(H,U_2)$ be non-degenerate standard pairs and $E_j$ the spectral measure of the generator of $U_j$. If $E_1[(0,1)] \leq E_2[(0,1)]$, then $U_2(-z)U_1(z)$ is a well-defined bounded operator for all $z \in \overline{\C_+}$, and
	\[ \|U_2(-z)U_1(z)\| \leq 1 \quad \text{ for } z \in \overline{\C_+}\]
\end{proposition}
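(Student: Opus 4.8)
The plan is to reduce the statement to the single operator inequality $\|e^{yP_2}\psi\| \le \|e^{yP_1}\psi\|$ for all $\psi \in \Hil$ and $y \ge 0$; once this is available, the claim for a general $z \in \overline{\C_+}$ follows by bookkeeping about domains of the functional calculus. First, however, I would upgrade the hypothesis: since $(H,U_1)$ and $(H,U_2)$ share the \emph{same} standard subspace $H$, they share the modular group $\Delta_H^{it}$, and Borchers' theorem (Theorem \ref{thm:Borchers}) gives $\Delta_H^{it}P_j\Delta_H^{-it}=e^{-2\pi t}P_j$, hence $\Delta_H^{it}E_j[(0,c)]\Delta_H^{-it}=E_j[(0,e^{2\pi t}c)]$. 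Conjugating $E_1[(0,1)]\le E_2[(0,1)]$ by $\Delta_H^{it}$ and letting $t$ range over $\R$ therefore yields $E_1[(0,c)]\le E_2[(0,c)]$ for \emph{every} $c>0$. (As $P_j$ has no eigenvalues by the remark after Theorem \ref{thm:SvN}, one may pass freely between open and half-open intervals throughout.)

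Next I would prove the scalar inequality by a stochastic-dominance argument. Fix $\psi\in\Hil$ and let $\mu_j$ be the finite positive Borel measure on $(0,\infty)$ defined by $\mu_j(B)=\|E_j[B]\psi\|^2$; since $P_j\ge 0$ has no eigenvalue at $0$ we have $\mu_j((0,\infty))=\|\psi\|^2$, and the previous step says $\mu_1((c,\infty))\ge\mu_2((c,\infty))$ for all $c>0$. Writing $e^{2y\lambda}=1+\int_0^{\lambda}2y\,e^{2yu}\,du$ and using Tonelli,
\[ \|e^{yP_j}\psi\|^2=\int_{(0,\infty)}e^{2y\lambda}\,d\mu_j(\lambda)=\|\psi\|^2+\int_0^{\infty}2y\,e^{2yu}\,\mu_j((u,\infty))\,du\ \in\ [0,\infty], \]
so for $y\ge 0$ the nonnegativity of the integrand and the tail comparison give $\|e^{yP_1}\psi\|^2\ge\|e^{yP_2}\psi\|^2$. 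In particular $\psi\in\dom e^{yP_1}$ implies $\psi\in\dom e^{yP_2}$ and $\|e^{yP_2}\psi\|\le\|e^{yP_1}\psi\|$.

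To conclude, write $z=x+iy\in\overline{\C_+}$, so that $U_1(z)=e^{ixP_1}e^{-yP_1}$ is a contraction on $\Hil$ while $U_2(-z)=e^{-ixP_2}e^{yP_2}$ is a closed operator with domain $\dom e^{yP_2}$. For arbitrary $\psi\in\Hil$ put $\phi:=U_1(z)\psi=e^{-yP_1}e^{ixP_1}\psi$; since $e^{yP_1}e^{-yP_1}=1$ we have $\phi\in\dom e^{yP_1}$ with $e^{yP_1}\phi=e^{ixP_1}\psi$, so $\|e^{yP_1}\phi\|=\|\psi\|$. By the previous paragraph $\phi\in\dom e^{yP_2}=\dom U_2(-z)$, whence $U_2(-z)U_1(z)\psi$ is well defined and $\|U_2(-z)U_1(z)\psi\|=\|e^{yP_2}\phi\|\le\|e^{yP_1}\phi\|=\|\psi\|$. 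Thus $U_2(-z)U_1(z)$ is everywhere defined with norm $\le 1$.

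I expect the real obstacle to be the second step: turning a comparison of spectral \emph{projections} into an inequality between the unbounded operators $e^{yP_1}$ and $e^{yP_2}$ — essentially the statement that first-order stochastic dominance of two spectral measures is preserved when integrating a monotone test function. The first step is short but indispensable, and is the only place where the common environment $H$ enters (the bare hypothesis about the single interval $(0,1)$ would not suffice for the dominance computation); the last step is routine domain bookkeeping for the complex functional calculus.
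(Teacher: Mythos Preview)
Your proof is correct and takes a genuinely different, more direct route than the paper.

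Both arguments share the opening move: conjugating $E_1[(0,1)]\le E_2[(0,1)]$ by $\Delta_H^{it}$ to upgrade the hypothesis to $E_1[(0,c)]\le E_2[(0,c)]$ for all $c>0$. From there the paper discretizes the spectrum of $P_1$ into intervals of width $\delta$, expands $\|e^{yP_2}e^{-yP_1}\psi\|^2$ along this partition, and after repeated applications of Cauchy--Schwarz obtains the crude bound $\|e^{yP_2}e^{-yP_1}\|\le e^{y\delta}\sqrt{\tfrac{1+e^{-y\delta}}{1-e^{-y\delta}}}$, optimized at $\delta=1/y$; the sharp constant $1$ is then recovered by a Hadamard three-lines argument applied to $z\mapsto U_2(-e^{2\pi z})U_1(e^{2\pi z})$. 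Your argument bypasses both the discretization and the three-lines step: the layer-cake identity $e^{2y\lambda}=1+\int_0^\lambda 2y\,e^{2yu}\,du$ converts the projection inequality into the tail comparison $\mu_1((u,\infty))\ge\mu_2((u,\infty))$, and integrating the nonnegative weight $2ye^{2yu}$ against this immediately gives $\|e^{yP_2}\psi\|\le\|e^{yP_1}\psi\|$ with the sharp constant in one line. Applying this to $\phi=e^{-yP_1}e^{ixP_1}\psi\in\mathrm{ran}\,e^{-yP_1}\subset\dom e^{yP_1}$ finishes. Your approach is shorter and isolates the mechanism cleanly as a stochastic-dominance statement; the paper's route, while heavier, has the minor structural advantage that the uniform-boundedness step is decoupled from the sharpening to norm $1$, so analyticity (used downstream) is already available before the optimal constant is known.
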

\begin{proof}
We write $P_j$ for the generator of $U_j$. We note first that because $\Delta_H^{it}P_j \Delta_H^{-it} = e^{-2\pi t} P_j $ we have
	\[ \Delta_H^{it}E_j[(0,1)] \Delta_H^{-it} = E_j[(0,e^{2\pi t})] \]
for $j = 1,2$. In particular, if $E_1[(0,1)] \leq E_2[(0,1)]$ then $E_1[(0,a)] \leq E_2[(0,a)]$ for all $a > 0$. This means that $E_1[(0,a)]\Hil \subset \dom(e^{yP_2})$ for all $a, y > 0$, and therefore, the operator $e^{yP_2}e^{-yP_1}$ is densely defined for all $y \geq 0$. For $z = x + iy$ with $y > 0$ we have
	\[ \| U_2(-z)U_1(z)\| = \|e^{(-ix + y)P_2}e^{(ix - y)P_1}\| =  \|e^{yP_2}e^{-yP_1}\| \]
and therefore it suffices to uniformly bound $e^{yP_2}e^{-yP_1}$.

So let now $y \geq 0$, and $\psi \in \dom(e^{y P_2}e^{-yP_1})$. We will bound $\|e^{y P_2}e^{-yP_1}\psi\|^2$ by splitting the spectrum of $P_1$ in intervals of width $\delta > 0$ and decomposing $\psi$ with respect to this splitting. So let $\delta > 0$; for readability we define $E_1^{(j)} := E_1[(j\delta, (j+1)\delta)]$, and we calculate
	\begin{IEEEeqnarray*} {Rcl}
	\IEEEeqnarraymulticol{3}{l}{
	\|e^{yP_2}e^{-yP_1}\psi\|^2 =  \sum_{j,k = 0}^\infty \langle e^{yP_2}e^{-yP_1} E_1^{(j)}\psi, e^{yP_2}e^{-yP_1} E_1^{(k)} \psi \rangle }\\
	= &   \sum_{j = 0}^\infty \sum_{n = 1}^\infty & \big( \langle e^{yP_2}e^{-yP_1} E_1^{(j)}\psi, e^{yP_2}e^{-yP_1} E_1^{(j+n)} \psi \rangle \\
	& & \quad + \langle e^{yP_2}e^{-yP_1} E_1^{(j+n)}\psi, e^{yP_2}e^{-yP_1} E_1^{(j)} \psi \rangle \big) \\
	& \IEEEeqnarraymulticol{2}{l}{ + \sum_{j = 0}^\infty \|e^{yP_2}e^{-yP_1}E_1^{(j)}\psi\|^2 } .
	\end{IEEEeqnarray*}
Because $\re\,{z} \leq |z|$ and the Cauchy-Schwarz inequality we have
	\begin{align*}
	\|e^{yP_2}e^{-yP_1}\psi\|^2 \leq & \, 2\sum_{j=0}^\infty \sum_{n=1}^\infty \|e^{2yP_2}e^{-yP_1}E_1^{(j)}\psi\|\|e^{-yP_1}E_1^{(j+n)}\psi \| \\
	& \quad + \sum_{j = 0}^\infty \|e^{yP_2}e^{-yP_1}E_1^{(j)}\psi\|^2.
	\end{align*}
Since $E_1[(0,a)] \leq E_2[(0,a)]$ we know that $\|e^{yP_2}E_1[(0,a)]\phi\| \leq e^{ya}\|E_1[(0,a)]\phi\|$ for all $a > 0$, so
	\begin{align*}
	\|e^{yP_2}e^{-yP_1}\psi\|^2 \leq & \, 2\sum_{j=0}^\infty \sum_{n=1}^\infty e^{2y(j+1)\delta} \|E_1^{(j)}e^{-yP_1}\psi\|\|E_1^{(j+n)}e^{-yP_1}\psi \| \\
	& \quad + \sum_{j = 0}^\infty e^{2y(j+1)\delta}\|E_1^{(j)}e^{-yP_1}\psi\|^2 \\
	\leq & \, 2 \sum_{j=0}^\infty  \sum_{n=1}^\infty e^{2y(j+1)\delta}e^{-yj\delta}e^{-y(j+n)\delta} \|E_1^{(j)}\psi\|\|E_1^{(j+n)}\psi \| \\
	& \quad + \sum_{j = 0}^\infty e^{2y(j+1)\delta}e^{-2yj\delta}\|E_1^{(j)}\psi\|^2.
	\end{align*}
Applying Cauchy-Schwarz again we get for all $n \in \N$ that
	\[ \sum_{j=0}^\infty \|E_1^{(j)}\psi\| \|E_1^{(j+n)}\psi\| \leq \left( \sum_{j=0}^\infty \|E_1^{(j)}\psi \|^2 \right)^\frac{1}{2} \left( \sum_{j=n}^\infty \|E_1^{(j)}\psi \|^2 \right)^\frac{1}{2} \leq \|\psi\|^2 \]
so that by collecting factors above we conclude
	\begin{align*}
	\|e^{yP_2}e^{-yP_1}\psi\|^2 & \leq 2 \sum_{n=1}^\infty e^{(2 - n)y\delta} \|\psi\|^2 + e^{2y\delta} \|\psi\|^2 \\
	& = e^{2y\delta} \left(1 + 2 \sum_{n=1}^\infty e^{-ny\delta} \right) \|\psi\|^2 = e^{2y\delta}\frac{1 + e^{-y\delta}}{1 - e^{-y\delta}}\|\psi\|^2.
	\end{align*}
So we see that $\|e^{yP_2}e^{-yP_1}\| \leq e^{y\delta}\sqrt{\frac{1 + e^{-y\delta}}{1- e^{-y\delta}}}$, and choosing $\delta = \frac{1}{y}$ gives a uniform bound. 

Now that we know that $z \mapsto U_2(-z)U_1(z)$ is uniformly bounded, we note that for all $\varphi, \psi \in \Hil$ the function $z \mapsto \langle \varphi, U_2(-e^{2\pi z})U_1(e^{2\pi z}) \psi \rangle$ is a bounded analytic function on $\mathbb{S}_{\pi}$, bounded by $\|\varphi\|\|\psi\|$ on the edges of the strip; by the Hadamard three lines theorem, this means that $|\langle \varphi, U_2(-z)U_1(z) \psi \rangle | \leq \|\varphi\|\|\psi\|$, from which the uniform bound follows.
\end{proof}
\begin{theorem}
\label{thm:CtoStd}
Let $H \subset \Hil$ be a standard subspace, and $(H,U_1)$ and $(H,U_2)$ be non-degenerate standard pairs. If $E_1[(0,1)] \leq E_2[(0,1)]$, then $U_1(1) H \subset U_2(1)H$.
\end{theorem}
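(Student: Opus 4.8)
The plan is to combine the uniform bound of Proposition~\ref{prop:ProjImpliesUnifBdd} with the characterization in Lemma~\ref{lem:StdPairExtension}. By Lemma~\ref{lem:StdPairExtension} it suffices to show that $F\colon s \mapsto U_2(-s)U_1(s)$ extends to a bounded, so-continuous function on $\overline{\C_+}$ that is analytic on $\C_+$. Since $E_1[(0,1)] \leq E_2[(0,1)]$, Proposition~\ref{prop:ProjImpliesUnifBdd} already hands us the \emph{candidate} extension: $F(z) := U_2(-z)U_1(z)$ is a well-defined bounded operator with $\|F(z)\| \leq 1$ for every $z \in \overline{\C_+}$, and on $\R$ this coincides with $s \mapsto U_2(-s)U_1(s)$. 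So the only thing left to verify is that this extension is analytic on $\C_+$ and so-continuous on $\overline{\C_+}$; as indicated in Remark~\ref{rem:Borchers}, this is precisely the step where uniform boundedness on the upper half-plane is promoted to analyticity and so-continuity.

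To carry this out I would first work on a convenient dense domain. Recall from the proof of Proposition~\ref{prop:ProjImpliesUnifBdd} that $E_1[(0,a)] \leq E_2[(0,a)]$ for every $a>0$, and that $P_1$ has no eigenvalues (remark after Theorem~\ref{thm:SvN}), so that $E_1[(0,\infty)] = 1$ and hence $\mathcal{D} := \bigcup_{a>0} E_1[(0,a)]\Hil$ is dense. For $\psi \in E_1[(0,a)]\Hil$ one has $e^{izP_1}\psi = \int_{(0,a]} e^{iz\lambda}\,dE_1(\lambda)\psi$, an $\Hil$-valued entire function of $z$ with values in $E_1[(0,a)]\Hil \subset E_2[(0,a)]\Hil$; and the family $e^{-izP_2}$, restricted to $E_2[(0,a)]\Hil$, equals $\int_{(0,a]} e^{-iz\mu}\,dE_2(\mu)$, which is norm-continuous and locally bounded in $z \in \overline{\C_+}$ and norm-analytic in $z \in \C_+$. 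Composing these, $z \mapsto F(z)\psi = e^{-izP_2}e^{izP_1}\psi$ is an $\Hil$-valued function that is continuous on $\overline{\C_+}$ and analytic on $\C_+$, for every $\psi \in \mathcal{D}$. (Equivalently, one may check weak analyticity head-on: for $\varphi \in E_2[(0,b)]\Hil$ and $\psi \in E_1[(0,a)]\Hil$ one gets $\langle \varphi, F(z)\psi\rangle = \int\!\!\int e^{iz(\lambda-\mu)}\,d\langle E_2(\mu)\varphi, E_1(\lambda)\psi\rangle$, an integral of an entire integrand against a finite complex measure on a bounded set, hence entire in $z$.)

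Finally I would pass from $\mathcal{D}$ to all of $\Hil$ using the uniform bound. For arbitrary $\psi \in \Hil$, choose $\psi_n \in \mathcal{D}$ with $\psi_n \to \psi$; then $\|F(z)\psi_n - F(z)\psi\| \leq \|\psi_n - \psi\|$ for all $z \in \overline{\C_+}$, so $z \mapsto F(z)\psi$ is a uniform limit of maps continuous on $\overline{\C_+}$ and analytic on $\C_+$, hence is itself continuous on $\overline{\C_+}$ and analytic on $\C_+$. Thus $F$ is so-continuous on $\overline{\C_+}$ and weakly — and therefore, by the equivalence of the operator topologies for vector-valued analyticity recalled in Section~\ref{sec:prereq}, in every sense — analytic on $\C_+$, and Lemma~\ref{lem:StdPairExtension} yields $U_1(1)H \subset U_2(1)H$. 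I expect the main obstacle to be the bookkeeping in the middle paragraph, namely making precise that the composition of the entire $\Hil$-valued map $z \mapsto e^{izP_1}\psi$ with the unbounded-but-spectrally-controlled family $e^{-izP_2}$ is genuinely analytic and boundary-continuous on $\mathcal{D}$; but this is routine — the substantive work is the uniform estimate already done in Proposition~\ref{prop:ProjImpliesUnifBdd}, of which this theorem is essentially a repackaging.
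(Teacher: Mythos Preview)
Your proposal is correct and follows essentially the same route as the paper: work on the dense domain $\bigcup_{a>0} E_1[(0,a)]\Hil$ (where vectors are analytic for both $U_1$ and $U_2$ thanks to $E_1[(0,a)]\leq E_2[(0,a)]$), establish analyticity and continuity of $z\mapsto U_2(-z)U_1(z)\psi$ there, and then extend to all of $\Hil$ using the uniform bound from Proposition~\ref{prop:ProjImpliesUnifBdd} before invoking Lemma~\ref{lem:StdPairExtension}. The only cosmetic difference is that the paper verifies analyticity on the dense domain via a power-series expansion rather than your spectral-integral argument.
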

\begin{proof}
 Let $P_j$ be the generator of $U_j$. We assume that $E_1[(0,1)] \leq E_2[(0,1)]$, and verify the condition in Lemma \ref{lem:StdPairExtension} to show that $U_1(1)H \subset U_2(1)H$.  
 
Because $\bigcup_{a \geq 0} E_1[(0,a)]$ are analytic vectors for both $U_1$ and $U_2$, we note that for all $\psi \in \bigcup_{a \geq 0} E_1[(0,a)]$ and $\varphi \in \Hil$ we have that
	\[z \mapsto \langle \varphi,  U_2(-z)U_1(z) \psi\rangle = \sum_{n,m=0}^\infty \frac{z^{n+m}}{n!m!} \langle \varphi, P_2^n P_1^m \psi \rangle\]
can be expressed as a convergent power series, so it is an analytic function.

Now let $\varphi \in \Hil$, and let $\psi_n \rightarrow \varphi$ with $\psi_n \in  \bigcup_{a \geq 0} E_1[(0,a)]$ for all $n \in \N$. Then, because by Proposition \ref{prop:ProjImpliesUnifBdd} the map $z \mapsto U_2(-z)U_1(z)$ is uniformly bounded by 1, we have
	\[ |\langle \phi, U_2(-z)U_1(z)\psi_n \rangle - \langle \phi, U_2(-z)U_1(z) \varphi \rangle| \leq \|\phi\|\|\psi_n - \varphi \| \]
so that $\langle \phi, U_2(-z)U_1(z)\psi_n \rangle \rightarrow \langle \phi, U_2(-z)U_1(z) \varphi \rangle$ as $n \rightarrow \infty$ uniformly in $z$. Since the uniform limit of analytic functions is again analytic, we have that $z \mapsto \langle \phi, U_2(-z)U_1(z) \varphi \rangle$ is analytic in $\C_+$ for all $\phi, \varphi \in \Hil$. By \cite[Sec. 3.9]{hille1948functional}, this means that $z \mapsto U_2(-z)U_1(z)$ is analytic in $\C_+$.

We can do something similar for the so-continuity: for $\psi \in \bigcup_{a \geq 0} E_1[(0,a)]$ the map $z \mapsto U_2(-z)U_1(z)\psi$ is continuous in $\overline{\C_+}$ by the functional calculus; by taking the limit $\psi_n \rightarrow \varphi$ for all $\varphi \in \Hil$, we see that $z \mapsto U_2(-z)U_1(z)\varphi$ is continuous for all $\varphi \in \Hil$. We have therefore verified the condition in Lemma \ref{lem:StdPairExtension}, concluding the proof.
\end{proof}
\begin{corollary}
\label{cor:GenIneq}
Let $(H,U_1) \subset (H,U_2)$ be an inclusion of standard pairs, and $P_i$ be the generator of $U_i$. Then $P_2 \leq P_1$.
\end{corollary}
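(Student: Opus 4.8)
The plan is to reduce the statement to Theorem~\ref{thm:main} followed by a short spectral argument. First I would observe that, by Theorem~\ref{thm:main}, the inclusion $U_1(1)H \subset U_2(1)H$ gives $E_1[(0,1)] \leq E_2[(0,1)]$, where $E_j$ denotes the spectral measure of $P_j$. Then, exactly as at the start of the proof of Proposition~\ref{prop:ProjImpliesUnifBdd}, Borchers' theorem (Theorem~\ref{thm:Borchers}) in the form $\Delta_H^{it}P_j\Delta_H^{-it} = e^{-2\pi t}P_j$ yields $\Delta_H^{it}E_j[(0,1)]\Delta_H^{-it} = E_j[(0,e^{2\pi t})]$; conjugating the inequality $E_1[(0,1)] \leq E_2[(0,1)]$ by the unitaries $\Delta_H^{it}$ and letting $t$ range over $\R$, I obtain $E_1[(0,a)] \leq E_2[(0,a)]$ for every $a > 0$.

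It then remains to pass from this family of spectral inequalities to the form inequality $P_2 \leq P_1$. I would fix $\psi \in \Hil$ and set $\mu_j(\lambda) := \|E_j[(\lambda,\infty)]\psi\|^2$ for $\lambda \geq 0$. Since $(H,U_j)$ is non-degenerate we have $\ker P_j = \{0\}$ (Theorem~\ref{thm:SvN}), so $E_j[(0,\infty)] = 1$ and the inequality $E_1[(0,a)] \leq E_2[(0,a)]$ is equivalent to $\mu_1(\lambda) \geq \mu_2(\lambda)$ for all $\lambda > 0$. Writing $\lambda = \int_0^\lambda dt$ and using Tonelli's theorem to interchange the order of integration gives
\[ \bigl\|P_j^{1/2}\psi\bigr\|^2 = \int_0^\infty \lambda \, d\bigl(\|E_j[(0,\lambda)]\psi\|^2\bigr) = \int_0^\infty \mu_j(\lambda)\, d\lambda \ \in \ [0,\infty]. \]
Since $0 \leq \mu_2 \leq \mu_1$ pointwise, the right-hand integrals satisfy $\int_0^\infty \mu_2 \leq \int_0^\infty \mu_1$; hence if $\psi \in \dom(P_1^{1/2})$ then $\psi \in \dom(P_2^{1/2})$ and $\|P_2^{1/2}\psi\| \leq \|P_1^{1/2}\psi\|$. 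This is exactly the assertion $P_2 \leq P_1$ in the sense of quadratic forms. (Alternatively, from $E_1[(0,a)] \leq E_2[(0,a)]$ one can write any bounded nonincreasing $g \geq 0$ on $(0,\infty)$ as a positive superposition of the functions $\chi_{(0,a)}$ to get $g(P_1) \leq g(P_2)$, and specialize to $g(\lambda) = (\lambda + \varepsilon)^{-1}$ to obtain $(P_1+\varepsilon)^{-1} \leq (P_2+\varepsilon)^{-1}$ for all $\varepsilon > 0$, which is the resolvent form of $P_2 \leq P_1$.)

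I do not expect a genuine obstacle here: once Theorem~\ref{thm:main} is in hand the corollary is essentially bookkeeping. The only point deserving a little care is that ``$P_2 \leq P_1$'' must be read as including the form-domain inclusion $\dom(P_1^{1/2}) \subseteq \dom(P_2^{1/2})$, and not merely as a numerical inequality on some common domain; routing the argument through the monotone quantities $\mu_j$ and allowing their integrals to take the value $+\infty$ handles this cleanly.
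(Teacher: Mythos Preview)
Your proof is correct, but it takes a different route from the paper's. The paper argues as follows: from Proposition~\ref{prop:ProjImpliesUnifBdd} one has $\|U_2(-z)U_1(z)\| \leq 1$ for $z \in \overline{\C_+}$; specializing to $z = \tfrac{i}{2}$ gives $\|e^{P_2/2}e^{-P_1/2}\| \leq 1$, i.e.\ $e^{-P_1/2}e^{P_2}e^{-P_1/2} \leq 1$, hence $e^{P_2} \leq e^{P_1}$, and operator monotonicity of the logarithm yields $P_2 \leq P_1$. So the paper feeds the problem back through the analytic estimate on $U_2(-z)U_1(z)$ rather than working directly with the spectral projections.

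Your approach is more elementary in that it stays entirely on the spectral side: from $E_1[(0,a)] \leq E_2[(0,a)]$ for all $a>0$ you pass to the quadratic forms via a layer-cake representation (or, in your alternative, via resolvents). This has the virtue of handling the form-domain inclusion $\dom(P_1^{1/2}) \subset \dom(P_2^{1/2})$ explicitly and avoiding any subtleties with the inequality $e^{P_2} \leq e^{P_1}$ between unbounded operators and the application of operator monotonicity of $\ln$ in that setting. The paper's argument is shorter given that Proposition~\ref{prop:ProjImpliesUnifBdd} is already available, but yours is self-contained once Theorem~\ref{thm:main} is in hand and arguably cleaner about domains.
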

\begin{proof}
By Proposition \ref{prop:ProjImpliesUnifBdd} we see that $\|U_2(-z)U_1(z)\| \leq 1$. This means
	\[ (U_2(-\tfrac{i}{2})U_1(\tfrac{i}{2}))^*U_2(-\tfrac{i}{2})U_1(\tfrac{i}{2}) = e^{-\frac{1}{2}P_1} e^{P_2} e^{-\frac{1}{2}P_1} \leq 1\]
so $e^{P_2} \leq e^{P_1}$. Since the logarithm is operator monotone we get $P_2 \leq P_1$.
\end{proof}
Corollary \ref{cor:GenIneq} might give the impression that we could also use the inequality of the generators of the one-parameter groups of standard pairs to characterize their inclusion. However, the concrete characterizations in Section \ref{sec:applications} allow us to construct a counterexample, as we will do in Example \ref{ex:GenIneqCounter}.

\section{Characterization in terms of inner functions and concrete examples}
\label{sec:applications}
In Theorem \ref{thm:main}, we have reduced an inclusion of standard subspaces to an inclusion of spectral subspaces. Given a standard pair or half-sided modular inclusion, such a spectral subspace might be hard to calculate explicitly. However, using the representation theory of standard pairs, we can describe these inclusions more concretely, which allows us to construct examples. Specifically, it will turn out that inclusions of standard pairs and half-sided modular inclusions are classified by sets of analytic functions. For this, we note that the set of bounded analytic functions on $\C_-$ with values in $B(\Kil)$ forms the Hardy space $H^\infty(\C_-, B(\Kil))$. Any bounded analytic function $\varphi$ on $\C_-$ has a boundary value $\lim_{y \rightarrow 0^-} \varphi(\lambda + iy)$ (with respect to the weak operator topology) for almost all $\lambda \in \R$ (this follows from the scalar-valued case, which can for example be found in \cite[Thm. 3.1]{GarnettBounded}). We will therefore identify $H^\infty(\C_-, B(\Kil)) \subset L^\infty(\R, B(\Kil))$ as multiplication operators on $L^2(\R, \Kil)$.
\begin{definition}
Let $\Kil$ be a Hilbert space. A bounded analytic function $\varphi: \C_- \rightarrow B(\Kil)$ is called an \textbf{inner function} if $\varphi(\lambda) := \text{w-}\lim_{y \rightarrow 0^-} \varphi(\lambda + iy)$ is unitary for almost all $\lambda \in \R$. If $J_\Kil$ is an anti-linear involution on $\Kil$, then $\varphi$ is called $J_\Kil$\textbf{-symmetric} if $\varphi(-\lambda) = J_\Kil \varphi(\lambda) J_\Kil$ almost everywhere. If $\Kil = \C$, we call $\varphi$ \textbf{symmetric} if $\varphi(-\lambda) = \overline{\varphi(\lambda)}$ almost everywhere.
\end{definition}

\begin{proposition}
\label{prop:IrrSymInnChar}
Let $(H,U_1)$ and $(H,U_2)$ be non-degenerate standard pairs. Then the CCR-representation generated by $\Delta_H^{it}$ and $P_1^{is}$ is irreducible if and only if the one generated by $\Delta_H^{it}$ and $P_2^{is}$ is. If this is the case, then $U_1(1)H \subset U_2(1)H$ if and only if
	\[ U_1(s) = \varphi(\ln \Delta_H) U_2(s) \varphi(\ln \Delta_H)^*  \]
for a symmetric inner function $\varphi: \C_- \rightarrow \C$.
\end{proposition}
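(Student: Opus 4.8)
The strategy is to move both standard pairs, via the Stone--von Neumann theorem, into the concrete model of Example~\ref{ex:ex2}, where $\Delta_H$ is diagonalised, and then to recognise the spectral criterion $E_1[(0,1)]\le E_2[(0,1)]$ of Theorem~\ref{thm:main} as a Hardy-space multiplier condition. The equivalence of the two irreducibility statements is immediate from Corollary~\ref{cor:multiplicity}: the induced representations of $(H,U_1)$ and $(H,U_2)$ have isomorphic multiplicity spaces, and such a representation is irreducible exactly when that space is one-dimensional (equivalently, as noted after Theorem~\ref{thm:SvN}, when $\{\Delta_H^{it}\mid t\in\R\}''$ is maximally abelian, a condition that does not refer to $U_1$ or $U_2$).

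Assume irreducibility. Applying Theorem~\ref{thm:SvN} to $(H,U_2)$ with $\Kil=\C$ and $J_{\Kil}$ complex conjugation, we may assume $\Hil=\Hil_0$, $\Delta_H^{it}=\Delta_{\widetilde{H}_0}^{it}$, $J_H=J_{\widetilde{H}_0}$ and $U_2(s)=\widetilde{U}_0(s)$; applying it also to $(H,U_1)$ and composing the two intertwiners produces a unitary $W\in\Uni(\Hil_0)$ that commutes with $\Delta_{\widetilde{H}_0}^{it}$ and $J_{\widetilde{H}_0}$ and satisfies $U_1(s)=W\widetilde{U}_0(s)W^*$. Since $\{\Delta_{\widetilde{H}_0}^{it}\mid t\in\R\}''=L^\infty(\R)$ is maximally abelian and $\ln\Delta_{\widetilde{H}_0}=M[-2\pi\lambda]$, we may write $W=\varphi(\ln\Delta_{\widetilde{H}_0})$, equivalently $W=M[w]$ with $w(\lambda)=\varphi(-2\pi\lambda)$, for a measurable $\varphi\colon\R\to\C$ with $|\varphi|\equiv1$ a.e. A short computation gives that $J_{\widetilde{H}_0}M[w]J_{\widetilde{H}_0}$ is multiplication by $\lambda\mapsto\overline{w(-\lambda)}$, so the fact that $W$ commutes with $J_{\widetilde{H}_0}$ forces $w(\lambda)=\overline{w(-\lambda)}$, i.e.\ $\varphi(-\lambda)=\overline{\varphi(\lambda)}$; thus $\varphi$ is symmetric.

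It remains to decide when $\varphi$ extends to an inner function, which is the one genuinely analytic step. Since $P_1=W\widetilde{P}_0W^*$, $P_2=\widetilde{P}_0$, and $\widetilde{P}_0=\FT M[e^\theta]\FT^*$ (Example~\ref{ex:ex2ctd2}), we have $E_{\widetilde{P}_0}[(0,1)]=\FT M[\chi_{(-\infty,0)}]\FT^*$, and the Paley--Wiener theorem identifies its range $\FT\,L^2((-\infty,0))$ with the Hardy space $H^2(\C_+)$ over the upper half-plane, sitting inside $L^2(\R)$. By Theorem~\ref{thm:main}, $U_1(1)H\subseteq U_2(1)H$ is therefore equivalent to $W E_{\widetilde{P}_0}[(0,1)]W^*\le E_{\widetilde{P}_0}[(0,1)]$, i.e.\ to $M[w]\,H^2(\C_+)\subseteq H^2(\C_+)$; by the standard description of multipliers of $H^2(\C_+)$ this holds iff $w\in H^\infty(\C_+)$, which, as $|w|\equiv1$ on $\R$, means $w$ is inner on $\C_+$. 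Under the substitution $\mu=-2\pi\lambda$, which interchanges $\C_+$ and $\C_-$, this is precisely the assertion that $\varphi$ is inner on $\C_-$. Together with the symmetry already established, $\varphi$ is a symmetric inner function, and unwinding the intertwiners turns $U_1(s)=W\widetilde{U}_0(s)W^*$ into $U_1(s)=\varphi(\ln\Delta_H)U_2(s)\varphi(\ln\Delta_H)^*$. Conversely, if $\varphi\colon\C_-\to\C$ is a symmetric inner function for which this relation holds, then $\varphi(\ln\Delta_H)$ is unitary and, in the model, is multiplication by $\lambda\mapsto\varphi(-2\pi\lambda)\in H^\infty(\C_+)$, hence preserves $H^2(\C_+)$; therefore $E_1[(0,1)]=\varphi(\ln\Delta_H)E_2[(0,1)]\varphi(\ln\Delta_H)^*\le E_2[(0,1)]$, and Theorem~\ref{thm:main} yields $U_1(1)H\subseteq U_2(1)H$.

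The main obstacle is the analytic core of the third paragraph: recognising $E_{\widetilde{P}_0}[(0,1)]\Hil_0$ as a half-plane Hardy space via Paley--Wiener, and the multiplier characterisation $M[w]\,H^2(\C_+)\subseteq H^2(\C_+)\Leftrightarrow w\in H^\infty(\C_+)$. Everything else is bookkeeping with the Stone--von Neumann intertwiners and with the change of variables $\lambda\mapsto-2\pi\lambda$ relating $\ln\Delta_{\widetilde{H}_0}$ to the multiplication variable.
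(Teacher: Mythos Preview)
Your proof is correct and follows essentially the same route as the paper's: invoke Corollary~\ref{cor:multiplicity} for the irreducibility equivalence, use the Stone--von Neumann intertwiners to produce a unitary $W=\varphi(\ln\Delta_H)$ commuting with the modular data, derive symmetry from $J_H$-commutation, and then translate Theorem~\ref{thm:main} into the Hardy-space multiplier condition via the identification of $E_2[(0,1)]\Hil$ with $H^2(\C_+)$ in the concrete model. The only cosmetic difference is that you work directly in the Fourier-transformed model of Example~\ref{ex:ex2} from the start, whereas the paper stays in the model of Example~\ref{ex:ex1} and passes through $\FT$ at the end; the computations are identical.
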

\begin{proof}
By Corollary \ref{cor:multiplicity} the irreducibility of both representations is equivalent. If the irreducibility condition is fulfilled, by Theorem \ref{thm:SvN} the two representations are unitarily equivalent, so there exists unitary $\Vint: \Hil \rightarrow \Hil$ such that
	\[ \Vint \Delta_H^{it} \Vint^* = \Delta_H^{it}, \quad \Vint J_H \Vint^* = J_H, \quad \Vint U_2(s) \Vint^* = U_1(s).  \]
As noted above, the irreducibility condition implies that $\{ \Delta_H^{it} \mid t \in \R\}''$ is maximally abelian, so any operator that commutes with all $\Delta_H^{it}$ can be written as a function of $\ln \Delta_H$. So there is a function $\varphi: \R \rightarrow \C$ such that $\Vint = \varphi(\ln \Delta_H)$, and because $\Vint$ commutes with $J_H$ and $J_H\ln \Delta_H J_H = -\ln \Delta_H$, we have
	\[ \varphi(\ln \Delta_H) = J_H \varphi(\ln \Delta_H) J_H = \overline{\varphi}(-\ln \Delta_H). \]
Because $\Vint$ is unitary and the spectrum of $\ln \Delta_H$ is the entire real line, we must have $\overline{\varphi(\lambda)}\varphi(\lambda) = 1$ for almost all $\lambda \in \R$. So what remains to be shown, is that $\varphi$ extends to a bounded analytic function on $\C_-$.

 According to Theorem \ref{thm:main} we now have that $U_1(1)H \subset U_2(1)H$ if and only if
	\[ E_1[(0,1)] = \Vint E_2[(0,1)] \Vint^* \leq E_2[(0,1)], \]
or in other words, $\varphi(\ln \Delta_H) E_2[(0,1)] \Hil \subset E_2[(0,1)]\Hil$. According to Theorem \ref{thm:SvN} and Example \ref{ex:ex2} there is a unitary $\Wint: \Hil \rightarrow L^2(\R)$ such that 
	\begin{align*}
	\Wint \Delta_H^{it} \Wint^* & = \Delta_{H_0}^{it} = \FT^* M[e^{-2\pi it \lambda}] \FT \\
	\Wint U_2(s) \Wint^* & = U_0(s) = M[e^{ise^\lambda}].
	\end{align*}
This means that
	\[ \Wint \ln \Delta_H \Wint^* = \FT^* M[-2\pi \lambda] \FT \quad \text{and} \quad \Wint E_2[(0,1)] \Wint^* = M[\chi_{(-\infty,0)}]. \]
Applying $\Wint$ to both sides of the inclusion $\varphi(\ln \Delta_H) E_2[(0,1)] \Hil \subset E_2[(0,1)]\Hil$ we see that the inclusion holds if and only if $M[\varphi(-2\pi \lambda)]\FT L^2(\R_-) \subset \FT L^2(\R_-)$. Because $\FT L^2(\R_-) = H^2(\C_+)$, this is equivalent to $\lambda \mapsto \varphi(-2\pi \lambda)$ being in $H^\infty(\C_+)$, or in other words, $\varphi \in H^\infty(\C_-)$.
\end{proof}
\begin{corollary}
\label{cor:HSMIIrredVersion}
Let $H \subset \Hil$ be a standard subspace such that $\{\Delta_{H}^{it} \mid t \in \R\}''$ is maximally abelian, and let $K_1 \subset H$ and $K_2 \subset H$ be non-degenerate half-sided modular inclusions. Then $K_1 \subset K_2$ if and only if there is a symmetric inner function $\varphi$ on $\C_-$ such that $K_1 = \varphi(\ln\Delta_H)K_2$.
\end{corollary}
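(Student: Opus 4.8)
The plan is to reduce the statement about half-sided modular inclusions $K_1 \subset H$ and $K_2 \subset H$ to the already-proven statement about standard pairs in Proposition \ref{prop:IrrSymInnChar}, using the one-to-one correspondence between half-sided modular inclusions and standard pairs (Theorem \ref{thm:Wiesbrock}) and the translation of inclusions established in the introduction (that $K_1 \subset K_2$ corresponds to $U_1(1)H \subset U_2(1)H$). First I would take the standard pairs $(H,U_1)$ and $(H,U_2)$ associated to the half-sided modular inclusions $K_1 \subset H$ and $K_2 \subset H$ via Theorem \ref{thm:Wiesbrock}, so that $K_j = U_j(1)H$ for $j = 1,2$; non-degeneracy of the half-sided modular inclusions transfers to non-degeneracy of the standard pairs by Proposition \ref{prop:nondeg}. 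The hypothesis that $\{\Delta_H^{it} \mid t \in \R\}''$ is maximally abelian is precisely the irreducibility condition appearing in Proposition \ref{prop:IrrSymInnChar} (as noted in the second Remark after Theorem \ref{thm:SvN}), so that proposition applies to both standard pairs.

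Next I would establish the equivalence $K_1 \subset K_2 \Leftrightarrow U_1(1)H \subset U_2(1)H$, which is immediate from $K_j = U_j(1)H$. Then Proposition \ref{prop:IrrSymInnChar} gives that this holds if and only if $U_1(s) = \varphi(\ln\Delta_H)U_2(s)\varphi(\ln\Delta_H)^*$ for some symmetric inner function $\varphi: \C_- \rightarrow \C$. It remains to translate this statement about the one-parameter groups into the desired statement $K_1 = \varphi(\ln\Delta_H)K_2$ about the standard subspaces. In one direction, if $U_1(s) = \varphi(\ln\Delta_H)U_2(s)\varphi(\ln\Delta_H)^*$, then setting $V := \varphi(\ln\Delta_H)$ (a unitary, since $\varphi$ is inner and the spectrum of $\ln\Delta_H$ is all of $\R$) we have $V\Delta_H^{it}V^* = \Delta_H^{it}$ because $V$ is a function of $\ln\Delta_H$, and $VJ_HV^* = J_H$ because $\varphi$ is symmetric and $J_H\ln\Delta_H J_H = -\ln\Delta_H$; hence $V$ maps $H$ to $H$, and $K_1 = U_1(1)H = VU_2(1)V^*H = VU_2(1)H = VK_2 = \varphi(\ln\Delta_H)K_2$. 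Conversely, if $K_1 = \varphi(\ln\Delta_H)K_2$ for a symmetric inner function $\varphi$, then $V := \varphi(\ln\Delta_H)$ is unitary, commutes with $\Delta_H^{it}$ and $J_H$, hence fixes $H$, and so $VK_2 = VU_2(1)H = VU_2(1)V^*H$; since $VK_2 = K_1 = U_1(1)H$ and inclusions of standard subspaces inside $H$ detect inclusions, we get $U_1(s) = VU_2(s)V^*$ by the uniqueness clause of Proposition \ref{prop:IrrSymInnChar} (or directly, since $VU_2(s)V^*$ is a positively generated one-parameter group with $(H, VU_2V^*)$ a standard pair and $VU_2(1)V^*H = K_1 = U_1(1)H$, whence $VU_2(\cdot)V^* = U_1$ by the uniqueness in Theorem \ref{thm:Wiesbrock}), and then Proposition \ref{prop:IrrSymInnChar} yields $U_1(1)H \subset U_2(1)H$, i.e. $K_1 \subset K_2$.

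The main obstacle I anticipate is the bookkeeping needed to make sure that the symmetric inner function appearing in Proposition \ref{prop:IrrSymInnChar} is the \emph{same} object (up to the identifications) as the one in the statement of the corollary — in particular, confirming that the unitary intertwiner $\Vint$ of that proposition can indeed be taken to be exactly $\varphi(\ln\Delta_H)$ with $\varphi$ symmetric inner, and that conjugation by it implements $K_2 \mapsto K_1$ at the level of standard subspaces rather than merely at the level of the groups. This is essentially checking that $\varphi(\ln\Delta_H)$ preserves $H$, which follows from $\varphi(\ln\Delta_H)$ commuting with the modular data $(\Delta_H^{it}, J_H)$ of $H$ and from the correspondence between closed anti-linear involutions and standard subspaces (so that $S_{VH} = VS_HV^* = S_H$ forces $VH = H$); once this is in place the rest is a direct transcription.
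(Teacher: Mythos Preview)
Your proposal is correct and follows essentially the same route as the paper: pass to the associated non-degenerate standard pairs via Theorem~\ref{thm:Wiesbrock} and Proposition~\ref{prop:nondeg}, invoke Proposition~\ref{prop:IrrSymInnChar} under the maximal abelianness hypothesis, and then use that $\varphi(\ln\Delta_H)$ commutes with $\Delta_H^{it}$ and $J_H$ (hence fixes $H$) to pass between the group relation $U_1 = \varphi(\ln\Delta_H)U_2\,\varphi(\ln\Delta_H)^*$ and the subspace relation $K_1 = \varphi(\ln\Delta_H)K_2$. The only cosmetic difference is in the converse: the paper deduces $U_1 = \varphi(\ln\Delta_H)U_2\,\varphi(\ln\Delta_H)^*$ from $\Delta_{K_1}^{it} = \varphi(\ln\Delta_H)\Delta_{K_2}^{it}\varphi(\ln\Delta_H)^*$ via the explicit formula in Theorem~\ref{thm:Wiesbrock}, whereas you argue via the uniqueness of the standard pair attached to the half-sided modular inclusion $K_1 \subset H$---both are equivalent readings of Theorem~\ref{thm:Wiesbrock}.
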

\begin{proof}
Since $K_1 \subset H$ and $K_2 \subset H$ are non-degenerate half-sided modular inclusions, by Proposition \ref{prop:nondeg} the corresponding standard pairs $(U_1, H)$ and $(U_2,H)$ are non-degenerate, and because $\{\Delta_{H}^{it} \mid t \in \R\}''$ is maximally abelian, both representations are irreducible. By Proposition \ref{prop:IrrSymInnChar} we then have
	\[ K_1 = U_1(1) \subset U_2(1) = K_2 \]
if and only if $U_1(s) = \varphi(\ln \Delta_H)U_2(s)\varphi(\ln \Delta_H)^*$ for some symmetric inner function $\varphi: \C_- \rightarrow \C$. Because $\varphi(\ln \Delta_H)$ is unitary and commutes with both $\Delta_H^{it}$ and $J_H$, we have $\varphi(\ln \Delta_H)H = H$, so if $K_1 \subset K_2$ then
	\[ K_1 = U_1(1)H = \varphi(\ln \Delta_H) U_2(1) \varphi(\ln \Delta_H)^* H = \varphi(\ln \Delta_H) K_2.  \]
Conversely, if $K_1 = \varphi(\ln \Delta_H) K_2$, then $\Delta_{K_1}^{it} = \varphi(\ln \Delta_H)\Delta_{K_2}^{it} \varphi(\ln \Delta_H)^*$, and because of Theorem \ref{thm:Wiesbrock} this implies $U_1(s) = \varphi(\ln \Delta_H)U_2(s)\varphi(\ln \Delta_H)^*$.
\end{proof}
We note that all the conditions in Corollary \ref{cor:HSMIIrredVersion} only explicitly refer to $\Delta_H^{it}$ and its relation to $H_1$ and $H_2$; in particular, one doesn't need to explicitly calculate $\Delta_{K_1}^{it}$ and $\Delta_{K_2}^{it}$, which is often not straightforward as it involves calculating a polar decomposition.
\begin{example}
Given a standard pair $(H,U)$, we consider $K_1 = U(1)H$ and $K_2 = U(\frac{1}{2})H$, which clearly form an inclusion $K_1 \subset K_2$. In fact, $(H,U)$ is the standard pair induced by the half-sided modular inclusion $K_1 \subset H$, and the standard pair associated to $K_2 \subset H$ has $s \mapsto U(s/2)$ as its positively generated one-parameter group. We note that $U(s) = \Delta_H^{-i\ln(2)/2\pi}U(\tfrac{s}{2}) \Delta_H^{i\ln(2)/2\pi}$, so for $\varphi(z) = e^{-i \ln(2)z/2\pi}$, which is indeed a symmetric inner function on $\C_-$, we have $U(s) = \varphi(\ln \Delta_{H})U(\tfrac{s}{2})\varphi(\ln \Delta_H)^*$.
\end{example}
\begin{example}
\label{ex:nontriv}
We take a simple example of a symmetric inner function on $\C_-$, namely the Blaschke product factor $\varphi(\lambda) = \frac{\lambda + i}{\lambda - i}$. Recall the definition of $H_0$ from Example \ref{ex:ex1}. Then 
	\[ \varphi(\ln\Delta_{H_0}) = \varphi(2 \pi i \partial_\theta) = \varphi(\FT M[2\pi \lambda] \FT^*) = \FT M[\varphi(2\pi \lambda)] \FT^*   \]
so as an operator, $\varphi(\ln\Delta_{H_0})$ is convolution by the Fourier transform of $\lambda \mapsto \frac{1}{\sqrt{2\pi}} \varphi(2 \pi \lambda)$. This is easily calculated to be
	\[ \frac{1}{\sqrt{2\pi}} \int_{-\infty}^\infty \frac{1}{\sqrt{2\pi}} \frac{2\pi \lambda + i}{2\pi \lambda - i}e^{-i\theta\lambda} \, d\lambda = \delta(\theta) - \frac{1}{\pi} e^{\theta/2\pi} \chi_{(-\infty,0)}(\theta)\]
To compare with Corollary \ref{cor:HSMIIrredVersion}, we set $K_2 = M[e^{ie^{\theta}}]H_0$, and $K_1 = \varphi(\ln\Delta_{H_0})K_2$. Then if $h \in H_0$, we look at
	\begin{align*}
	(\varphi(\ln\Delta_{H_0}) M[e^{ie^{\theta}}] h) (\theta) & = e^{ie^{\theta}}h(\theta) - \frac{1}{\pi}\int_{-\infty}^0 e^{\theta'/2\pi} e^{ie^{\theta - \theta'}}h(\theta-\theta') \, d\theta' \\
	& = e^{ie^{\theta}}\left(h(\theta) - \frac{1}{\pi}\int_{-\infty}^0 e^{\theta'/2\pi} e^{ie^\theta(e^{-\theta'} - 1)}h(\theta-\theta') \, d\theta'  \right).
	\end{align*}
The proposition now claims that
	\[ \widetilde{h}(\theta) := h(\theta) - \frac{1}{\pi}\int_{-\infty}^0 e^{\theta'/2\pi} e^{ie^\theta(e^{-\theta'} - 1)}h(\theta-\theta') \, d\theta' \]
is included in $H_0$, i.e. it extends analytically to the strip $\mathbb{S}_{\pi}$, is $L^2$ along lines $\im\,z = c$ for $0 < c < \pi$, and satisfies $\widetilde{h}(\theta + i\pi) = \overline{\widetilde{h}(\theta)}$. We see by the Minkwoski integral inequality that indeed
	\begin{align*}
	\|\widetilde{h}(\cdot + i\lambda)\|_2 & = \left( \int_{\R} \left| \int_{-\infty}^0 e^{\theta'/{2\pi}} e^{i e^{\theta + i\lambda}(e^{-\theta'} - 1)}h(\theta - \theta' + i\lambda) \, d\theta'  \right|^2 \, d\theta \right)^{\frac{1}{2}} \\
	& \leq \int_{-\infty}^0 \left( \int_\R |e^{\theta'/{2\pi}} e^{i e^{\theta + i\lambda}(e^{-\theta'} - 1)}h(\theta - \theta' + i\lambda) |^2 \, d\theta \right)^\frac{1}{2} \, d\theta' \\
	& = \int_{-\infty}^0 |e^{\theta'/2\pi}| \left( \int_\R |e^{-\sin(\lambda)e^{\theta}(e^{-\theta'} - 1)} h(\theta - \theta'+i\lambda) |^2 \, d\theta \right)^{\frac{1}{2}} \, d \theta'
	\end{align*}
and when $\theta' \leq 0$ and $0 \leq \lambda \leq \pi$ we have $\sin(\lambda)e^\theta(e^{-\theta'}-1) \geq 0$ so
	\begin{align*}
	\|\widetilde{h}(\cdot + i\lambda)\|_2 & \leq  \int_{-\infty}^0 |e^{\theta'/2\pi}| \left( \int_\R |h(\theta - \theta'+i\lambda) |^2 \, d\theta \right)^{\frac{1}{2}} \, d \theta' \\
	& = \|h(\cdot + i \lambda)\|_2 \int_{-\infty}^0 e^{\theta'/2\pi} d\theta' < \infty
	\end{align*}
so $\widetilde{h} \in \dom\, \Delta_{H_0}^{\frac{1}{2}}$. The identity $\widetilde{h}(\theta + i\pi) = \overline{\widetilde{h}(\theta)}$ is easily explicitly verified.
\end{example}
\begin{example}
\label{ex:GenIneqCounter}
We now examine $K_1 = \widetilde{U}_0(1)\widetilde{H}_0 \subset \widetilde{H}_0$ and $K_2 = \varphi(\ln \Delta_{\widetilde{H}_0})K_1$ for $\varphi(\lambda) = e^{i \sinh(\lambda/2)}$.

We first note that $\varphi$ is (up to possibly a factor of $-1$) the unique function $\R \rightarrow \C$ such that $\varphi(\lambda) = \overline{\varphi(-\lambda)} = \varphi(\lambda)^{-1}$ and $\varphi(\ln \Delta_{\widetilde{H}_0})K_1 = K_2$. This is because if there is another such function $\varphi'$ then
	\[ \overline{\varphi'}(\ln \Delta_{\widetilde{H}_0})\varphi(\ln \Delta_{\widetilde{H}_0})K_1 = K_1\]
so then $\overline{\varphi'}(\ln \Delta_{\widetilde{H}_0})\varphi(\ln \Delta_{\widetilde{H}_0})$ commutes with both $\Delta_{\widetilde{H}_0}^{it}$ and $\Delta_{K_1}^{it}$, meaning that it commutes with $\widetilde{U}_0(s)$ for all $s \in \R$. By irreducibility of the representation induced by $\Delta_{\widetilde{H}_0}^{it}$ and $\widetilde{U}_0(s)$, this means it is equal to a unitary constant, and because it commutes with $J_{\widetilde{H}_0}$ it must be real. Since $\overline{\varphi}$ does not have a bounded analytic extension to $\C_-$, by Corollary \ref{cor:HSMIIrredVersion} we have that $K_1 \not\subset K_2$. 

However, by Example \ref{ex:ex2ctd2} we also see that
	\begin{align*}
	\langle \psi, P_1 \psi \rangle - \langle \psi, P_2 \psi \rangle & =  \langle \psi, P_1 \psi \rangle - \langle \psi, \varphi(\ln \Delta_{\widetilde{H}_0})P_1\varphi(\ln \Delta_{\widetilde{H}_0})^* \psi \rangle  \\
	& = \| \psi( \cdot + i/2) \|_2^2 - \|\varphi(\ln \Delta_{\widetilde{H}_0})^*\psi(\cdot + i/2)\|_2^2.
	\end{align*}
So because
	\[\varphi(\ln \Delta_{\widetilde{H}_0})^*\psi(\lambda ) = \overline{\varphi(-2\pi \lambda)}\psi(\lambda) = e^{i \sinh(\pi \lambda)}\psi(\lambda) \]
we have
	\[ |\varphi(\ln \Delta_{\widetilde{H}_0})^*\psi(\lambda + i/2)| = e^{-\cosh(\pi \lambda)}|\psi(\lambda + i/2)| \leq |\psi(\lambda + i/2)| \]
so indeed $P_2 \leq P_1$. This means we have constructed a counter example to the reverse implication of Corollary \ref{cor:GenIneq}.
\end{example}

When the representation induced by the standard pair or half-sided modular inclusion is not irreducible, we can not formulate the characterization of their inclusions in such a representation-independent way as in Corollary \ref{cor:HSMIIrredVersion}. We can however still classify the inclusions of standard pairs by inner functions up to unitary equivalence. 
\begin{proposition}
\label{prop:SymInnChar}
Let $(H,U_1)$ and $(H,U_2)$ be non-degenerate standard pairs. Then $U_1(1)H \subset U_2(1)H$ if and only if there exists a Hilbert space $\Kil$, an anti-unitary involution $J_{\Kil}$ on $\Kil$, a unitary $\Wint: \Hil \rightarrow L^2(\R, \Kil)$ and a $J_\Kil$-symmetric inner function $V: \C_- \rightarrow B(\Kil)$, such that
	\begin{align*}
	\left( \Wint \Delta_{H}^{it} \Wint^* \vec{\psi}\right)(\lambda) = e^{-2\pi t \lambda i} \vec{\psi}(\lambda), & \quad \left( \Wint J_{H} \Wint^* \vec{\psi} \right)(\lambda) = J_\Kil \vec{\psi}(-\lambda)  \\
	\Wint U_1(s) \Wint^* \vec{\psi} = V(T_s * (V^* \vec{\psi})), & \qquad \Wint U_2(s) \Wint^* \vec{\psi}  = T_s * \vec{\psi}
	 \end{align*}
for all $\vec{\psi} \in L^2(\R, \Kil)$, where $T_s$ is defined as in Example \ref{ex:ex2ctd}.
\end{proposition}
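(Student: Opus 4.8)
The plan is to read the statement off from three facts already established: the Stone--von Neumann realisation of a non-degenerate standard pair (Theorem \ref{thm:SvN}), the comparison of multiplicities (Corollary \ref{cor:multiplicity}), and the spectral criterion for the inclusion (Theorem \ref{thm:main}). The data $(\Kil, J_\Kil, \Wint)$ will come from applying Theorem \ref{thm:SvN} to $(H,U_2)$ and then conjugating by $\FT \otimes 1$ to pass from the realisation of Examples \ref{ex:ex1}, \ref{ex:ex1ctd} to that of Examples \ref{ex:ex2}, \ref{ex:ex2ctd}; the inner function $V$ will be the unitary-valued multiplier that conjugates $U_2$ into $U_1$; and the analyticity of $V$ in a half-plane will be exactly the translation, through Theorem \ref{thm:main}, of the spectral inclusion $E_1[(0,1)] \le E_2[(0,1)]$. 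The one auxiliary computation I would fix first is that in the realisation of the statement $U_2 = \widetilde U_0 \otimes 1$ has generator $\widetilde P_0 \otimes 1$, whose spectral subspace for $(0,1)$ is the vector-valued Hardy space $H^2(\C_+) \otimes \Kil$; this follows either from the contour shift in Example \ref{ex:ex2ctd2} or by transporting $E_{P_0}[(0,1)] = M[\chi_{(-\infty,0)}]$ through $\FT$, using $\FT L^2(\R_-) = H^2(\C_+)$ exactly as in the proof of Proposition \ref{prop:IrrSymInnChar}.

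For the ``if'' direction I would take the data $(\Kil, J_\Kil, \Wint, V)$ as given and verify $E_1[(0,1)] \le E_2[(0,1)]$, which gives $U_1(1)H \subset U_2(1)H$ by Theorem \ref{thm:main}. In the given realisation $U_1(s) = M[V]\,U_2(s)\,M[V]^*$, where $M[V]$ is the multiplication operator attached to $V$ --- the operator-valued analogue of $\varphi(\ln\Delta_H)$ from Proposition \ref{prop:IrrSymInnChar}, so that after the change of variable $\lambda\mapsto-2\pi\lambda$ relating $\C_-$ and $\C_+$ one is multiplying by a $B(\Kil)$-valued function in $H^\infty(\C_+)$. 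Hence the generator of $U_1$ is $M[V]\,(\widetilde P_0\otimes 1)\,M[V]^*$, so $\Wint E_1[(0,1)]\Wint^*$ is the projection onto $M[V]\bigl(H^2(\C_+)\otimes\Kil\bigr)$. Since $V$ is inner, the operator-valued maximum principle gives $\|V(\lambda)\|\le 1$ on $\C_+$ and the boundary values are unitary, so for $g\in H^2(\C_+)\otimes\Kil$ the product $Vg$ is analytic with $\sup_{y>0}\|(Vg)(\,\cdot+iy)\|\le\|g\|_{H^2}$; thus $M[V]$ maps $H^2(\C_+)\otimes\Kil$ into itself and $E_1[(0,1)] \le E_2[(0,1)]$.

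For the ``only if'' direction I would start from $U_1(1)H\subset U_2(1)H$ and immediately extract $E_1[(0,1)]\le E_2[(0,1)]$ from Theorem \ref{thm:main}. Applying Theorem \ref{thm:SvN} to $(H,U_2)$ and conjugating by $\FT\otimes 1$ produces $\Kil$, $J_\Kil$ and a unitary $\Wint$ realising $\Delta_H^{it}$, $J_H$, $U_2$ in the stated form. By Corollary \ref{cor:multiplicity} the standard pair $(H,U_1)$ has the same multiplicity, and by the remark after Theorem \ref{thm:SvN} I may arrange for the same $\Kil$ and $J_\Kil$; applying Theorem \ref{thm:SvN} to $(H,U_1)$ then gives a unitary $\Wint_1$ of the same form with $\Wint_1 U_1(s)\Wint_1^* = T_s*\,\cdot\,$. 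Put $A := \Wint\Wint_1^*$. Then $A$ is unitary on $L^2(\R,\Kil)$ and commutes with every $\Delta_{\widetilde H_0}^{it}\otimes 1$; since $\{\Delta_{\widetilde H_0}^{it}\mid t\in\R\}'' = L^\infty(\R)$ is maximal abelian (Example \ref{ex:ex2}), $A = M[V]$ for some $V\in L^\infty(\R,\Uni(\Kil))$, and $\Wint U_1(s)\Wint^* = A\,(T_s*\,\cdot\,)\,A^* = M[V]\,U_2(s)\,M[V]^*$ as claimed. Because $A$ intertwines the common image $J_{\widetilde H_0}\otimes J_\Kil$ of $J_H$ with itself, and $(J_{\widetilde H_0}\otimes J_\Kil)\vec\psi(\lambda) = J_\Kil\vec\psi(-\lambda)$, one gets $V(-\lambda) = J_\Kil V(\lambda)J_\Kil$ a.e., i.e.\ $V$ is $J_\Kil$-symmetric. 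Finally, $\Wint E_1[(0,1)]\Wint^* = M[V]\,P_{H^2(\C_+)\otimes\Kil}\,M[V]^*$ while $\Wint E_2[(0,1)]\Wint^* = P_{H^2(\C_+)\otimes\Kil}$, so $E_1[(0,1)]\le E_2[(0,1)]$ says precisely that $M[V]$ maps $H^2(\C_+)\otimes\Kil$ into itself; the operator-valued version of the fact that multipliers of $H^2$ are $H^\infty$ then shows $V$ is the boundary value of a bounded $B(\Kil)$-valued analytic function on $\C_+$, which is inner since its boundary values are unitary. Undoing $\lambda\mapsto-2\pi\lambda$ turns $V$ into a $J_\Kil$-symmetric inner function on $\C_-$.

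The hard part is this last step --- upgrading ``$M[V]$ preserves $H^2(\C_+)\otimes\Kil$'' to genuine analyticity of $V$ in the half-plane. For scalar $\Kil$ this is precisely the multiplier argument that closes the proof of Proposition \ref{prop:IrrSymInnChar} (via $\FT L^2(\R_-) = H^2(\C_+)$); the operator-valued version is standard but needs a little care to assemble the a.e.\ fibrewise-unitary boundary data into a single bounded analytic $B(\Kil)$-valued function, which I would do using the operator-valued Poisson representation and maximum principle already invoked (for the Hadamard three-lines step) in Proposition \ref{prop:ProjImpliesUnifBdd}. Everything else is bookkeeping on top of Theorems \ref{thm:SvN} and \ref{thm:main} and Corollary \ref{cor:multiplicity}.
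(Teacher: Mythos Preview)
Your proposal is correct and follows essentially the same route as the paper: apply Theorem \ref{thm:SvN} and Corollary \ref{cor:multiplicity} to realise both standard pairs on the same $L^2(\R,\Kil)$, identify the intertwiner $\Wint\Wint_1^*$ as a multiplication operator $M[V]$ that is unitary-valued and $J_\Kil$-symmetric on $\R$, and then translate Theorem \ref{thm:main} into the statement that $M[V]$ preserves the vector-valued Hardy space. The only notable difference is at the last step, where the paper cites \cite[Lem.~B.1.8(d)]{Schober2024} for the equivalence of this with $V$ being inner on the half-plane, while you sketch a direct argument via the operator-valued Poisson extension.
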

\begin{proof}
By Theorem \ref{thm:SvN} and Corollary \ref{cor:multiplicity} we can find a Hilbert space $\Kil$, an anti-linear involution $J_\Kil$ on $\Kil$ and  $\Vint_1, \Vint_2: \Hil \rightarrow L^2(\R) \otimes \Kil$ such that
	\[ \Vint_j \Delta_H^{it} \Vint_j^* = \Delta_{H_0}^{it} \otimes 1_\Kil, \quad \Vint_j J_H \Vint_j^* = J_{H_0} \otimes J_\Kil ,\quad \Vint_j U_j(s) \Vint_j^* = U_0(s) \otimes 1_\Kil  \]
for $j =1,2$. In particular, we have that $U_1(s) = \Vint_1^*\Vint_2U_2(s) \Vint_2^*\Vint_1$.

By Theorem \ref{thm:main} we see that $U_1(1)H \subset U_2(1)H$ if and only if
	\[ E_1[(0,1)] = \Vint_1^*\Vint_2 E_2[(0,1)] \Vint_2^*\Vint_1 \leq E_2[(0,1)]. \]
Since we know that the spectral projection of the generator of $U_0$ is equal to multiplication with $\chi_{(-\infty,0)}$, we see that $U_1(1)H \subset U_2(1)H$ if and only if
	\[ \Vint_2\Vint_1^* \left( M[\chi_{(-\infty,0)}] \otimes 1_\Kil \right)\Vint_1\Vint_2^* \leq M[\chi_{(-\infty,0)}] \otimes 1_\Kil. \]
Identifying $L^2(\R) \otimes \Kil \cong L^2(\R,\Kil)$ we can reformulate this as $\Vint_2\Vint_1^* L^2(\R_-, \Kil) \subset L^2(\R_-, \Kil)$.

We also know that $\Delta_{\widetilde{H}_0}^{it} = \FT \Delta_{H_0}^{it} \FT^* = M[e^{-2\pi t \lambda i}]$. Because $ \Vint_2\Vint_1^* $ commutes with $\Delta_{H_0}^{it} \otimes 1_\Kil$ we see that $V := (\FT \otimes 1_\Kil) \Vint_1\Vint_2^* (\FT^* \otimes 1_\Kil)$ commutes with $\Delta_{\widetilde{H}_0}^{it} \otimes 1_\Kil$. The operator $V$ is therefore given by multiplication with a function $V: \R \rightarrow B(\Hil)$. Because $V$ is unitary, it actually takes values in $\Uni(\Hil)$, and because it commutes with $J_{\tilde{H}_0} \otimes J_\Kil$, it satisfies $J_\Kil V(\lambda) J_\Kil = V(-\lambda)$.

So since $U_1(1)H \subset U_2(1)H$ if and only if $V_2V_1^* L^2(\R_-,\Kil) \subset L^2(\R_-,\Kil)$, and because $\FT^* \otimes 1_\Kil L^2(\R_-,\Kil) = H^2(\C_+, \Kil)$, we also have $U_1(1)H \subset U_2(1)H$ if and only if $V^*H^2(\C_+,\Kil) \subset H^2(\C_+,\Kil)$, i.e. by \cite[Lem. B.1.8 (d)]{Schober2024} $V$ is a $J_\Kil$-symmetric inner function on $\C_-$. 

Setting now $\Wint := (\FT\otimes 1_\Kil)\Vint_2$, we see that 
	\begin{align*}
	\Wint U_2(s) \Wint^* & = \FT U_0(s)\FT^* \otimes 1_\Kil \\
	\Wint U_1(s) \Wint^* & = \Wint \Vint_1^* \Vint_2 U_2(s) \Vint_2^*\Vint_1 \Wint^* \\
	&= V \Wint U_2(s) \Wint^* V^*
	\end{align*}
which proves the result.
\end{proof}

\begin{remark}
\label{rem:LW}
In \cite{Longo2010}, endomorphisms of standard pairs $(H,U)$ are considered: these are unitary maps $V$ such that $VH \subset H$ and $[U(s),V] = 0$ for all $s \in \R$. It is then shown that for irreducible standard pairs, all endomorphisms are of the form $\psi(\ln P)$ for $\psi$ a symmetric inner function on the strip $\mathbb{S}_\pi$ (here $P$ is the positive generator of $U$). Clearly, this result has a similar structure to Proposition \ref{prop:IrrSymInnChar}. An extension to the reducible case is proven, as in our Proposition \ref{prop:SymInnChar}, with the main difference that in \cite{Longo2010} the result is formulated in terms of matrix elements, instead of in terms of vector-valued functions as we have done. To properly compare the results, we note that in the CCR-relation induced by the standard pair, the operators $\ln \Delta_H$ and $\ln P$ play a symmetric role (in the Schrödinger representation they are intertwined by the Fourier transform). The most crucial difference between the characterizations in this paper and those in \cite{Longo2010} is then that in the case of endomorphisms of standard pairs, one must consider symmetric inner functions on the strip (and apply them to $\ln P$), and in our case one must consider symmetric inner functions on the half-plane (and apply them to $\ln \Delta_H$). This also means that a correspondence between endomorphisms of standard pairs and inclusions $U_1(1)H \subset U_2(1)H$ of standard pairs $(H,U_1)$ and $(H,U_2)$ that relates these two results is not possible, as $\psi \circ \ln$ is a symmetric inner function on a half plane, but $P$ is a positive operator, which can never by the logarithm of a modular operator of a standard subspace.
\end{remark}

\appendix

\section{Distributional Fourier transform of $e^{ise^\theta}$}
\label{app:distroFourier}
In this appendix we prove that the distributional Fourier transform of $\theta \mapsto e^{ise^\theta}$ for $s \in \R\setminus\{0\}$ is equal to
	\[ \sqrt{\frac{\pi}{2}} \delta + \frac{1}{\sqrt{2\pi}}\mathcal{P}\left( e^{i\lambda\ln(-is)}\Gamma( - i\lambda)\right) \]
where $\lambda$ denotes the variable in Fourier space and  $\mathcal{P}$ denotes the principal value at $\lambda = 0$. This then means that
	\[ \FT M[e^{ise^\theta}] \FT^* = \frac{1}{\sqrt{2\pi}} C\left[\FT[e^{ise^\theta}] \right] = C[T_s] \]
with
	\[ T_s :=  \frac{1}{2}\delta + \frac{1}{2\pi} \PV\left(e^{i\lambda \ln(-is)}\Gamma(-i\lambda)\right) \]
One notes that for $\re\,z_1 < 0$ and $\re\,z_2 > 0$ the function $\theta \mapsto e^{z_1e^{\theta}}e^{z_2\theta}$ is Lebesgue integrable, and we have
\begin{align*}
\int_{-\infty}^\infty e^{z_1e^\theta} e^{z_2 \theta} \, d\theta & = \int_0^\infty e^{z_1\xi} e^{(z_2 - 1)\ln(\xi)} \, d\xi \\
& = \int_0^\infty e^{-\xi'} e^{(z_2 - 1)\ln(-\xi'/z_1)} \frac{1}{-z_1} \, d\xi'
\end{align*}
(here $\ln$ is the branch of the logarithm with branch cut on $\R_-$ and agreeing with the real logarithm on $\R_+$). From this we calculate
\begin{align*}
\int_{-\infty}^\infty e^{z_1e^\theta} e^{z_2 \theta} \, d\theta & = \int_0^\infty e^{-\xi'} \frac{ e^{(z_2 - 1)\ln(\xi')}}{e^{(z_2 - 1)\ln(-z_1)}} \frac{1}{-z_1} \, d\xi' \\
& = e^{-z_2\ln(-z_1)} \Gamma(z_2)
\end{align*}
For $c_1, c_2 > 0$ we define $f_{s,c_1,c_2}(\theta) := e^{sie^\theta} e^{-c_1e^{\theta}} e^{c_2 \theta}$, which is Schwartz, and we have
	\begin{align*}
	\FT[f_{s,c_1,c_2}](\lambda) & = \frac{1}{\sqrt{2\pi}}\int_{\R} e^{(-c_1 + is)e^\theta} e^{(c_2 - i\lambda)\theta} \, d\theta \\
	& = \frac{1}{\sqrt{2\pi}} e^{-(c_2 - i\lambda)\ln(c_1 - is)}\Gamma(c_2 - i\lambda).
	\end{align*}

We first consider the limit $c_1 \rightarrow 0$. We see that
	\begin{align*}
	|\FT[f_{s,c_1,c_2}](\lambda)| & = \frac{1}{\sqrt{2\pi}} e^{- c_2 \ln|c_1 - is| - \lambda \textup{arg}(c_1 - is)} |\Gamma(c_2 - i\lambda)| \\
	& \leq \frac{e^{-c_2 \ln|s|}}{\sqrt{2\pi}} e^{|\lambda|\pi/2} |\Gamma(c_2 - i\lambda)|.
	\end{align*}
Because we have
	\[ |\Gamma(i\lambda)|^2 = \frac{\pi}{\lambda \sinh(\pi \lambda)}, \quad |\Gamma(1 + i\lambda)|^2 = \frac{\pi \lambda}{\sinh(\pi \lambda)}. \]
we can use the Phragmen-Lindelöf principle to conclude that there exists a $C > 0$ such that
	\[ |\Gamma(z)| \leq  C e^{- \pi |\im\,z|/2} \]
for $z \in \{ z \in \C \mid  0< \re\, z < 1\}$. This means that the pointwise limit of $\FT[f_{s,c_1,c_2}]$ as $c_1 \rightarrow 0$ is bounded, so by dominated convergence we have for all Schwartz function $\psi$ that
	\begin{align*}
	\langle f_{s,0,c_2}, \FT[\psi] \rangle & = \lim_{c_1 \rightarrow 0^+} \langle f_{s,c_1,c_2}, \FT[\psi] \rangle = \lim_{c_1 \rightarrow 0^+} \langle \FT[f_{s,c_1,c_2}], \psi \rangle \\
		& = \left\langle \lim_{c_1 \rightarrow 0^+}  \FT[f_{s,c_1,c_2}], \psi \right\rangle.
	\end{align*}
where $\langle T, \cdot \rangle$ denotes evaluation of the distribution $T$. So
	\[\FT f_{s,0,c_2}(\lambda) =  \frac{1}{\sqrt{2\pi}} e^{-(c_2 - i\lambda)\ln(-is)}\Gamma(c_2 - i\lambda) \]
in a distributional sense. 

What remains is the limit $c_2 \rightarrow 0$. We note that $\lambda \mapsto \Gamma(i\lambda)$ has a pole at $\lambda = 0$. This pole has order 1 and residue 1 because $z\Gamma(z) = \Gamma(z + 1)$. This means that there is a ball $B_r(0) := \{ z \in \C \mid |z| < r\}$ for some $r > 0$ and an analytic function $F: B_r(0) \rightarrow \C$ such that $e^{-z\ln(-is)}\Gamma(z) = \frac{1}{z} + F(z)$. So we have
	\[ \FT f_{s,0,c_2}(\lambda)  = \frac{1}{\sqrt{2\pi}} \frac{1}{-i(\lambda + ic_2)} + \frac{1}{\sqrt{2\pi}} F(c_2 - i\lambda) \]
for $-i(\lambda + ic_2) \in B_r(0)$. We note that
	\[ \lim_{\varepsilon \rightarrow 0^+} \frac{1}{\lambda + i\varepsilon} = -i\pi \delta + \mathcal{P}\left(\frac{1}{\lambda} \right) \]
in a distributional sense. Restricting to $B_r(0)$ we see that 
	\begin{align*} \lim_{c_2 \rightarrow 0^+}  \FT f_{s,0,c_2}& = \sqrt{\frac{\pi}{2}} \delta + \frac{1}{\sqrt{2\pi}}\mathcal{P}\left( \frac{1}{\lambda}\right)  + \frac{1}{\sqrt{2\pi}}F(- i\lambda)\\
	& = \sqrt{\frac{\pi}{2}} \delta + \frac{1}{\sqrt{2\pi}}\mathcal{P}\left( e^{i\lambda\ln(-is)}\Gamma( - i\lambda)\right)
	\end{align*}
Because $e^{-z\ln(-is)}\Gamma(z)$ is bounded and analytic on $\{ z \in \C \mid 0 < \re\,z < 1\} \setminus B_r(0)$, this limit holds also outside of $B_r(0)$, so we can conclude that
	\[ \mathcal{F}[e^{ise^\theta}] = \lim_{c_2 \rightarrow 0^+}  \FT f_{s,0,c_2} = \sqrt{\frac{\pi}{2}} \delta + \frac{1}{\sqrt{2\pi}}\mathcal{P}\left( e^{i\lambda\ln(-is)}\Gamma( - i\lambda)\right)  \]

\section{Vector valued uniform convergence}
\label{app:PLUniform}
Here we provide the proof of Lemma \ref{lem:PLUniform}. The proof is adapted from the scalar-valued case in \cite[Cor. 1.4.5]{Boas1954}; it is a straightforward but technical application of the maximum principle and the Phragmén-Lindelöf principle.
\begin{lemma}
Let $A: \overline{\mathbb{S}_{\frac{1}{2}}} \rightarrow B(\Hil)$ be a norm-bounded so-continuous map that is analytic on $\mathbb{S}_\frac{1}{2}$ and $A_0 \in B(\Hil)$ such that
	\[ \lim_{t \rightarrow -\infty} A(t) = \lim_{t \rightarrow -\infty}  A(t+\tfrac{i}{2}) = A_0 \]
in the strong operator topology. Then $\lim_{t \rightarrow -\infty} A(t + iy) = A_0$ in the strong operator topology, uniformly in each seminorm for all $0 \leq y \leq \frac{1}{2}$.
\end{lemma}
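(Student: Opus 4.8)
The plan is to peel off the operator structure at once and reduce to a statement about a single bounded, subharmonic, scalar function on a half-strip, to which the classical maximum principle (in its Phragmén–Lindelöf form) applies verbatim. Fix $\psi \in \Hil$ and put $g(z) := (A(z) - A_0)\psi$. By so-continuity and norm-boundedness of $A$, the map $g$ is continuous on $\overline{\mathbb{S}_{\frac{1}{2}}}$, analytic on $\mathbb{S}_{\frac{1}{2}}$, and $M := \sup_{z \in \overline{\mathbb{S}_{\frac{1}{2}}}} \|g(z)\| < \infty$; the hypotheses say $\|g(t)\| \to 0$ and $\|g(t + \tfrac{i}{2})\| \to 0$ as $t \to -\infty$. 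Since the assertion of the lemma is precisely that $\sup_{0 \le y \le \frac{1}{2}} \|g(t+iy)\| \to 0$ as $t \to -\infty$ for every such $\psi$, it suffices to prove this. The function $u(z) := \|g(z)\|$ is nonnegative, continuous on $\overline{\mathbb{S}_{\frac{1}{2}}}$, bounded by $M$, and subharmonic on $\mathbb{S}_{\frac{1}{2}}$ — it is the supremum over $\|\phi\| \le 1$ of the subharmonic functions $z \mapsto |\langle \phi, g(z) \rangle|$ — so from here on only $u$ matters and the argument is entirely scalar.

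Fix $\varepsilon > 0$ and choose $X \in \R$ with $u(t) < \varepsilon$ and $u(t + \tfrac{i}{2}) < \varepsilon$ for all $t \le X$, and work in the half-strip $Q := \{ z : \re\,z \le X,\ 0 \le \im\,z \le \tfrac{1}{2} \}$, whose finite boundary is the vertical segment $\Gamma := \{ X + iy : 0 \le y \le \tfrac{1}{2}\}$ together with the two horizontal half-lines $\{ t : t \le X\}$ and $\{ t + \tfrac{i}{2} : t \le X\}$. Writing $\omega_z$ for harmonic measure in the interior of $Q$, the maximum principle for bounded subharmonic functions on the half-strip (a Phragmén–Lindelöf statement: the ``end'' of $Q$ at $\re\,z = -\infty$ carries no harmonic measure, because $Q$ is thin) gives $u(z) \le \int_{\partial Q} u \, d\omega_z$. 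Since $u < \varepsilon$ on the horizontal edges and $u \le M$ on $\Gamma$, this yields
\[ u(z) \le \varepsilon + M \, \omega_z(\Gamma), \qquad z \in Q. \]

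It remains to estimate $\omega_z(\Gamma)$ near the end of the strip. The map $z \mapsto e^{2\pi(z - X)}$ is a biholomorphism of the interior of $Q$ onto the open upper half of the unit disk, carrying $\Gamma$ to the upper unit semicircle, the horizontal edges to $(-1,1) \setminus \{0\}$, and $\re\,z \to -\infty$ to $0$; transporting harmonic measure through it gives $\omega_z(\Gamma) = \omega\big(e^{2\pi(z-X)}\big)$, where $\omega$ is the harmonic measure of the semicircle in the upper half-disk. As $\omega$ is bounded and $\omega(w) \to 0$ when $w \to 0$ — indeed $\omega(w) \le C|w|$ for $|w| \le \tfrac{1}{2}$, by a direct computation or a comparison argument in the half-disk — we get $\sup_{0 \le y \le \frac{1}{2}} \omega_{t+iy}(\Gamma) \le C e^{2\pi(t - X)} \to 0$ as $t \to -\infty$. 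Combining this with the previous display and choosing $t_0 \le X$ so that $M C e^{2\pi(t - X)} < \varepsilon$ for all $t \le t_0$, we obtain $u(t+iy) < 2\varepsilon$ for every $t \le t_0$ and every $y \in [0, \tfrac{1}{2}]$. Since $\varepsilon$ was arbitrary and $t_0$ is independent of $y$, it follows that $\sup_{0 \le y \le \frac{1}{2}}\|g(t+iy)\| \to 0$, and as $\psi$ was arbitrary this is the claim. (Equivalently, one could transport the whole problem through $z \mapsto e^{2\pi z}$ to the upper half-plane and invoke Lindelöf's theorem there.) The only genuine obstacle is this last harmonic-measure estimate — producing the Phragmén–Lindelöf comparison function that quantifies, uniformly across the width of the strip, how little the large values on the vertical edge $\Gamma$ propagate toward the end; every other step is the ordinary maximum principle, which passes through $\|g(\cdot)\|$ unchanged.
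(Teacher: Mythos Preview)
Your proof is correct and takes a genuinely different route from the paper's. The paper works directly with the vector-valued analytic function $z \mapsto A(z)\psi$ and builds the Phragmén--Lindelöf estimate by hand: after choosing $t_0$ so that the boundary values are below $\varepsilon$, it multiplies by an explicit rational factor $\frac{z}{z-\mu}$ (with $\mu>0$ large) to force the function below $\varepsilon$ also on the vertical segment $\re z = t_0$, then by a further factor $\frac{1}{z-\lambda}$ to gain decay at the end of the half-strip so that the ordinary maximum principle on a \emph{bounded} region applies, and finally sends $\lambda \to \infty$ to remove that factor, arriving at $\|A(z)\psi\| \le (1+\mu/|z|)\varepsilon$. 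You instead pass at once to the scalar subharmonic function $u(z)=\|(A(z)-A_0)\psi\|$ and invoke harmonic measure on the half-strip, estimating $\omega_z(\Gamma)$ via the conformal map $z\mapsto e^{2\pi(z-X)}$ to the upper half-disk. Your argument is more conceptual and even yields the explicit exponential rate $\omega_{t+iy}(\Gamma)\lesssim e^{2\pi(t-X)}$ for the influence of the vertical edge; the paper's is more elementary in that it uses nothing beyond the maximum principle on bounded domains, at the cost of the auxiliary-function bookkeeping. Both reach the same conclusion $\|A(t+iy)\psi\|<2\varepsilon$ uniformly in $y$ for $t$ sufficiently negative.
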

\begin{proof}
Without loss of generality, we may assume that $A_0 = 0$. We note that the maximum principle holds also for vector valued analytic functions \cite[Thm. 3.12.1]{hille1948functional}. We write $\|A\|_\infty := \sup_{z \in \mathbb{S}_\frac{1}{2}} \|A(z)\|$. 

Let $\psi \in \Hil$ be a vector, and let $\varepsilon > 0$. There exists a $t_0$ such that for all $t < t_0$ we have $\|A(t)\psi\| \leq \varepsilon$ and $\|A(t + \frac{i}{2})\psi \| \leq \varepsilon$ (without loss of generality, we assume that $t_0 \leq 0$). 

We choose $\mu > 0$ such that for all $z \in \mathbb{S}_{\frac{1}{2}}$ with $\re\,z = t_0$ we have
	\[ \left| \frac{z}{z - \mu} \right| \|A\|_\infty \|\psi\| < \varepsilon. \]
In this way we guarantee that the analytic function $z \mapsto \frac{z}{z-\mu}A(z)\psi$ is bounded by $\varepsilon$ on the boundary of the halfstrip $\{ z \in \mathbb{S}_\frac{1}{2} \mid \re\, z < t_0\}$.

Now we consider for arbitrary $\lambda > 0$ the function $z \mapsto \frac{1}{z - \lambda}\frac{z}{z-\mu}A(z)\psi$. On the boundary of the set $\{ z \in \mathbb{S}_\frac{1}{2} \mid \re\,z < t_0\}$ this function is bounded by $\frac{\varepsilon}{|t_0 - \lambda|}$. On the other hand, for all $z \in \mathbb{S}_\frac{1}{2}$ such that
	\[ \left|\frac{1}{z - \lambda} \right| \|A\|_\infty \|\psi\| \leq \frac{\varepsilon}{|t_0 - \lambda|} \]
this bound is also satisfied. The set
	\[ \left \{ z \in \mathbb{S}_\frac{1}{2} \, \middle\vert \, \re\,z < t_0  \text{ and } \frac{\|A\|_\infty \|\psi\| |t_0 - \lambda|}{\varepsilon} \geq |z - \lambda|\right \} \]
is bounded, and on the boundary the function $z \mapsto \frac{1}{z - \lambda}\frac{z}{z-\mu}A(z)\psi$ is bounded by $\frac{\varepsilon}{|t_0 - \lambda|}$; by the maximum principle this means that
	\[ \left\|\frac{1}{z - \lambda}\frac{z}{z-\mu}A(z)\psi \right\| \leq \frac{\varepsilon}{|t_0 - \lambda|}\]
for all $z \in \mathbb{S}_\frac{1}{2}$ such that $\re\,z < t_0$.

We therefore have for all $\lambda > 0$ and $z \in \mathbb{S}_\frac{1}{2}$ such that $\re\,z < t_0$, that
	\[ \left\| \frac{z}{z-\mu} A(z)\psi \right\| \leq \frac{|z - \lambda|}{|t_0 - \lambda|} \varepsilon. \]
Taking $\lambda \rightarrow \infty$ we see that for all $z \in \mathbb{S}_\frac{1}{2}$ such that $\re\,z < t_0$ we have
	\[ \|A(z)\psi\| \leq \left| \frac{z - \mu}{z}\right| \varepsilon = \left( 1 + \frac{\mu}{|z|} \right) \varepsilon. \]
So for $z \in \mathbb{S}_\frac{1}{2}$ such that $\re\,z < t_0$ and $|z| > \mu$ we have $\|A(z)\psi\| \leq 2\varepsilon$. 

In this way we see that $A(t + iy)\psi \rightarrow 0$ as $t \rightarrow -\infty$, uniformly in $y$.
\end{proof}

\printbibliography

\end{document}